\theoremstyle{definition}
\newtheorem{definition}{Definition}
\newtheorem{property}{Property}
\newtheorem{theorem}{Theorem}
\newtheorem{lemma}{Lemma}
\newtheorem{corollary}{Corollary}
\newcommand{\lemmaref}[1]{Lemma~\ref{#1}}
\newcommand{\corollaryref}[1]{Corollary~\ref{#1}}
\newcommand{\appendixref}[1]{Appendix~\ref{#1}}
\newcommand{\sectionref}[1]{Section~\ref{#1}}
\newcommand{\sectionsref}[2]{Sections~\ref{#1} and~\ref{#2}}
\newcommand{\eg}{{\rm e.g.}}
\newcommand{\ie}{{\rm i.e.}}
\newcommand{\elam}[2]{\ensuremath{\lambda{#1}.{#2}}}
\newcommand{\elamp}[2]{\ensuremath{(\elam{#1}{#2})}}
\newcommand{\fapp}[2]{\ensuremath{{#1}({#2})}}
\newcommand{\eapp}[2]{\ensuremath{{#1}\,{#2}}}
\newcommand{\eappp}[2]{\ensuremath{(\eapp{#1}{#2})}}
\newcommand{\eappq}[2]{\ensuremath{\eapp{#1}{(#2)}}}
\newcommand{\expo}[2]{\ensuremath{{#1}^{#2}}}
\newcommand{\equofloor}[2]{\ensuremath{\lfloor \frac{#1}{#2} \rfloor}}
\newcommand{\estreamthree}[4]{\ensuremath{[{#1},\,{#2},\,{#3},\cdots[}}
\newcommand{\estreamthreea}[4]{\ensuremath{[{#1},}$ ${{#2},\,{#3},\cdots[}}
\newcommand{\estreamthreeb}[4]{\ensuremath{[{#1},\,{#2},}$ ${{#3},\cdots[}}
\newcommand{\estreamthreec}[4]{\ensuremath{[{#1},\,{#2},\,{#3},}$ ${\cdots[}}
\newcommand{\estreamfour}[4]{\ensuremath{[{#1},\,{#2},\,{#3},\,{#4},\cdots[}}
\newcommand{\estreamfourc}[4]{\ensuremath{[{#1},\,{#2},\,{#3},}$ ${{#4},\cdots[}}
\newcommand{\estreamfourd}[4]{\ensuremath{[{#1},\,{#2},\,{#3},\,{#4},}$ ${\cdots[}}
\newcommand{\estreamfive}[5]{\ensuremath{[{#1},\,{#2},\,{#3},\,{#4},\,{#5},\cdots[}}
\newcommand{\estreamsix}[6]{\ensuremath{[{#1},\,{#2},\,{#3},\,{#4},\,{#5},\,{#6},\cdots[}}
\newcommand{\estreamseven}[7]{\ensuremath{[{#1},\,{#2},\,{#3},\,{#4},\,{#5},\,{#6},\,{#7},\cdots[}}
\newcommand{\estreameight}[8]{\ensuremath{[{#1},\,{#2},\,{#3},\,{#4},\,{#5},\,{#6},\,{#7},\,{#8},\cdots[}}
\newcommand{\estreamnine}[9]{\ensuremath{[{#1},\,{#2},\,{#3},\,{#4},\,{#5},\,{#6},\,{#7},\,{#8},\,{#9},\cdots[}}
\newcommand{\rawantidifference}{\ensuremath{\nabla^{-1}}}
\newcommand{\eantidifferenceone}[1]{\ensuremath{\rawantidifference\,{#1}}}
\newcommand{\eantidifferenceonep}[1]{\ensuremath{(\eantidifferenceone{#1})}}
\newcommand{\rawdifference}{\ensuremath{\nabla}}
\newcommand{\edifferenceone}[1]{\ensuremath{\rawdifference\,{#1}}}
\newcommand{\edifference}[2]{\ensuremath{\rawdifference\,{#1}\,{#2}}}
\newcommand{\econditional}[3]{\ensuremath{\mathrm{if}\;{#1}\;\mathrm{then}\;{#2}\;\mathrm{else}\;{#3}}}
\newcommand{\emod}[2]{\ensuremath{{#1}\;\mathrm{mod}\;{#2}}}
\newcommand{\epair}[2]{\ensuremath{({#1},\,{#2})}}
\newcommand{\rawgee}{\ensuremath{g}}
\newcommand{\gee}[2]{\eapp{\eapp{\rawgee}{#1}}{#2}}
\newcommand{\rawgeeop}{\ensuremath{\overline{\rawgee}}}
\newcommand{\geeop}[2]{\eapp{\eapp{\rawgeeop}{#1}}{#2}}
\newcommand{\rawgeen}[1]{\ensuremath{g}}
\newcommand{\geen}[3]{\eapp{\eapp{\rawgeen{#1}}{#2}}{#3}}
\newcommand{\rawgeenop}[1]{\ensuremath{\overline{\rawgeen{#1}}}}
\newcommand{\mulset}[2]{\ensuremath{\left(\!\binom{#1}{#2}\!\right)}}
\newcommand{\mmulset}[2]{\ensuremath{\left(\!\!\binom{#1}{#2}\!\!\right)}}
\definecolor{ltblue}{rgb}{0,0.0,0.0}
\definecolor{dkblue}{rgb}{0,0.0,0.0}
\definecolor{dkgreen}{rgb}{0,0.0,0}
\definecolor{dkviolet}{rgb}{0.0,0,0.0}
\definecolor{dkred}{rgb}{0.0,0,0}
\newcommand{\inputscheme}[1]{\lstinputlisting[linerange={#1}-END]{scheme/streamless.scm}}
\newcommand{\inlinescheme}[1]{\lstinline[basicstyle=\small\ttfamily]{#1}}
\def\doframeit#1{\vbox{%
  \hrule height\fboxrule
    \hbox{%
      \vrule width\fboxrule \kern\fboxsep
      \vbox{\kern\fboxvsep #1\kern\fboxvsep }%
      \kern\fboxsep \vrule width\fboxrule }%
    \hrule height\fboxrule }}
\def\frameit{\smallskip \advance \linewidth by -7.5pt \setbox0=\vbox \bgroup
\strut \ignorespaces }
\def\endframeit{\ifhmode \par \nointerlineskip \fi \egroup
\doframeit{\box0}}
\newdimen \fboxvsep
\title{
  Summa Summarum: \\
  Moessner's Theorem without Dynamic Programming
}
\author{Olivier Danvy
  \institute{School of Computing \& Yale-NUS College, National University of Singapore, Singapore}
  \email{danvy@acm.org}
}
\begin{document}
\maketitle

\begin{abstract}
\noindent
Seventy years on, Moessner's theorem and Moessner's process -- \ie, the
additive
computation
of
integral
powers -- continue to fascinate.
They have given rise
to a variety of elegant proofs,
to an implementation in hardware,
to
generalizations,
and now even to a popular video, ``The Moessner Miracle.''
The existence of this video, and even more its title, indicate that while
the ``what'' of Moessner's process is understood, its ``how'' and even more
its ``why'' are still elusive.
And indeed all the proofs of Moessner's theorem
involve more complicated concepts than both the theorem and the process.

This
article
identifies that Moessner's process implements an additive
function with dynamic programming.
A version of this implementation without dynamic programming
(1) gives rise to a simpler statement of Moessner's theorem and
(2) can be abstracted and then instantiated
into
related additive
computations.
The simpler statement also suggests a simpler and more efficient
implementation
to compute
integral powers as well as simple additive functions to compute, \eg, Factorial
numbers.
It also reveals the source of -- to quote John Conway and Richard Guy -- Moessner's magic.
\end{abstract}

\noindent
Keywords:
Moessner's theorem;
Moessner's process;
streams;
nested summations;
primitive iteration;
primitive recursion;
integral powers;
single, double, triple, etc.~Factorial numbers;
superfactorial numbers;
binomial coefficients;
Catalan numbers;
Fibonacci numbers;
Euler (zigzag, up/down) numbers;
polygonal numbers (a new characterization of)

\

\hrule

\ 

\begin{center}
\begin{Large}
Executive Summary
\end{Large}
\end{center}

\noindent
Beneath its dynamic-programming infrastructure,
the foundation of Moessner's process is nested summations.
Here is Moessner's theorem without dynamic programming,
where each nested sum is a streamless instance of a prefix sum and each inner upper bound
is a streamless instance of the filtering-out phase in Moessner's process:
\begin{eqnarray*}
  \forall x : \mathbb{N},\forall n : \mathbb{N},
  \sum_{i_1=0}^x\:
  \sum_{i_2=0}^{\equofloor{2 \cdot i_1}{1}}\:
  \sum_{i_3=0}^{\equofloor{3 \cdot i_2}{2}}\:
  \cdots
  \sum_{i_n=0}^{\equofloor{n \cdot i_{n-1}}{n-1}}\:
  1
  \;\;=\;\;
  \sum_{i_1=0}^x\,
  \sum_{i_2=0}^{i_1 + \equofloor{i_1}{1}}\:
  \sum_{i_3=0}^{i_2 + \equofloor{i_2}{2}}\:
  \cdots
  \sum_{i_n=0}^{i_{n-1} + \equofloor{i_{n-1}}{n-1}}\:
  1
  & = &
  (x + 1)^n
\end{eqnarray*}

\noindent
Moessner's magic~\cite{Conway-Guy:TBON96-Moessner-s-Magic} lies in the dependencies of the
indices in these
summations.

Here are some new corollaries: $\forall x : \mathbb{N},\forall n : \mathbb{N},$
\\[-3mm]
$$
 \begin{array}{@{}r@{\ \ }l@{\ }l@{\ }l@{\ }l@{\ }l@{\ }l@{\ \ }l@{\ \ }l@{}}
   \textrm{Binomial coefficients:}
   &
   \sum_{i_1=0}^x\,
   &
   \sum_{i_2=0}^{i_1}\,
   &
   \sum_{i_3=0}^{i_2}\,
   &
   \cdots
   &
   \sum_{i_n=0}^{i_{n-1}}\,
   &
   1
   &
   =
   &
   \binom{x + n}{n}
   \\[3mm]
   \textrm{Catalan numbers:}
   &
   \sum_{i_1=0}^0\,
   &
   \sum_{i_2=0}^{i_1 + 1}\,
   &
   \sum_{i_3=0}^{i_2 + 1}\,
   &
   \cdots
   &
   \sum_{i_n=0}^{i_{n-1} + 1}\,
   &
   1
   &
   =
   &
   \frac{\binom{2 \cdot n}{n}}{n + 1}
   \:\, = \;\,
   C_n
   \\[4mm]
   \textrm{Fibonacci numbers:}
   &
   \sum_{i_1=0}^0\,
   &
   \sum_{i_2=0}^{1 - i_1}\,
   &
   \sum_{i_3=0}^{1 - i_2}\,
   &
   \cdots
   &
   \sum_{i_n=0}^{1 - i_{n-1}}\,
   &
   1
   &
   =
   &
   F_{n+1}
   \\[4mm]
   \textrm{Euler (zigzag) numbers:}
   &
   \sum_{i_1=0}^0\,
   &
   \sum_{i_2=0}^{n - 1 - i_1}\,
   &
   \cdots
   &
   \sum_{i_{n-1}=0}^{1 - i_{n-2}}\,
   &
   \sum_{i_n=0}^{0 - i_{n-1}}\,
   &
   1
   &
   =
   &
   E_n
 \end{array}$$

\noindent
Generalizing $0$ to $x$ in the second corollary
gives rise to
the $x + 1$st convolution of Catalan numbers.

\clearpage

That said, for several of the applications of Moessner's theorem in the literature
-- \eg, factorial numbers but also integral powers --
Moessner's magic (\ie, the upper bound of an inner sum depending on the index of an outer sum)
is not needed.
In each of the following summations, the index does not occur in the body of this summation:
\\[-3mm]
$$
  \begin{array}{@{}r@{\ \ }r@{\ }l@{\ }l@{\ }l@{\ }l@{\ }l@{\ }l@{\ \ }l@{\ \ }l@{}}
   \textrm{Factorial numbers:}
   &
   \forall n : \mathbb{N},
   &
   \sum_{i_1=0}^1\,
   &
   \sum_{i_2=0}^2\,
   &
   \sum_{i_3=0}^3\,
   &
   \cdots
   &
   \sum_{i_n=0}^n\,
   &
   1
   &
   =
   &
   (n + 1)!
   \\[3mm]
   \textrm{Integral powers:}
   &
   \forall x, \, n : \mathbb{N},
   &
   \sum_{i_1=0}^x\,
   &
   \sum_{i_2=0}^x\,
   &
   \sum_{i_3=0}^x\,
   &
   \cdots
   &
   \sum_{i_n=0}^x\,
   &
   1
   &
   =
   &
   (x + 1)^n
 \end{array}
$$

This observation sheds a new light on Paasche's slide rule~\cite{Paasche:MNU53-54}
since the product of a sequence of factors can be expressed as an iterated summation:
$$
 \begin{array}{@{}r@{\ }c@{\ \ }c@{\ \ }l@{\ }l@{\ }l@{\ }l@{\ }l@{\ }l@{}}
   \forall f : \mathbb{N} \rightarrow \mathbb{N},
   \forall n : \mathbb{N},
   &
   \prod_{i=0}^n\,\fapp{f}{i}
   &
   =
   &
   \sum_{i=1}^{\fapp{f}{0}}\,
   &
   \sum_{i=1}^{\fapp{f}{1}}\,
   &
   \sum_{i=1}^{\fapp{f}{2}}\,
   &
   \cdots
   &
   \sum_{i=1}^{\fapp{f}{n}}\,
   &
   1
 \end{array}$$

In practice, the nested summations induce many duplicated computations.
Dynamic programming makes it possible to avoid these duplications,
giving rise, \eg, to Pascal's triangle for binomial coefficients
and more generally to additive processes in the style of Moessner
that are ready to be implemented in hardware~\cite{Samadi-al:ISCAS05}.

\ 

\hrule

\section{Introduction}
\label{sec:introduction}

Whereas Eratosthenes's sieve constructs a stream of successive prime numbers $\estreamthree{2}{3}{5}{7}$
given the stream of positive natural numbers $\estreamthree{1}{2}{3}{4}$,
Moessner's
process
constructs a stream of integral powers $\estreamthree{1^n}{2^n}{3^n}{4^n}$
given
the stream $\estreamthree{1}{0}{0}{0}$
and
a natural number $n$.
Counting down from $n$ to $0$,
\begin{description}[leftmargin=3.5mm,itemsep=1pt]

\item{($n$)}
it
filters out (strikes out / skips / drops / elides / omits)
each $n + 2$nd element from this given stream
and
then
constructs the corresponding stream of prefix sums
(given a stream
$\estreamthree{x_0}{x_1}{x_2}{x_3}$,
the corresponding stream of prefix sums is 
$\estreamthreeb{\sum_{i=0}^0\,x_i}{\sum_{i=0}^1\,x_i}{\sum_{i=0}^2\,x_i}{\sum_{i=0}^3\,x_i}$,
\ie,
$\estreamthree{x_0}{x_0 + x_1}{x_0 + x_1 + x_2}{x_0 + x_1 + x_2 + x_3}$);

\item{}
\ldots

\item{($i$)}
it
filters out each $i + 2$nd element from the resulting stream
and
then
constructs the corresponding stream of prefix sums;

\item{}
\ldots

\item{($0$)}
it
filters out each second element from the resulting stream
and
then
constructs the corresponding stream of prefix sums.

\end{description}

Moessner's theorem~\cite{Moessner:51} states that 
the resulting stream is $\estreamthree{1^n}{2^n}{3^n}{4^n}$,
which is immediately visible
for small values of $n$:
\begin{itemize}[leftmargin=3.5mm,itemsep=1pt]

\item
  When $n = 0$, the resulting stream of prefix sums is $\estreamfourd{1}{1 + 0}{(1 + 0) + 0}{((1 + 0) + 0) + 0}$,
  \ie,
  $\estreamfour{1}{1}{1}{1}$,
  or again $\estreamfour{1^0}{2^0}{3^0}{4^0}$.

\item
  When $n = 1$, the first stream of prefix sums is the stream of $1$'s
  and the resulting stream of prefix sums is
  $\estreamfour{1}{1 + 1}{(1 + 1) + 1}{((1 + 1) + 1) + 1}$,
  \ie,
  $\estreamfourc{1}{2}{3}{4}$,
  or again $\estreamfour{1^1}{2^1}{3^1}{4^1}$.

\item
  When $n = 2$,
  the first stream of prefix sums is the stream of $1$'s and
  the second is the stream of positive natural numbers.
  Filtering out its second elements constructs the stream of odd natural numbers,
  and the resulting stream of prefix sums is
  $\estreamfour{1}{1 + 3}{(1 + 3) + 5}{((1 + 3) + 5) + 7}$,
  \ie,
  $\estreamfive{1}{4}{9}{16}{25}$,
  or again $\estreamfive{1^2}{2^2}{3^2}{4^2}{5^2}$.
  Indeed, as is known since the Pythagoreans, the $x + 1$st positive square integer
  is the sum of the first $x + 1$st odd natural numbers:
\label{page:sum-of-first-odd-natural-numbers}
  $$
    \forall x : \mathbb{N},
    \sum_{i=0}^x\,(2 \cdot i + 1)
    =
    (x + 1)^2
  $$
  since $2 \cdot i + 1 = (i + 1)^2 - i^2$ because of
  the binomial expansion into a sum of
  monomials with binomial coefficients
  $(i + 1)^2 = \binom{2}{2} \cdot i^2 \cdot 1^0 + \binom{2}{1} \cdot i^1 \cdot 1^1 + \binom{2}{0} \cdot i^0 \cdot 1^2 = i^2 + 2 \cdot i + 1$
  and using
  the telescoping sum
  $\sum_{i=0}^x\,(i + 1)^2 - i^2
   = (1^2 - 0^2) + (2^2 - 1^2) + (3^2 - 2^2) + \cdots + (x^2 - (x - 1)^2) + ((x + 1)^2 - x^2)
   = -0^2 + (x + 1)^2$.

\label{page:telescoping-sum}

\end{itemize}

\noindent
Beyond $2$, things become more involved, but the result is still
remarkable because not a single multiplication is performed to obtain powers.
Moessner's process has therefore been implemented in hardware for
signal-processing purposes~\cite{Samadi-al:ISCAS05}.

Over the years, a number of proofs for Moessner's theorem have been put forward:
first, induction proofs by mathematicians in the 20th century
(Long, Paasche, Perron, Sali\'e, Slater, van Yzeren,
and then Ross,
as reviewed in \sectionref{subsec:in-the-20th-century})
and then, both induction proofs and coinduction proofs by theoretical computer scientists in the 21st century
(Bickford, Hinze, Kozen, Krebbers, Niqui, Parlant, Rutten, and Silva,
as reviewed in \sectionref{subsec:in-the-21st-century}).
Nowadays, Moessner's process is a showcase for
demonstrating stream calculi, mechanizing coinduction proofs, discovering patterns, and formulating generalizations.
(Watch the video~\cite{Polster:21}.)

The starting point here is the observation
that Moessner's iterative process of computing successive streams
fits the pattern of dynamic programming.
The goal of this
article
is to unveil the essence of Moessner's theorem by removing this dynamic-programming infrastructure.

\subsection{Roadmap}

\sectionref{sec:background-and-related-work}
illustrates
Moessner's process and reviews
previous work about Moessner's theorem.
\sectionref{sec:a-polynomial-rendition-of-moessner-s-process}
presents a version of Moessner's process 
that does not use dynamic programming.
\sectionref{sec:a-polynomial-rendition-of-the-left-inverse-of-moessner-s-process}
presents a version of the left inverse of Moessner's process 
that also does not use dynamic programming.
\sectionref{sec:the-essence-of-moessner-s-theorem}
builds on \sectionsref{sec:a-polynomial-rendition-of-moessner-s-process}{sec:a-polynomial-rendition-of-the-left-inverse-of-moessner-s-process}
and presents the essence of Moessner's theorem.
Correspondingly, \sectionref{sec:the-essence-of-moessner-s-process}
implements the essence of Moessner's process.
\sectionref{sec:a-parameterized-implementation-of-moessner-s-process}
parameterizes this implementation and illustrates it with several instantiations
that reveal Moessner's magic as
primitive recursion.
\sectionref{sec:back-to-dynamic-programming} goes back to dynamic programming,
using lists instead of streams.
\sectionref{sec:related-work-about-nested-sums} reviews previous work about nested sums.
\appendixref{app:a-curiosa-about-polygonal-numbers} presents a new characterization of polygonal numbers
in memory of Moessner.

\subsection{Prerequisites and Notations}
\label{subsec:prerequisites-and-notations}

An elementary grasp of functional programming
(\eg, the $\lambda$ notation for functions~\cite{Church:41}
and the \inlinescheme{lambda} notation for procedures in Scheme~\cite{Dybvig:96-not-online})
and of discrete mathematics~\cite{Graham-Knuth-Patashnik:94,Rosen:19}
is expected from the reader.
Also, we use Peano numbers, \ie, natural numbers that start with 0.

\subsubsection{Functional programming}

The programming language of discourse here is Scheme,
a block-struc\-tured and lexically scoped dialect of Lisp with
first-class procedures and proper tail recursion.
(So programs are fully parenthesized expressions that use the Polish prefix notation
and iteration is achieved with tail-recursive procedures.)

A list is an inductive data structure that is constructed with nil (the
empty list, written \inlinescheme{'()} in Scheme) in the base case, and
with cons (implemented by the Scheme procedure \inlinescheme{cons}) in the induction step.
For example, the Scheme procedure \inlinescheme{iota}, given a non-negative integer $n$,
constructs the list of $n$ payloads 0, 1, 2, \ldots, $(n-1)$.
So evaluating \inlinescheme{(iota 3)} gives rise to evaluating
\inlinescheme{(cons 0 (cons 1 (cons 2 '())))} and yields the list \inlinescheme{(0 1 2)}.

Likewise, the Scheme procedure \inlinescheme{map}, given a procedure and a list,
applies this procedure to each payload in the list and constructs the list of the results,
in the same order.
So for example, evaluating
\begin{center}
\inlinescheme{(map (lambda (n) (+ n 1)) (iota 3))}
\end{center}
yields \inlinescheme{(1 2 3)} since \inlinescheme{(lambda (n) (+ n 1))}
implements the successor function, which is applied to each of the
payloads in \inlinescheme{(0 1 2)} here.

A stream is a coinductive data structure (intuitively: an unbounded list~\cite{Friedman-Wise:ICALP76}) that is constructed on demand.
Given a function $f : \mathbb{N} \rightarrow \mathbb{N}$, the stream
$\estreamthree{\eapp{f}{0}}{\eapp{f}{1}}{\eapp{f}{2}}{\eapp{f}{3}}$
is a lazy representation of $f$'s function graph, where each pair
$(i,\:\eapp{f}{i})$ is represented with $i$ and with the payload of the stream at index $i$.
The stream $\estreamthree{\eapp{f}{0}}{\eapp{f}{1}}{\eapp{f}{2}}{\eapp{f}{3}}$
is said to \emph{enumerate} the function $f$.

For example, in Scheme, one can
implement
a procedure \inlinescheme{lazy-iota} that,
when applied to an integer $n$, constructs the stream of payloads $\estreamthreeb{n}{n+1}{n+2}{n+3}$.
One can also implement a procedure \inlinescheme{lazy-map} that, given a procedure and a stream,
applies this procedure to each payload in the stream and constructs the stream of the results,
in the same order.
Given a procedure \inlinescheme{f} that implements the function $f : \mathbb{N} \rightarrow \mathbb{N}$, evaluating
\vspace{-1mm}
\begin{center}
\inlinescheme{(lazy-map f (lazy-iota 0))}
\end{center}
\vspace{-1mm}
yields a stream that enumerates $f$.

\subsubsection{Differences and antidifferences}
\label{subsubsec:differences-and-antidifferences}

The prefix sum (a.k.a.~antidifference) of a function $f : \mathbb{N} \rightarrow \mathbb{N}$ is
$\eantidifferenceone{f}$:
$$
\begin{array}{l@{\ }c@{\ }l}
\eantidifferenceone{f}
& = & 
\elam{n}{\sum_{i=0}^n\,\eapp{f}{i}}
\end{array}
$$
It is used in Moessner's process (but we shall not use the
$\rawantidifference$ notation).

The backward difference of a function $f : \mathbb{N} \rightarrow \mathbb{N}$ is $\edifferenceone{f}$:
$$
\edifference{f}{0} = \eapp{f}{0}
\;\;\; \wedge \;\;\;
\forall n : \mathbb{N},
\edifference{f}{(n+1)} = \eapp{f}{(n+1)} - \eapp{f}{n}
$$
It is used in the inverse of Moessner's process (\sectionref{subsec:inverting-moessner-s-process-streamlessly}).

$\rawantidifference$ is a right inverse (as well as a left inverse) of $\rawdifference$:
$$
\left\{
\begin{array}{l@{\ }c@{\ }l}
\edifference{\eantidifferenceonep{f}}{0}
& = &
\edifference{\elamp{n}{\sum_{i=0}^n\,\eapp{f}{i}}}{0}
\\[1mm]
& = &
\sum_{i=0}^0\,\eapp{f}{i}
\\[1mm]
& = & 
\eapp{f}{0}
\\[3mm]
\edifference{\eantidifferenceonep{f}}{(n+1)}
& = &
\edifference{\elamp{n}{\sum_{i=0}^n\,\eapp{f}{i}}}{(n+1)}
\\[1mm]
& = &
\eapp{\elamp{n}{\sum_{i=0}^n\,\eapp{f}{i}}}{(n+1)} - \eapp{\elamp{n}{\sum_{i=0}^n\,\eapp{f}{i}}}{n}
\\[1mm]
& = & 
\eapp{f}{(n+1)}
\end{array}
\right.
$$

\noindent
Also, for all expressions $e : \mathbb{N}$,
the equality $\sum_{i=0}^x\,e = (x + 1) \cdot e$ holds
when the local variable $i$ does not occur free in the expression $e$.

\subsubsection{Dynamic programming}

Dynamic programming~\cite{Cormen-al:01} is a way of writing programs
so that the results of overlapping local computations are memoized
and then shared.
This sharing ensures that the overlapping local computations are
not repeated in the course of a global computation.

\begin{itemize}[leftmargin=3.5mm]

\item
  Consider Fibonacci numbers, for example.
  They are inductively specified with the following second-order linear recurrence:
  $$
    F_0 = 0 \;\;\; \wedge \;\;\; F_1 = 1 \;\;\; \wedge \;\;\; \forall n : \mathbb{N}, F_{n+2} = F_n + F_{n+1}
  $$
  A recursive function mapping $n : \mathbb{N}$ to $F_n$
  can be implemented based on this inductive specification.
  As is well known,
  this additive implementation gives rise to overlapping intermediate computations
  in the form of many repeated identical recursive calls.

  In contrast, using dynamic programming, one can implement an additive
  program that iterates over successive pairs of Fibonacci numbers,
  starting from $\epair{F_0}{F_1}$ and without repeating any computation.
  Given a number $n : \mathbb{N}$,
  the result $F_n$ is found in one of the components of the resulting pair.

  For example, it is known since Burstall and Darlington~\cite{Burstall-Darlington:JACM77}
  how to map $\epair{F_0}{F_1}$ to $\epair{F_n}{F_{n+1}}$ in $n$ tail-recursive calls
  to a local procedure \inlinescheme{visit}:
  \inputscheme{FIB}

  The Scheme procedure \inlinescheme{fib} is applied to an integer $n$,
  verifies that this integer is non-negative,
  and iterates over successive triples of integers -- a decreasing counter and two increasing Fibonacci numbers:
  it starts with $n$, $0$, and $1$
  and continues with $n-1$, $1$, and $1$
  and then with $n-2$, $1$, and $2$
  and then with $n-3$, $2$, and $3$,
  until \inlinescheme{visit} is applied to $0$, $F_n$, and $F_{n+1}$.
  At that point, the iteration stops and $F_n$ is returned.

  Dynamic programming is at work here because the structure of the overlapping computations
  makes it possible to compute each intermediate Fibonacci number only once.
  This structure can be exploited in many ways.
  For example, programs written using dynamic programming need not only be additive:
  a dynamic program that uses multiplications and divisions
  can compute Fibonacci numbers with a logarithmic rather than linear complexity.

\item
  For another example, consider binomial coefficients.
  They are inductively specified as follows (the third clause is Pascal's
  rule): $\forall n : \mathbb{N},$
  $$
  \binom{n}{0} = 1 \;\;\; \wedge \;\;\;
  \binom{n}{n} = 1 \;\;\; \wedge \;\;\;
  \forall k : \mathbb{N},\:k < n, \binom{n+1}{k+1} = \binom{n}{k} + \binom{n}{k+1}
  $$
  A recursive function mapping $n : \mathbb{N}$ and $k : \mathbb{N}$,
  where $k \le n$, to $\binom{n}{k}$
  can be implemented based on this inductive specification.
  As is well known,
  this additive implementation gives rise to overlapping intermediate computations
  in the form of many repeated identical recursive calls.

  In contrast, using dynamic programming, one can implement an additive
  program that iterates over successive lists of binomial coefficients,
  starting from $[\binom{0}{0}]$ and without repeating any computation.
  Given a number $n : \mathbb{N}$,
  the result $\binom{n}{k}$, for any $k : \mathbb{N}$ such that $k \le n$, 
  is found at index $k$ in the resulting list
  $[\binom{n}{0},\,\binom{n}{1},\cdots\!,\,\binom{n}{n-1},\,\binom{n}{n}]$.
  These successive lists form Pascal's triangle, an early example of
  dynamic programming.

\end{itemize}

The art of dynamic programming is to start from the recursive function
that implements an inductive specification (\eg, of Fibonacci numbers
or of binomial coefficients), to identify the overlapping computations,
and to devise
a process 
to compute intermediate results only once to obtain the desired result.
The present article is about the converse: start from the computational process
and reverse-engineer a recursive function.
The starting point here is that Moessner's
process of constructing successive streams fits the pattern of dynamic
programming, and the point of
\sectionref{sec:a-polynomial-rendition-of-moessner-s-process} is to
reverse-engineer a recursive function that
implements
an inductive specification of integral powers.
The thesis defended here is that 
the essence of Moessner's theorem is in this inductive specification,
not in the dynamic-program\-ming infrastructure of Moessner's process.

\section{Background and Related Work}
\label{sec:background-and-related-work}

Originally~\cite{Moessner:51}, Moessner's process started with $\estreamthree{1}{2}{3}{4}$.
Paasche~\cite{Paasche:52} pointed out that it could start with $\estreamthree{1}{1}{1}{1}$,
which has the effect of undoing one iteration of the process.
Later on~\cite{Clausen-al:TCS14}, the author observed that one more iteration could be undone
and that the process could start with $\estreamthree{1}{0}{0}{0}$.
This observation is used in \sectionref{sec:introduction} for presentational purposes
and revisited in \sectionref{subsec:long-s-second-theorem} to recast Long's second theorem~\cite{Long:AMM66}.
In the rest of this
article,
the process starts with $\estreamthree{1}{1}{1}{1}$,
as per Paasche's suggestion,
which shortcuts the initial iteration from $\estreamthree{1}{0}{0}{0}$ to $\estreamthree{1}{1}{1}{1}$.

Let us
illustrate Moessner's process for the exponents 0, 1, 2, and 3.
\begin{description}[leftmargin=0mm]

\item[Exponent 0:] 
No
iterations take place. %
\begin{description}[leftmargin=3.5mm]

\item{}

The result is the starting stream $\estreamthree{1}{1}{1}{1}$,
\ie, $\estreamthree{1^0}{2^0}{3^0}{4^0}$,
the stream of the
positive integers exponentiated with $0$.

\end{description}

\item[Exponent 1:] One iteration takes place. %
\begin{description}[leftmargin=3.5mm]

\item{(0)}
Filtering out each second element of
$\estreamthree{1}{1}{1}{1}$
vacuously constructs $\estreamthree{1}{1}{1}{1}$.
The stream of the resulting prefix sums is $\estreamthree{1}{2}{3}{4}$,
\ie, $\estreamthree{1^1}{2^1}{3^1}{4^1}$,
the stream of the
positive integers exponentiated with $1$.

\end{description}

\item[Exponent 2:] Two iterations take place. %
\begin{description}[leftmargin=3.5mm]

\item{(1)}
Filtering out each third element of
$\estreamthree{1}{1}{1}{1}$
vacuously constructs $\estreamthree{1}{1}{1}{1}$.
The stream of the resulting prefix sums is $\estreamthree{1}{2}{3}{4}$.

\item{(0)}
Filtering out each second element of
this last stream
constructs $\estreamthree{1}{3}{5}{7}$, the stream of the
positive odd integers.
As
mentioned %
at the bottom of page~\pageref{page:sum-of-first-odd-natural-numbers},
the stream of the resulting prefix sums is $\estreamthree{1}{4}{9}{16}$,
\ie, $\estreamthree{1^2}{2^2}{3^2}{4^2}$,
the stream of the
positive integers exponentiated with $2$.

\end{description}

\item[Exponent 3:] Three iterations take place. %
\begin{description}[leftmargin=3.5mm]

\item{(2)}
Filtering out each fourth element of
$\estreamthree{1}{1}{1}{1}$
vacuously constructs $\estreamthree{1}{1}{1}{1}$.
The stream of the resulting prefix sums is $\estreamthree{1}{2}{3}{4}$.

\item{(1)}
Filtering out each third element of
this last stream
constructs $\estreamseven{1}{2}{4}{5}{7}{8}{10}$.
The stream of the resulting prefix sums is
$\estreamseven{1}{3}{7}{12}{19}{27}{37}$.

\item{(0)}
Filtering out each second element of
this last stream
constructs
$\estreamsix{1}{7}{19}{37}{61}{91}$.
The stream of the resulting prefix sums is
$\estreamthree{1}{8}{27}{64}$,
\ie, $\estreamthree{1^3}{2^3}{3^3}{4^3}$,
the stream of the
positive integers exponentiated with $3$.

\end{description}

\item[Exponent 4:] Four iterations take place. %
\begin{description}[leftmargin=3.5mm]

\item{(3)}
Filtering out each fifth element of
$\estreamthree{1}{1}{1}{1}$
vacuously constructs $\estreamthree{1}{1}{1}{1}$.
The stream of the resulting prefix sums is $\estreamthree{1}{2}{3}{4}$.

\item{(2)}
Filtering out each fourth element of
this last stream
constructs $\estreamseven{1}{2}{3}{5}{6}{7}{9}$.
The stream of the resulting prefix sums is
$\estreamseven{1}{3}{6}{11}{17}{24}{33}$.

\item{(1)}
Filtering out each third element of
this last stream
constructs $\estreamsix{1}{3}{11}{17}{33}{43}$.
The stream of the resulting prefix sums is
${\ensuremath{[{1},\,{4},\,{15},\,{32},\,{65},\,{108},\,{175},\,{256},\,{369},\,{625},\cdots[}}$.

\item{(0)}
Filtering out each second element of
this last stream
constructs
$\estreamfive{1}{15}{65}{175}{369}$.
The stream of the resulting prefix sums is
$\estreamfour{1}{16}{81}{256}$,
\ie, $\estreamfour{1^4}{2^4}{3^4}{4^4}$,
the stream of the
positive integers exponentiated with $4$.

\end{description}

\end{description}

\noindent
In 1951~\cite{Moessner:51},
Alfred Moessner presented this process and conjectured that given a positive integer $n$,
it always ends with $\estreamthreea{1^n}{2^n}{3^n}{4^n}$.
This conjecture was then proved and is now referred to as Moessner's theorem.\footnote{
Alfred Moessner (1893--198?) was a German mathematics teacher
who, besides regular research activities~\cite{Moessner:zbmath},
contributed to recreational-mathematics journals like Scripta Mathematica (between 1936 and 1956 with 45 contributions)
and Sphinx (between 1931 and 1933 with 16 contributions and then in 1939 with 1 contribution)
with many results
-- diophantine equations, properties of integers, Pythagorean numbers, and a number of interesting identities.
He is noted, \eg, for
triplets~\cite{Moessner:SM41,Gloden:SM47}
as well as for magic squares~\cite{Moessner:SM47},
three of them solely containing prime numbers~\cite{Moessner:SM52}:
Honsberger refers to Moessner's results in arithmetic as ``gems''~\cite[page~269]{Honsberger:91}.
Both in Scripta Arithmetica and in Sphinx,
several of Moessner's ``Curiosa'' were followed by an insightful note from the Editor-in-Chief (Jekuthiel Ginsburg
and Maurice Kraitchik, resp.),
a testament of the liveliness of mathematics in the middle of the 20th century.
(Scripta Arithmetica was reporting events as they were happening at the time,
from the creation of the Journal of Symbolic Logic to the publication of Littlewood's miscellany,
which makes it a fascinating -- and sometimes heart-breaking -- reading today.)
As for his affiliation, Moessner did not mention any,
simply listing N{\"u}rnberg in 1936 in his first contribution to Scripta Arithmetica,
and then mentioning nothing until after the war, where he listed
Gunzenhausen next to his name, as he did in the article that introduced his eponymous theorem~\cite{Moessner:51}.
}

\subsection{In the 20th Century}
\label{subsec:in-the-20th-century}

Oskar Perron (1) characterized the series of integers after the first periodic elision in Moessner's process
and (2) characterized the series of integers after the next periodic elision,
given the series of integers after a periodic elision.
He expressed this series of integers in closed form and proved this closed form using mathematical induction.
Then, using a telescoping sum (like at the top of page~\pageref{page:telescoping-sum}),
he proved that the prefix sums of this series of integers compute powers,
as conjectured by Moessner~\cite{Perron:51}.

Ivan Paasche obtained what is now known as Moessner's theorem as a corollary of a more general theorem about
generating functions~\cite{Paasche:52} and then later using linear transformations~\cite{Paasche:AdM55}.
He also pointed out the connection between Moessner's process and Pascal's triangle as well as Taylor series~\cite{Paasche:MNU53-54}.

Jan van Yzeren
documented
a correspondence between numbers obtained in the course of Moessner's process
and numbers obtained in the course of Horner's algorithm~\cite{vanYzeren:AMM59}.

Hans Sali\'e characterized the output of Moessner's process
$\estreamthree{a_1^{(n)}}{a_2^{(n)}}{a_3^{(n)}}{a_4^{(n)}}$
in terms of the initial sequence
$\estreamthree{a_1^{(1)}}{a_2^{(1)}}{a_3^{(1)}}{a_4^{(1)}}$ for any exponent $n \ge 2$~\cite{Salie:52}.
This characterization, which is not solely additive and is renamed here for notational consistency, reads as follows for any index $k \ge 1$ in the stream:
\vspace{-1mm}
$$
a_k^{(n)}
=
\sum_{i=0}^{k-1}\:\sum_{j=1}^{n-1}\:a^{(1)}_{i \cdot n + j} \cdot (k - i)^{n - 1 - j} \cdot (k - 1 - j)^{j-1}
$$
\vspace{-1mm}

\noindent
Sali\'e
then pointed out that if
$a_k^{(1)}$ is $k$
then
$a_k^{(n)}$ is $k^n$,
as in Moessner's process,
and that if $a_1^{(1)}$ is $1$ and for all positive $k$, $a_{k+1}^{(1)}$ is $0$,
then
$a_k^{(n)}$ is $k^{n-2}$,
which anticipates the observation used in \sectionref{sec:introduction} for presentational purposes
and revisited in \sectionref{subsec:long-s-second-theorem} to recast Long's second theorem~\cite{Long:AMM66}.
Naturally, if for all positive $k$, $a_{k}^{(1)}$ is $1$,
then $a_k^{(n)}$ is $k^{n-1}$.

Using a generalization of Pascal's triangle and its binomial coefficients, 
Calvin Long initialized the process with a stream that follows an arithmetic progression~\cite{Long:AMM66},
as revisited in \sectionref{subsec:long-s-second-theorem}.
He extended his study by varying the period of elision~\cite{Long:FQ86},
building on Paasche's observation that
increasing this period makes the process map additions to multiplications, subtractions to quotients, and multiplications to exponentiations,
as in a slide rule~\cite{Conway-Guy:TBON96-Moessner-s-Magic,Paasche:MNU53-54,Paasche:CM54-56}.
On this basis, he used Moessner's theorem to imaginatively motivate students
in high school~\cite{Long:TMT82,Long:MG82,Slater:MG83} -- which is easier to achieve
today thanks to the On-Line Encyclopedia of Integer Sequences~\cite{OEIS}:
the students can copy-paste the prefix of a stream
and excitedly see whether it corresponds to a known sequence of integers.

Ross Honsberger presented a geometrical proof due to Karel Post~\cite{Honsberger:91}.

\subsection{In the 21st Century}
\label{subsec:in-the-21st-century}

After the turn of the century, Moessner's process became a classical exercise, and Roland
Backhouse introduced Ralf Hinze to it as such~\cite{Hinze:IFL08}.
Hinze elegantly presented a calculational proof of Moessner's theorem
and then developed a comprehensive framework for streams~\cite{Hinze:JFP11}.
Milad Niqui and Jan Rutten proved Moessner's theorem
coalgebraically~\cite{Niqui-Rutten:HOSC11}.
Robbert Krebbers, Louis Parlant, and Alexandra Silva formalized a coinduction proof of Moessner's theorem in Coq~\cite{Krebbers-al:16}.
Dexter Kozen and Alexandra Silva vastly generalized Moessner's theorem using power series
and presented an algebraic proof for it~\cite{Kozen-Silva:AMM13}, which
Mark Bickford formalized in the Nuprl proof
assistant, the first formalization of Moessner's theorem and of its proof~\cite{Bickford-al:JLAGMP22}.
In his MSc thesis~\cite{Urbak:MSc},
Peter Urbak dualized Moessner's process, generating triangles column by column instead of row by row
and singling out a collection of new properties.
He also formalized and generalized the correspondence
documented
by van Yzeren
between numbers obtained in the course of Moessner's process
and numbers obtained in the course of Horner's algorithm~\cite{vanYzeren:AMM59}.
In his BSc thesis~\cite{Treihis:BSc},
Uladzimir Treihis formalized Perron and Sali\'e's proofs
and generalized Sali\'e's result to a version of Moessner's process without striking-out phase.

The author's motivation is the same as in his contribution to Glynn Winskel's Fest\-schrift~\cite{Clausen-al:TCS14}.
To quote:

\vspace{-1mm}

\begin{quote}
  All in all, it seems to us that like Stonehenge, Moessner's theorem is
  like a mirror -- every publication about it reflects what is in the
  mind of its authors: a property, a proof technique, a corollary,
  or a showcase for a framework.\footnote{Or, like, dynamic programming?}
  Accordingly, the present article reflects what is in our mind as
  computer scientists and functional programmers: we want to program
  Moessner's [process] to understand not just how it works but also why it
  works.\footnote{At the time, the author put forward the word ``sieve''
  instead of ``process'' but this word is inaccurate since Moessner's process generates
  new numbers
  whereas a
  sieve does not.}
\end{quote}

\vspace{-3mm}

\section{Moessner's Process Without Dynamic Programming}
\label{sec:a-polynomial-rendition-of-moessner-s-process}

In Moessner's process, each of the streams enumerates a function.
For example,
$\estreamthreec{1^n}{2^n}{3^n}{4^n}$
enumerates
$\elam{x}{\expo{(x + 1)}{n}}$.
The goal of this section is to analyze Moessner's process (\sectionref{subsec:analysis})
and to recast its constitutive steps -- namely the filtering-out phase
(\sectionref{subsec:the-filtering-out-phase}) and then the prefix sums
(\sectionref{subsec:the-prefix-sums}) -- in terms of these enumerated functions
instead of in terms of the enumerating streams.
The result will be a streamless version of Moessner's process, \ie, a
version without dynamic programming
(\sectionref{subsec:moessner-s-process-streamlessly}).

\vspace{-1mm}

\subsection{Analysis}
\label{subsec:analysis}

Let
us characterize the
stream processing
that underlies Moessner's process.
The
filtering-out phase mentions each $i + 1$st element of
a stream:
the index $x$ of each second element has the property that $\emod{x}{2} = 1$,
  the index $x$ of each third element
  has the property that $\emod{x}{3} = 2$, and
  for all $j : \mathbb{N}$,
  the index $x$ of each $j + 1$st element
  has the property that $\emod{x}{(j+1)} = j$.
The corresponding
predicate
reads
$\elam{j}{\elam{x}{\emod{x}{(j + 1)} = j}}$,
witness the following truth table,

$$
\begin{array}{@{}l|ccccccccccccccl@{}}
j \protect{\verb"\"} x & 0 & 1 & 2 & 3 & 4 & 5 & 6 & 7 & 8 & 9 & 10 & 11 & 12 & 13 & \ldots
\\
\hline
1 & F & T & F & T & F & T & F & T & F & T & F & T & F & T
\\
2 & F & F & T & F & F & T & F & F & T & F & F & T & F & F
\\
3 & F & F & F & T & F & F & F & T & F & F & F & T & F & F
\\
\vdots
\end{array}
$$

\noindent
where $T$ denotes true and $F$ false.
And indeed, \eg, in the
row where $j = 2$, the index of each third
element of the stream satisfies the predicate and the other indices do
not.
The salient point of this table is that its rows, read bottom up,
characterize both the numbers filtered by Moessner's process 
and their period of elision.

\subsection{The Filtering-Out Phase}
\label{subsec:the-filtering-out-phase}

For any given positive integer $i$,
how do we
filter out each $i + 1$st element of a stream
to construct a new stream?
Put more simply,
for $f : \mathbb{N} \rightarrow \mathbb{N}$,
and given the
stream $\estreamthreea{\eapp{f}{0}}{\eapp{f}{1}}{\eapp{f}{2}}{\eapp{f}{3}}$,
\begin{itemize}[leftmargin=3.5mm,itemsep=1pt]

\item
  how can
    $\estreamseven{\eapp{f}{0}}{\eapp{f}{2}}{\eapp{f}{4}}{\eapp{f}{6}}{\eapp{f}{8}}{\eapp{f}{10}}{\eapp{f}{12}}$
  be constructed,
  where each second element has been
  filtered out?

\item
  how can
    $\estreamseven{\eapp{f}{0}}{\eapp{f}{1}}{\eapp{f}{3}}{\eapp{f}{4}}{\eapp{f}{6}}{\eapp{f}{7}}{\eapp{f}{9}}$
  be constructed,
  where each third element has been
  filtered out?

\item
  etc.

\end{itemize}

\noindent
To this end, let us define a function $\rawgee : \mathbb{N}^+ \rightarrow \mathbb{N} \rightarrow \mathbb{N}$
that is given the (positive) period of
elision
and the original input for $f$:
\begin{itemize}[leftmargin=3.5mm,itemsep=1pt]

\item
  $\estreamthree{\eapp{f}{(\gee{1}{0})}}{\eapp{f}{(\gee{1}{1})}}{\eapp{f}{(\gee{1}{2})}}{\eapp{f}{(\gee{1}{3})}}
   =
   \estreamseven{\eapp{f}{0}}{\eapp{f}{2}}{\eapp{f}{4}}{\eapp{f}{6}}{\eapp{f}{8}}{\eapp{f}{10}}{\eapp{f}{12}}$

\item
  $\estreamthree{\eapp{f}{(\gee{2}{0})}}{\eapp{f}{(\gee{2}{1})}}{\eapp{f}{(\gee{2}{2})}}{\eapp{f}{(\gee{2}{3})}}
   =
   \estreameight{\eapp{f}{0}}{\eapp{f}{1}}{\eapp{f}{3}}{\eapp{f}{4}}{\eapp{f}{6}}{\eapp{f}{7}}{\eapp{f}{9}}{\eapp{f}{10}}$

\item
  etc.

\end{itemize}

\noindent
We define $\rawgee$ so that
the arithmetic progression is maintained and a number is periodically
skipped.
The definition of $\rawgee$
hinges on
integer division (and so we omit the
floor
notation used in the executive summary since we are considering integers,
not rational numbers):

\begin{property}[basic arithmetic for natural numbers]
\label{prop:basic-arithmetic}
$
\forall j : \mathbb{N},
\forall x : \mathbb{N},
\\
\left\{
\begin{array}{r@{\ }c@{\ }l}
\emod{x}{(j+1)} < j & \Rightarrow & \frac{x+1}{j+1} = \frac{x}{j+1}
\\
\emod{x}{(j+1)} = j & \Rightarrow & \frac{x+1}{j+1} = \frac{x}{j+1} + 1
\end{array}
\right.
$
\end{property}

\label{page:elision-function}
\begin{definition}[Elision function]
\label{def:elision-function}
$\rawgee : \mathbb{N}^+ \rightarrow \mathbb{N} \rightarrow \mathbb{N}
         = \elam{j}{\elam{x}{\frac{(j + 1) \cdot x}{j}}}$
\end{definition}

\noindent
Equivalently, we could define $g : \mathbb{N} \rightarrow \mathbb{N} \rightarrow \mathbb{N}
   = \elam{j}{\elam{x}{\frac{(j + 2) \cdot x}{j+1}}}
   = \elam{j}{\elam{x}{x + \frac{x}{j+1}}}$,
which illustrates the relevance of Property~\ref{prop:basic-arithmetic}.

The following table illustrates
$\rawgee$
(in each row, $\circ$ prefixes the successor of a number that has been skipped):
$$
\begin{array}{@{}r|rrrrrrrrrrrrrrr@{}}
x & 
0 & 1 & 2 & 3 & 4 & 5 & 6 & 7 & 8 & 9 & 10 & 11 & 12 & 13 & \ldots
\\
\hline
g \: 1 \: x &
0 & \circ 2 & \circ 4 & \circ 6 & \circ 8 & \circ 10 & \circ 12 & \circ 14 & \circ 16 & \circ 18 & \circ 20 & \circ 22 & \circ 24 & \circ 26 %
\\
g \: 2 \: x &
0 & 1 & \circ 3 & 4 & \circ 6 & 7 & \circ 9 & 10 & \circ 12 & 13 & \circ 15 & 16 & \circ 18 & 19 %
\\
g \: 3 \: x &
0 & 1 & 2 & \circ 4 & 5 & 6 & \circ 8 & 9 & 10 & \circ 12 & 13 & 14 & \circ 16 & 17 %
\\
\vdots
\end{array}
$$

\noindent
And indeed,
in the row for $g \: 1 \: x$, each successive second element has been filtered out,
in the row for $g \: 2 \: x$, each successive third element has been filtered out,
etc.

It also makes sense to define the complement of $\rawgee$ -- noted $\rawgeeop$ -- explicitly,
since Moessner's process is defined in terms of the numbers that are filtered
out,
not in terms of the numbers that are filtered
in~\cite{Long:MG82,Slater:MG83}:

\begin{definition}[Complement of the elision function]
$\rawgeeop
 = \elam{j}{\elam{x}{(j + 1) \cdot x + j}}$
\end{definition}

Given a period of elision, the numbers that are skipped enumerate $\rawgeeop$:
$$
\begin{array}{@{}r|rrrrrrr@{}}
x & 
0 & 1 & 2 & 3 & 4 & 5 & \ldots
\\
\hline
\geeop{1}{x} &
1 & 3 & 5 & 7 & 9 & 11
\\
\geeop{2}{x} &
2 & 5 & 8 & 11 & 14 & 17
\\
\geeop{3}{x} &
3 & 7 & 11 & 15 & 19 & 23
\\
\vdots
\end{array}
$$

\subsection{The Prefix Sums}
\label{subsec:the-prefix-sums}

If a stream enumerates a given function $f$,
then the stream of its prefix sums enumerates $\elam{x}{\sum_{i=0}^x\,\eapp{f}{i}}$.

\subsection{Moessner's Process, Streamlessly}
\label{subsec:moessner-s-process-streamlessly}

Instead of considering the successive streams,
we consider the successive
functions that these streams enumerate.
So, given an exponent $n$, we start from the constant function $f_n =
\elam{x}{1}$ rather than from the stream of $1$'s.
The first iteration constructs $f_{n-1} = \elam{x}{\sum_{i_n=0}^x\,\eappq{f_n}{\gee{n}{i_n}}}$,
the second iteration constructs $f_{n-2} = \elam{x}{\sum_{i_{n-1}=0}^x\,\eappq{f_{n-1}}{\gee{(n - 1)}{i_{n-1}}}}$,
etc., and we shall unfold the definition of $\rawgee$ as we go.

To illustrate,
here are the renditions of Moessner's process for the exponents 0, 1, 2, 3, and 4,
an echo of \sectionref{sec:background-and-related-work}.
The initial stream is $\estreamthree{1}{1}{1}{1}$ and
we shall unfold the definitions of $f_1$, $f_2$, $f_3$, and $f_4$ as we go:

\begin{description}[leftmargin=3.5mm]

\item[Exponent 0:] %
The initial stream
enumerates $f_0 = \elam{x}{1}$ and no iteration takes place.
The result is $\elam{x}{1}$,
\ie, $\elam{x}{\expo{(x + 1)}{0}}$.

\item[Exponent 1:] %
The initial stream
enumerates $f_1 = \elam{x}{1}$ and one iteration takes place. %
\begin{description}[leftmargin=3.5mm]

\item{(0)}
The iteration constructs
$\estreamthree{1}{2}{3}{4}$.
This stream enumerates $f_0 =$ \\
$\elam{x}{\sum_{i_1=0}^x\,\eappq{f_1}{\gee{1}{i_1}}}
 = %
 \elam{x}{\sum_{i_1=0}^x\,1}
 =
 \elam{x}{(x + 1)}$,
\ie, $\elam{x}{\expo{(x + 1)}{1}}$.

\end{description}

\item[Exponent 2:] %
The initial stream
enumerates $f_2 = \elam{x}{1}$, and two iterations take place. %
\begin{description}[leftmargin=3.5mm]

\item{(1)}
The first iteration constructs
$\estreamthree{1}{2}{3}{4}$.
This stream enumerates $f_1 =$ \\
$\elam{x}{\sum_{i_2=0}^x\,\eappq{f_2}{\gee{2}{i_2}}}
 = %
 \elam{x}{\sum_{i_2=0}^x\,1}
 =
 \elam{x}{(x + 1)}
$.

\item{(0)}
The second iteration constructs
$\estreamthree{1}{4}{9}{16}$.
This stream enumerates $f_0 = $ \\
$\elam{x}{\sum_{i_1=0}^x\,\eappq{f_1}{\gee{1}{i_1}}} %
 =
 \elam{x}{\sum_{i_1=0}^x\,\eappq{f_1}{\frac{2 \cdot i_1}{1}}}
 =
 \elam{x}{\sum_{i_1=0}^x\,(\frac{2 \cdot i_1}{1} + 1)}
 =
 \elam{x}{\sum_{i_1=0}^x\,(2 \cdot i_1 + 1)}$,
\ie, $\elam{x}{\expo{(x + 1)}{2}}$
since as detailed
at the bottom of page~\pageref{page:sum-of-first-odd-natural-numbers}
and at the top of page~\pageref{page:telescoping-sum},
summing the first odd natural numbers gives a square number.

\end{description}

\item[Exponent 3:] %
The initial stream
enumerates $f_3 = \elam{x}{1}$, and three iterations take place. %
\begin{description}[leftmargin=3.5mm]

\item{(2)}
The first iteration constructs $\estreamthree{1}{2}{3}{4}$.
This stream enumerates $f_2 =$ \\
$\elam{x}{\sum_{i_3=0}^x\,\eappq{f_3}{\gee{3}{i_3}}}
 = %
 \elam{x}{\sum_{i_3=0}^x\,1}
 =
 \elam{x}{(x + 1)}$.

\item{(1)}
The second iteration constructs $\estreamseven{1}{3}{7}{12}{19}{27}{37}$.
This stream enumerates $f_1 = $ \\
$
\elam{x}{\sum_{i_2=0}^x\,\eappq{f_2}{\gee{2}{i_2}}}
 = 
 \elam{x}{\sum_{i_2=0}^x\,\eappq{f_2}{\frac{3 \cdot i_2}{2}}}
 =
 \elam{x}{\sum_{i_2=0}^x\,(\frac{3 \cdot i_2}{2} + 1)}$.

\item{(0)}
The third iteration constructs $\estreamfour{1}{8}{27}{64}$.
This stream enumerates $f_0 =$ \\
$\elam{x}{\sum_{i_1=0}^x\,\eappq{f_1}{\gee{1}{i_1}}} = %
 \elam{x}{\sum_{i_1=0}^x\,\eappq{f_1}{\frac{2 \cdot i_1}{1}}}
 =
 \elam{x}{\sum_{i_1=0}^x\,\sum_{i_2=0}^{2 \cdot i_1}\,(\frac{3 \cdot i_2}{2} + 1)}$,
and \\
$\sum_{i_1=0}^x\,\sum_{i_2=0}^{2 \cdot i_1}\,(\frac{3 \cdot i_2}{2} + 1) =%
\sum_{i_1=0}^x\,\sum_{i_2=0}^{2 \cdot i_1}\,(i_2 + \frac{i_2}{2} + 1) = $ \\
$
\sum_{i_1=0}^x\,(\sum_{i_2=0}^{2 \cdot i_1}\,i_2 + \sum_{i_2=0}^{2 \cdot i_1}\,\frac{i_2}{2} + \sum_{i_2=0}^{2 \cdot i_1}\,1) =%
\sum_{i_1=0}^x\,(\frac{(2 \cdot i_1) \cdot (2 \cdot i_1 + 1)}{2} + i_1^2 + (2 \cdot i_1 + 1)) = $ 
(see \appendixref{app:a-curiosa-about-polygonal-numbers}) \\
$
\sum_{i_1=0}^x\,(3 \cdot i_1^2 + 3 \cdot i_1 + 1) =%
\sum_{i_1=0}^x\,((i_1 + 1)^3 - i_1^3) =%
(x + 1)^3$

\end{description}

\label{page:the-origin-of-the-curiosa-about-polygonal-numbers}

\item[Exponent 4:] %
The initial stream
enumerates $f_4 = \elam{x}{1}$, and four iterations take place. %
\begin{description}[leftmargin=3.5mm]

\item{(3)}
The first iteration constructs $\estreamthree{1}{2}{3}{4}$.
This stream enumerates $f_3 =$ \\
$\elam{x}{\sum_{i_4=0}^x\,\eappq{f_4}{\gee{4}{i_4}}}
 =
 \elam{x}{\sum_{i_4=0}^x\,1}$.

\item{(2)}
The second iteration constructs $\estreamseven{1}{3}{6}{11}{17}{24}{33}$.
This stream enumerates
$f_2
 = $ \\
$
 \elam{x}{\sum_{i_3=0}^x\,\eappq{f_3}{\gee{3}{i_3}}}
 =
 \elam{x}{\sum_{i_3=0}^x\,\eappq{f_3}{\frac{4 \cdot i_3}{3}}}$.

\item{(1)}
The third iteration constructs $\estreameight{1}{4}{15}{32}{65}{108}{175}{256}$.
This stream enumerates
$f_1
 = $ \\
$
 \elam{x}{\sum_{i_2=0}^x\,\eappq{f_2}{\gee{2}{i_2}}}
 =
 \elam{x}{\sum_{i_2=0}^x\,\eappq{f_2}{\frac{3 \cdot i_2}{2}}}$.

\item{(0)}
The fourth iteration constructs $\estreamfour{1}{16}{81}{256}$.
This stream enumerates $f_0 =$ \\
$\elam{x}{\sum_{i_1=0}^x\,\eappq{f_1}{\gee{1}{i_1}}}
 =
 \elam{x}{\sum_{i_1=0}^x\,\eappq{f_1}{\frac{2 \cdot i_1}{1}}}
 =
 \ldots
 =
 \elam{x}{\sum_{i_1=0}^x\,((i_1 + 1)^4 - i_1^4)}
$,
\ie, $\elam{x}{(x + 1)^4}$.

\end{description}

\end{description}

Let us unfold all the functions that are enumerated by the intermediate streams
for the exponents 0, 1, 2, 3, and 4 as well as 5:
$$
\begin{array}{l}
\displaystyle
\elam{x}{1}
\\[2mm]
\displaystyle
\elam{x}{\sum_{i_1=0}^x\:1}
\\[2mm]
\displaystyle
\elam{x}{\sum_{i_1=0}^x\:\sum_{i_2=0}^{\gee{1}{i_1}}\:1}
\\[2mm]
\displaystyle
\elam{x}{\sum_{i_1=0}^x\:\sum_{i_2=0}^{\gee{1}{i_1}}\:\sum_{i_3=0}^{\gee{2}{i_2}}\:1}
\\[2mm]
\displaystyle
\elam{x}{\sum_{i_1=0}^x\:\sum_{i_2=0}^{\gee{1}{i_1}}\:\sum_{i_3=0}^{\gee{2}{i_2}}\:\sum_{i_4=0}^{\gee{3}{i_3}}\:1}
\\[2mm]
\displaystyle
\elam{x}{\sum_{i_1=0}^x\:\sum_{i_2=0}^{\gee{1}{i_1}}\:\sum_{i_3=0}^{\gee{2}{i_2}}\:\sum_{i_4=0}^{\gee{3}{i_3}}\:\sum_{i_5=0}^{\gee{4}{i_4}}\:1}
\end{array}
$$

\noindent
Each nested sum is an instance of a prefix sum and each upper bound that
involves $\rawgee$ is an instance of the filtering-out phase.
(In \sectionref{subsec:an-alternative-theorem-to-obtain-powers} and onwards,
we consider other definitions of $\rawgee$, \ie, other filtering-out strategies.)

More generally, unfolding all the intermediate definitions leads one to
the following
streamless instances of Moessner's theorem for the exponents 0, 1, 2, 3, and
4,
without the dynamic-programming infrastructure:
\begin{eqnarray*}
\elam{x}{(x + 1)^0}
& = &
\elam{x}{1}
\\[1mm]
\elam{x}{(x + 1)^1}
& = &
\elam{x}{\sum_{i_1=0}^x\,1}
\\[-1mm]
\elam{x}{(x + 1)^2}
& = &
\elam{x}{\sum_{i_1=0}^x\,\sum_{i_2=0}^{\frac{2 \cdot i_1}{1}}\,1}
\hspace{1.5cm}\;\;=\;\;
\elam{x}{\sum_{i_1=0}^x\,\sum_{i_2=0}^{i_1 + \frac{i_1}{1}}\,1}
\\[-1mm]
\elam{x}{(x + 1)^3}
& = &
\elam{x}{\sum_{i_1=0}^x\,\sum_{i_2=0}^{\frac{2 \cdot i_1}{1}}\,\sum_{i_3=0}^{\frac{3 \cdot i_2}{2}}\,1}
\hspace{0.75cm}\;\;=\;\;
\elam{x}{\sum_{i_1=0}^x\,\sum_{i_2=0}^{i_1 + \frac{i_1}{1}}\,\sum_{i_3=0}^{i_2 + \frac{i_2}{2}}\,\,1}
\\[-1mm]
\elam{x}{(x + 1)^4}
& = &
\elam{x}{\sum_{i_1=0}^x\,\sum_{i_2=0}^{\frac{2 \cdot i_1}{1}}\,\sum_{i_3=0}^{\frac{3 \cdot i_2}{2}}\,\sum_{i_4=0}^{\frac{4 \cdot i_3}{3}}\,1}
\;\;=\;\;
\elam{x}{\sum_{i_1=0}^x\,\sum_{i_2=0}^{i_1 + \frac{i_1}{1}}\,\sum_{i_3=0}^{i_2 + \frac{i_2}{2}}\,\sum_{i_4=0}^{i_3 + \frac{i_3}{3}}\,1}
\end{eqnarray*}

\section{A Left Inverse of Moessner's Process Without Dynamic Programming}
\label{sec:a-polynomial-rendition-of-the-left-inverse-of-moessner-s-process}

Connecting prefix sums (antidifferences) and backward differences makes it possible
to devise a left inverse for Moessner's process.
This left inverse has no computational interest in terms of its final result
-- which is $\estreamthree{1^0}{2^0}{3^0}{4^0}$, \ie, $\estreamthree{1}{1}{1}{1}$,
for any given exponent $n$ and stream $\estreamthree{1^n}{2^n}{3^n}{4^n}$ --
but studying it provides
a second
way to construct the intermediate streams.
The fun is in the game but the reader in a hurry can safely skip this consolidating section.

\subsection{Inverting the Prefix Sums}

We use backward differences, since as shown in \sectionref{subsec:prerequisites-and-notations},
they invert prefix sums.
  When going from $\estreamthree{1^0}{2^0}{3^0}{4^0}$ to $\estreamthree{1^n}{2^n}{3^n}{4^n}$ 
in Moessner's process,
the exponents increase since each iteration of the process uses an antidifference,
which is the discrete counterpart of an integration.
And when going from $\estreamthree{1^n}{2^n}{3^n}{4^n}$ to $\estreamthree{1^0}{2^0}{3^0}{4^0}$
in the converse of Moessner's process,
the exponents decrease since each iteration of the process uses a difference,
which is the discrete counterpart of a
differentiation.

\subsection{Inverting the Filtering-Out Phase}
\label{subsec:inverting-the-filtering-out-phase}

To invert the filtering-out phase, we use a converse of the elision function from
\sectionref{subsec:the-filtering-out-phase}.
The following table illustrates the idea of
this converse
with
$$h =
\elam
  {j}
  {\elam
     {x}
     {\econditional
       {\emod{x}{(j+1)} = j}
       {\square}
       {j \cdot \frac{x}{j + 1}}}}$$
where $\square$ is a placeholder for the stream element to insert and using
the converse of $\frac{(j + 1) \cdot x}{j}$,
which yields the following ``staircase'' table:

$$
\begin{array}{@{}r|rrrrrrrrrrrrrrrrr@{}}
x & 
0 & 1 & 2 & 3 & 4 & 5 & 6 & 7 & 8 & 9 & 10 & 11 & 12 & 13 & 14 & 15 & \ldots
\\
\hline
h \: 1 \: x &
0 & \square & 1 & \square & 2 & \square & 3 & \square & 4 & \square & 5 & \square & 6 & \square & 7 & \square
\\
h \: 2 \: x &
0 & 0 & \square & 2 & 2 & \square & 4 & 4 & \square & 6 & 6 & \square & 8 & 8 & \square & 10
\\
h \: 3 \: x &
0 & 0 & 0 & \square & 3 & 3 & 3 & \square & 6 & 6 & 6 & \square & 9 & 9 & 9 & \square
\\
h \: 4 \: x &
0 & 0 & 0 & 0 & \square & 4 & 4 & 4 & 4 & \square & 8 & 8 & 8 & 8 & \square & 12
\\
\vdots
\end{array}
$$

\noindent
And indeed:
\begin{itemize}[leftmargin=3.5mm,itemsep=1pt]

\item~\hbox{in the row for $h \: 1 \: x$, each stair has size 1 and every 2nd cell, there is a placeholder between each stair;}

\item
  in the row for $h \: 2 \: x$, each stair has size 2 and every 3rd cell, there is a placeholder between each stair,

\item
  in the row for $h \: 3 \: x$, each stair has size 3 and every 4th cell, there is a placeholder between each stair,

\item
  etc.

\end{itemize}

The converse of the filtering-out phase is achieved
by replacing the stairs in the table above (\eg, 0, 0, 0 and 3, 3, 3 in the row for $h \: 3 \: x$)
by increasing sequences (namely 0, 1, 2 and 3, 4, 5 in this row).
These stairs are generated by the alternative branch,
$j \cdot \frac{x}{j + 1}$, \ie, when $\emod{x}{(j+1)} \neq j$ in the definition of $h$.
So
we
add $\emod{x}{(j+1)}$
in that
branch,
which
yields
$k$ $=$
\\
$\elam
  {j}
  {\elam
     {x}
     {\econditional
       {\emod{x}{(j+1)} = j}
       {\square}
       {j \cdot \frac{x}{j + 1} + \emod{x}{(j + 1)}}}}$:
$$
\begin{array}{@{}r|rrrrrrrrrrrrrrrrc@{}}
x & 
0 & 1 & 2 & 3 & 4 & 5 & 6 & 7 & 8 & 9 & 10 & 11 & 12 & 13 & 14 & 15 & \ldots
\\
\hline
k \: 1 \: x &
0 & \square & 1 & \square & 2 & \square & 3 & \square & 4 & \square & 5 & \square & 6 & \square & 7 & \square
\\
k \: 2 \: x &
0 & 1 & \square & 2 & 3 & \square & 4 & 5 & \square & 6 & 7 & \square & 8 & 9 & \square & 10
\\
k \: 3 \: x &
0 & 1 & 2 & \square & 3 & 4 & 5 & \square & 6 & 7 & 8 & \square & 9 & 10 & 11 & \square
\\
k \: 4 \: x &
0 & 1 & 2 & 3 & \square & 4 & 5 & 6 & 7 & \square & 8 & 9 & 10 & 11 & \square & 12
\\
\vdots
\end{array}
$$

\noindent
And indeed,
each row contains the successive natural numbers if one skips the placeholders,
which provides the indices
for
each of the intermediate streams
in the converse of Moessner's process.

Earlier on~\cite{Clausen-al:TCS14}, the author observed that in Moessner's process,
the parts of the stream that are filtered out
enumerate the successive monomials of the binomial expansion of $\expo{((x + 1) + 1)}{n}$ --
which explains why the elements standing in the resulting stream enumerate $\expo{(x + 1)}{n}$,
an alternative proof of Moessner's theorem revisited in \sectionref{sec:the-essence-of-moessner-s-theorem}.
So we splice in these successive monomials in the place of $\square$ in the table above
to invert the filtering-out phase.

\subsection{Inverting Moessner's Process, Streamlessly}
\label{subsec:inverting-moessner-s-process-streamlessly}

Here are the renditions of the left inverse of Moessner's process for the exponents 2, 3, and 4,
as an echo of
\sectionref{subsec:moessner-s-process-streamlessly} and
where $\omega_k = \elam{x}{\expo{(x + 1)}{k}}$:

\begin{description}[leftmargin=3.5mm]

\item[Exponent 2:]
The initial stream, $\estreamthree{1}{4}{9}{16}$, enumerates $f_0 = \omega_2$,
\ie, $\elam{x}{\expo{(x + 1)}{2}}$,
or again \\
$\elam{x}{\binom{2}{2} \cdot (\frac{x}{1} + 1)^2}$,
and two iterations take place. %
\begin{description}[leftmargin=3.5mm]

\item{(0)}
The first iteration constructs the stream $\estreamthree{1}{2}{3}{4}$.
This stream enumerates \\
$f_1
 =
 {\elam
    {x}
    {\econditional
       {\emod{x}{2} = 1}
       {\binom{2}{1} \cdot (\frac{x}{2} + 1)^1}
       {\edifference{f_0}{(1 \cdot \frac{x}{2} + \emod{x}{2})}}}}$.

\item{(1)}
The second iteration constructs the stream $\estreamthree{1}{1}{1}{1}$.
This stream enumerates \\
$f_2
 =
 {\elam
    {x}
    {\econditional
       {\emod{x}{3} = 2}
       {\binom{2}{0} \cdot (\frac{x}{3} + 1)^0}
       {\edifference{f_1}{(2 \cdot \frac{x}{3} + \emod{x}{3})}}}}$.

\end{description}

Each $f_i$ is equivalent to the corresponding $f_i$ for exponent 2
in \sectionref{subsec:moessner-s-process-streamlessly}
and $f_2$ is equivalent to
$\omega_0$, \ie, $\elam{x}{\expo{(x + 1)}{0}}$.

\item[Exponent 3:]
The initial stream, $\estreamfour{1}{8}{27}{64}$, enumerates $f_0 = \omega_3$,
\ie, $\elam{x}{\expo{(x + 1)}{3}}$, or again \\
$\elam{x}{\binom{3}{3} \cdot (\frac{x}{1} + 1)^3}$,
and three iterations take place. %
\begin{description}[leftmargin=3.5mm]

\item{(0)}
The first iteration constructs the stream
$\estreamfive{1}{3}{7}{12}{19}$.
This stream enumerates \\
$f_1
 =
 {\elam
    {x}
    {\econditional
       {\emod{x}{2} = 1}
       {\binom{3}{2} \cdot (\frac{x}{2} + 1)^2}
       {\edifference{f_0}{(1 \cdot \frac{x}{2} + \emod{x}{2})}}}}$.

\item{(1)}
The second iteration constructs the stream $\estreamthree{1}{2}{3}{4}$.
This stream enumerates \\
$f_2
 =
 {\elam
    {x}
    {\econditional
       {\emod{x}{3} = 2}
       {\binom{3}{1} \cdot (\frac{x}{3} + 1)^1}
       {\edifference{f_1}{(2 \cdot \frac{x}{3} + \emod{x}{3})}}}}$.

\item{(2)}
The third iteration constructs the stream $\estreamthree{1}{1}{1}{1}$.
This stream enumerates \\
$f_3
 =
 {\elam
    {x}
    {\econditional
       {\emod{x}{4} = 3}
       {\binom{3}{0} \cdot (\frac{x}{4} + 1)^0}
       {\edifference{f_2}{(3 \cdot \frac{x}{4} + \emod{x}{4})}}}}$.

\end{description}

Each $f_i$ is equivalent to the corresponding $f_i$ for exponent 3
in \sectionref{subsec:moessner-s-process-streamlessly}
and $f_3$ is equivalent to
$\omega_0$, \ie, $\elam{x}{\expo{(x + 1)}{0}}$.

\item[Exponent 4:]
The initial stream, $\estreamfour{1}{16}{81}{256}$, enumerates $f_0 = \omega_4$,
\ie, $\elam{x}{\expo{(x + 1)}{4}}$, or again $\elam{x}{\binom{4}{4} \cdot (\frac{x}{1} + 1)^4}$,
and four iterations take place. %
\begin{description}[leftmargin=3.5mm]

\item{(0)}
The first iteration constructs the stream
$\estreamfour{1}{4}{15}{32}$.
This stream enumerates \\
$f_1
 =
 {\elam
    {x}
    {\econditional
       {\emod{x}{2} = 1}
       {\binom{4}{3} \cdot (\frac{x}{2} + 1)^3}
       {\edifference{f_0}{(1 \cdot \frac{x}{2} + \emod{x}{2})}}}}$.

\item{(1)}
The second iteration constructs the stream
$\estreamfour{1}{3}{6}{11}$
This stream enumerates \\
$f_2
 =
 {\elam
    {x}
    {\econditional
       {\emod{x}{3} = 2}
       {\binom{4}{2} \cdot (\frac{x}{3} + 1)^2}
       {\edifference{f_1}{(2 \cdot \frac{x}{3} + \emod{x}{3})}}}}$.

\item{(2)}
The third iteration constructs the stream
$\estreamthree{1}{2}{3}{4}$
This stream enumerates \\
$f_3
 =
 {\elam
    {x}
    {\econditional
       {\emod{x}{4} = 3}
       {\binom{4}{1} \cdot (\frac{x}{4} + 1)^1}
       {\edifference{f_2}{(3 \cdot \frac{x}{4} + \emod{x}{4})}}}}$.

\item{(3)}
The fourth iteration constructs the stream
$\estreamthree{1}{1}{1}{1}$.
This stream enumerates \\
$f_4
 =
 {\elam
    {x}
    {\econditional
       {\emod{x}{5} = 4}
       {\binom{4}{0} \cdot (\frac{x}{5} + 1)^0}
       {\edifference{f_3}{(4 \cdot \frac{x}{5} + \emod{x}{5})}}}}$.

\end{description}

Each $f_i$ is equivalent to the corresponding $f_i$ for exponent 4
in \sectionref{subsec:moessner-s-process-streamlessly}
and $f_4$ is equivalent to
$\omega_0$, \ie, $\elam{x}{\expo{(x + 1)}{0}}$.

\end{description}

All in all, we now have two ways to carry out the
iterations
of Moessner's process,
namely forward and backward.
These two ways consolidate our understanding of it.

\section{The Essence of Moessner's Theorem}
\label{sec:the-essence-of-moessner-s-theorem}

The key observation in
\sectionref{subsec:inverting-the-filtering-out-phase} was that
Moessner's process hinges on a family of equalities indexed by the exponent.
Here are members of this family for the exponents 0, 1, 2, 3, and 4:
\begin{description}[leftmargin=3.5mm]

\item[Exponent 0:] %
$\begin{array}[t]{@{}r@{\ }c@{\ }l}
\forall i_0 : \mathbb{N},
1
& = &
\binom{0}{0} \cdot (i_0 + 1)^0
\end{array}$

\item[Exponent 1:] %
$\begin{array}[t]{@{}r@{\ }c@{\ }l}
\forall i_1 : \mathbb{N},
1
& = &
\binom{1}{0} \cdot (i_1 + 1)^0
\\[2mm]
\forall i_0 : \mathbb{N},
\sum_{i_1 = 0}^{i_0}\:
1
& = &
\binom{1}{1} \cdot (i_0 + 1)^1
\end{array}$

\item[Exponent 2:] %
$\begin{array}[t]{@{}r@{\ }c@{\ }l}
\forall i_2 : \mathbb{N},
1
& = &
\binom{2}{0} \cdot (i_2 + 1)^0
\\[2mm]
\forall i_1 : \mathbb{N},
\sum_{i_2 = 0}^{i_1}\:
1
& = &
\binom{2}{1} \cdot (i_1 + 1)^1
\\[2mm]
\forall i_0 : \mathbb{N},
\sum_{i_1 = 0}^{i_0}\:
\sum_{i_2 = 0}^{\geen{2}{1}{i_1}}\:
1
& = &
\binom{2}{2} \cdot (i_0 + 1)^2
\end{array}$

\item[Exponent 3:] %
$\begin{array}[t]{@{}r@{\ }c@{\ }l}
\forall i_3 : \mathbb{N},
1
& = &
\binom{3}{0} \cdot (i_3 + 1)^0
\\[2mm]
\forall i_2 : \mathbb{N},
\sum_{i_3 = 0}^{i_2}\:
1
& = &
\binom{3}{1} \cdot (i_2 + 1)^1
\\[2mm]
\forall i_1 : \mathbb{N},
\sum_{i_2 = 0}^{i_1}\:
\sum_{i_3 = 0}^{\geen{3}{2}{i_2}}\:
1
& = &
\binom{3}{2} \cdot (i_1 + 1)^2
\\[2mm]
\forall i_0 : \mathbb{N},
\sum_{i_1 = 0}^{i_0}\:
\sum_{i_2 = 0}^{\geen{3}{1}{i_1}}\:
\sum_{i_3 = 0}^{\geen{3}{2}{i_2}}\:
1
& = &
\binom{3}{3} \cdot (i_0 + 1)^3
\end{array}$

\item[Exponent 4:] %
$\begin{array}[t]{@{}r@{\ }c@{\ }l}
\forall i_4 : \mathbb{N},
1
& = &
\binom{4}{0} \cdot (i_4 + 1)^0
\\[2mm]
\forall i_3 : \mathbb{N},
\sum_{i_4 = 0}^{i_3}\:
1
& = &
\binom{4}{1} \cdot (i_3 + 1)^1
\\[2mm]
\forall i_2 : \mathbb{N},
\sum_{i_3 = 0}^{i_2}\:
\sum_{i_4 = 0}^{\geen{3}{3}{i_3}}\:
1
& = &
\binom{4}{2} \cdot (i_2 + 1)^2
\\[2mm]
\forall i_1 : \mathbb{N},
\sum_{i_2 = 0}^{i_1}\:
\sum_{i_3 = 0}^{\geen{3}{2}{i_2}}\:
\sum_{i_4 = 0}^{\geen{3}{3}{i_3}}\:
1
& = &
\binom{4}{3} \cdot (i_1 + 1)^3
\\[2mm]
\forall i_0 : \mathbb{N},
\sum_{i_1 = 0}^{i_0}\:
\sum_{i_2 = 0}^{\geen{3}{1}{i_1}}\:
\sum_{i_3 = 0}^{\geen{3}{2}{i_2}}\:
\sum_{i_4 = 0}^{\geen{3}{3}{i_3}}\:
1
& = &
\binom{4}{4} \cdot (i_0 + 1)^4
\end{array}$

\end{description}

\noindent
where the elision function $\rawgee$ was defined at the bottom of page~\pageref{page:elision-function} in \sectionref{subsec:the-filtering-out-phase}:
$\rawgee = \elam{j}{\elam{x}{\frac{(j + 1) \cdot x}{j}}} = \elam{j}{\elam{x}{x + \frac{x}{j}}}$.
Ostensibly,
the type of $\rawgee$
is
$\mathbb{N}^+ \rightarrow \mathbb{N} \rightarrow \mathbb{N}$,
but in actuality,
and as illustrated just above,
its
domain is $[1\;..\;n-1]$,
where $n$ is the given exponent.
In words: starting from $1$, a series of summations is performed that
periodically filters in numbers with a period that starts at
the desired exponent $n$ and that shrinks at each iteration until it
reaches $0$.
And at each iteration, the numbers that are filtered out sum up to a monomial
in the binomial expansion of $\expo{((x + 1) + 1)}{n}$,
where $x$ is the
upper bound
in each outer summation.

Moessner's theorem is the last equality in each member of this family of equalities:

\begin{description}[leftmargin=3.5mm,itemsep=1pt]

\item[Exponent 0:]
$
\forall x : \mathbb{N},
1
\; = \;  %
(x + 1)^0
$

\item[Exponent 1:]
$
\forall x : \mathbb{N},
\sum_{i_1 = 0}^x\:
1
\; = \;  %
(x + 1)^1
$

\item[Exponent 2:]
$
\forall x : \mathbb{N},
\sum_{i_1 = 0}^x\:
\sum_{i_2 = 0}^{\frac{2 \cdot i_1}{1}}\:
1
\; = \;  %
(x + 1)^2
$

\item[Exponent 3:] 
$
\forall x : \mathbb{N},
\sum_{i_1 = 0}^x\:
\sum_{i_2 = 0}^{\frac{2 \cdot i_1}{1}}\:
\sum_{i_3 = 0}^{\frac{3 \cdot i_2}{2}}\:
1
\; = \;  %
(x + 1)^3
$

\item[Exponent 4:] 
$
\forall x : \mathbb{N},
\sum_{i_1 = 0}^x\:
\sum_{i_2 = 0}^{\frac{2 \cdot i_1}{1}}\:
\sum_{i_3 = 0}^{\frac{3 \cdot i_2}{2}}\:
\sum_{i_4 = 0}^{\frac{4 \cdot i_3}{3}}\:
1
\; = \;  %
(x + 1)^4
$

\item[Exponent 5:] 
$
\forall x : \mathbb{N},
\sum_{i_1 = 0}^x\:
\sum_{i_2 = 0}^{i_1 + \frac{i_1}{1}}\:
\sum_{i_3 = 0}^{i_2 + \frac{i_2}{2}}\:
\sum_{i_4 = 0}^{i_3 + \frac{i_3}{3}}\:
\sum_{i_5 = 0}^{i_4 + \frac{i_4}{4}}\:
1
\; = \;  %
(x + 1)^5
$

\item[\vdots] 

\end{description}

\noindent
And so independently of dynamic programming, the essence of Moessner's
theorem is nested summations (as many nested summations as the degree of the result)
with fractionally increasing
upper bounds:

\begin{theorem}[Moessner's theorem without dynamic programming]
$\forall x : \mathbb{N},
 \forall n : \mathbb{N}$,
\begin{eqnarray*}
  \sum_{i_1=0}^x\:
  \sum_{i_2=0}^{\frac{2 \cdot i_1}{1}}\:
  \sum_{i_3=0}^{\frac{3 \cdot i_2}{2}}\:
  \cdots
  \sum_{i_n=0}^{\frac{n \cdot i_{n-1}}{n-1}}\:
  1
  \;\;=\;\;
  \sum_{i_1=0}^x\,
  \sum_{i_2=0}^{i_1 + \frac{i_1}{1}}\:
  \sum_{i_3=0}^{i_2 + \frac{i_2}{2}}\:
  \cdots
  \sum_{i_n=0}^{i_{n-1} + \frac{i_{n-1}}{n-1}}\:
  1
  & = &
  (x + 1)^n
\end{eqnarray*}
\end{theorem}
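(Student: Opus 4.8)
The statement contains two equalities. The first one — between the two nested sums whose inner upper bounds are written $\frac{k\cdot i_{k-1}}{k-1}$ and $i_{k-1} + \frac{i_{k-1}}{k-1}$ — is not really a theorem but a term-by-term identity: taking $j = k-1$ and $x = i_{k-1}$ in the two forms of the elision function (Definition~\ref{def:elision-function}), the equality $\frac{(j+1)\cdot x}{j} = x + \frac{x}{j}$ holds for integer division by Property~\ref{prop:basic-arithmetic}. Hence the two summations have identical bounds at every level and are literally the same sum. So the real content is the right-hand equality, and that is where I would concentrate.

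For that equality I would reduce to a telescoping sum. Writing the inner $n-1$ summations as a function of the outermost index, set $T(i) = \sum_{i_2=0}^{\gee{1}{i}}\sum_{i_3=0}^{\gee{2}{i_2}}\cdots\sum_{i_n=0}^{\gee{(n-1)}{i_{n-1}}} 1$, so that the quantity to evaluate is $\sum_{i_1=0}^x T(i_1)$. The plan is to prove the Main Lemma $T(i) = (i+1)^n - i^n$; once this is in hand, $\sum_{i_1=0}^x\big((i_1+1)^n - i_1^n\big) = (x+1)^n$ by a telescoping sum exactly like the one on page~\pageref{page:telescoping-sum}, and the base cases $n = 0$ and $n = 1$ are immediate. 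This isolates all the difficulty into the Main Lemma, and it clarifies the role of each piece: the outer summation is the antidifference that raises the degree by one, and $T$ supplies precisely the backward difference $(i+1)^n - i^n$ of the target.

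To prove the Main Lemma I would follow the author's observation from \sectionref{subsec:inverting-the-filtering-out-phase} that the elements filtered out at an iteration enumerate the successive monomials of the binomial expansion $((i+1)+1)^n = \sum_{m=0}^n \binom{n}{m}(i+1)^m$. Concretely, I would compare the filtered (kept) nested sum against the unfiltered one over the same index range: the kept indices are the values of $\rawgee$ and the skipped indices are the values of $\rawgeeop$, and the two-case law of Property~\ref{prop:basic-arithmetic} — $\frac{x+1}{j+1}$ equals $\frac{x}{j+1}$ or $\frac{x}{j+1}+1$ according to whether $x$ is a skipped index — is what makes a skipped contribution collapse to an exact monomial $\binom{n}{m}(i+1)^m$ rather than to a floor-ridden approximation. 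Summing the monomials for $m < n$ and re-telescoping should yield $T(i) = (i+1)^n - i^n$. I would set this up as an induction on $n$, carrying the statement for all the partial (shifted) sums simultaneously so that the binomial bookkeeping closes.

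The main obstacle is that the intermediate functions are \emph{not} powers: only the fully composed $f_0$ equals $\elam{x}{(x+1)^n}$, whereas $f_1, f_2, \ldots$ are merely quasi-polynomial — their values depend on the residue of the argument modulo the current period — as one already sees from $f_1(y) = \sum_{i_2=0}^{y}\big(\frac{3\cdot i_2}{2}+1\big)$ for $n = 3$, which is not $(y+1)^2$. Consequently one cannot prove the theorem by the naive induction ``each level is a power''; the cleanness is a global phenomenon that surfaces only after summation. The delicate point is therefore to show that whenever such a quasi-polynomial is evaluated at an argument produced by the elision function $\rawgee$ and then summed, the floors resolve — which is exactly where Property~\ref{prop:basic-arithmetic} is indispensable — so that the emergent total is the clean binomial monomial. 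Making this residue bookkeeping uniform in $n$ is the crux of the argument.
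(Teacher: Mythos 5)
Your proposal is correct and matches the paper's own route: dispose of the first equality as the two integer-division forms of the elision function, and establish the right-hand equality by strengthening to the family of statements that each partial nested sum, evaluated at an elision position, equals a binomial monomial of $((i+1)+1)^n$ (the family displayed in \sectionref{sec:the-essence-of-moessner-s-theorem}) and then telescoping the resulting backward difference $(i+1)^n - i^n$, exactly as the paper does for exponents $2$ and $3$ in \sectionref{subsec:moessner-s-process-streamlessly}. The one bookkeeping slip to watch: the monomials $\binom{n}{m}\cdot(i+1)^m$ for $m<n$ sum to $(i+2)^n-(i+1)^n$, i.e., to $T(i+1)$ rather than $T(i)$, so your Main Lemma needs the expansion of $(i+1)^n$ in powers of $i$ (equivalently, the monomials attached to index $i-1$).
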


These nested summations involve many overlapping subcomputations.
For example,
$\sum_{i_1 = 0}^x\:\sum_{i_2 = 0}^{2 \cdot i_1}\:1
 =
 \sum_{i_1 = 0}^{x - 2}\:\sum_{i_2 = 0}^{2 \cdot i_1}\:1
 +
 \sum_{i_2 = 0}^{2 \cdot (x - 1)}\:1
 +
 \sum_{i_2 = 0}^{2 \cdot x}\:1
$ when $x \ge 2$.
Since
$\sum_{i_2 = 0}^{2 \cdot x}\:1
 =
 \sum_{i_2 = 0}^{2 \cdot x - 2}\:1
 +
 1
 +
 1
 =
 \sum_{i_2 = 0}^{2 \cdot (x - 1)}\:1
 +
 1
 +
 1
$,
the expression
$\sum_{i_2 = 0}^{2 \cdot (x - 1)}\:1$
is independently computed several times,
an opportunity for dynamic programming
that \sectionref{sec:back-to-dynamic-programming} revisits and quantifies.

In Moessner's theorem, the upper bound of each inner summation is a fractional increment of the current index.
It is instructive to tune these upper bounds, \ie, to try other filtering-out strategies~\cite{Danvy:JENSFEST24}:

\begin{itemize}[leftmargin=3.5mm]

\item
  How about not incrementing the current index?
  \begin{corollary}[Binomial coefficients]
  \label{coro:binomial-coefficients}
  $\forall x : \mathbb{N},
   \forall n : \mathbb{N}$,
  \begin{eqnarray*}
    \sum_{i_1=0}^x\,
    \sum_{i_2=0}^{i_1}\,
    \sum_{i_3=0}^{i_2}\,
    \cdots
    \sum_{i_n=0}^{i_{n-1}}\,
    1
    & = &
    \binom{x + n}{n}
  \end{eqnarray*}
  \end{corollary}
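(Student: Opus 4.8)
The plan is to prove the identity by induction on the number of nested summations $n$, recognizing first what these nested sums actually count. Writing $S_n(x)$ for the left-hand side, the nesting $\sum_{i_1=0}^x\sum_{i_2=0}^{i_1}\cdots\sum_{i_n=0}^{i_{n-1}}1$ enumerates exactly the tuples $(i_1,\ldots,i_n)$ of natural numbers satisfying $x \ge i_1 \ge i_2 \ge \cdots \ge i_n \ge 0$, \ie, the weakly decreasing sequences of length $n$ drawn from the $x+1$ values $\{0,1,\ldots,x\}$. Counting such sequences is the classical multiset problem, whose answer is $\binom{(x+1)+n-1}{n} = \binom{x+n}{n}$; this already exposes the ``why'' behind the corollary. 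To obtain a self-contained argument in the additive spirit of the paper, however, I would carry out the induction explicitly rather than invoke stars and bars.

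For the base case $n=0$ there are no summations, so $S_0(x)=1=\binom{x+0}{0}$. For the inductive step, assume $S_{n-1}(y)=\binom{y+n-1}{n-1}$ for every $y:\mathbb{N}$. Peeling off the outermost summation, the bracketed inner structure (with $n-1$ nested sums whose outer bound is now $i_1$) is precisely $S_{n-1}(i_1)$ after renaming indices, so the induction hypothesis gives
$$S_n(x) \;=\; \sum_{i_1=0}^x S_{n-1}(i_1) \;=\; \sum_{i_1=0}^x \binom{i_1+n-1}{n-1}.$$
Thus the whole statement reduces to evaluating this single sum of binomial coefficients.

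The crux --- and the step I expect to be the main (if routine) obstacle --- is therefore the \emph{hockey-stick identity} $\sum_{i=0}^x \binom{i+n-1}{n-1} = \binom{x+n}{n}$. I would establish it exactly as the paper establishes the sum of the first odd numbers near \sectionref{subsec:prerequisites-and-notations}: by telescoping. Pascal's rule rearranges to $\binom{i+n-1}{n-1} = \binom{i+n}{n} - \binom{i+n-1}{n}$, and since $\binom{i+n}{n}=\binom{(i+1)+n-1}{n}$, the summand is a difference of consecutive terms; summing from $i=0$ to $x$ telescopes to $\binom{x+n}{n} - \binom{n-1}{n} = \binom{x+n}{n}$, using $\binom{n-1}{n}=0$ for $n\ge 1$. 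Combining this evaluation with the inductive reduction closes the induction and yields $S_n(x)=\binom{x+n}{n}$ for all $x,n:\mathbb{N}$.
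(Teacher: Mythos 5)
Your proof is correct and follows essentially the same route as the paper, which proves Corollary~\ref{coro:binomial-coefficients} by induction on $n$ using Lemma~\ref{lem:summing-binomial-coefficients} (exactly your hockey-stick identity $\sum_{i=0}^x\binom{i+n}{n}=\binom{x+n+1}{n+1}$). The only cosmetic difference is that the paper establishes that lemma by induction on $x$ whereas you telescope via Pascal's rule, which amounts to the same computation.
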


\item
  How about incrementing the current index with $1$?
  \begin{corollary}[Catalan numbers]
  $\forall n : \mathbb{N}$,
  \begin{eqnarray*}
    \sum_{i_1=0}^0\,
    \sum_{i_2=0}^{i_1 + 1}\,
    \sum_{i_3=0}^{i_2 + 1}\,
    \cdots
    \sum_{i_n=0}^{i_{n-1} + 1}\,
    1
    & = &
    \frac{\binom{2 \cdot n}{n}}{n + 1}
    \:\, = \;\,
    C_n
  \end{eqnarray*}
  \end{corollary}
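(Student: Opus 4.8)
The plan is to turn the left-hand side into a counting problem and then match the count against a classical Catalan family. Because every summand is $1$, the nested sum counts the integer tuples $(i_1, \ldots, i_n)$ permitted by the ranges. The outermost range $\sum_{i_1=0}^{0}$ forces $i_1 = 0$, and each inner range contributes the constraint $0 \le i_k \le i_{k-1} + 1$. Hence the left-hand side $S_n$ equals the number of tuples $(i_1, \ldots, i_n) \in \mathbb{N}^n$ with $i_1 = 0$ and $0 \le i_k \le i_{k-1} + 1$ for $2 \le k \le n$; equivalently, the number of length-$(n-1)$ nonnegative integer walks that start at $0$ and never increase by more than $1$ in a single step. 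Before anything else I would dispatch the degenerate cases: for $n = 0$ the empty nested sum is $1 = C_0$, and for $n = 1$ it is $\sum_{i_1=0}^{0} 1 = 1 = C_1$.

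Next I would eliminate the two-sided, index-coupled constraint by the substitution $c_j = i_j + 1 - i_{j+1}$ for $j = 1, \ldots, n-1$. The upper bound $i_{j+1} \le i_j + 1$ is precisely $c_j \ge 0$, and telescoping with $i_1 = 0$ gives $i_{p+1} = p - \sum_{j=1}^{p} c_j$, so the lower bound $i_{p+1} \ge 0$ becomes the subdiagonal condition $\sum_{j=1}^{p} c_j \le p$. Since the $c_j$ and the $i_k$ determine each other, this is a bijection, and therefore $S_n$ is the number of sequences $(c_1, \ldots, c_{n-1})$ of nonnegative integers all of whose partial sums obey $\sum_{j=1}^{p} c_j \le p$ for $1 \le p \le n-1$.

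The remaining task is to count these subdiagonal sequences, and the cleanest tool is the Dvoretzky--Motzkin cycle lemma. Prepending a leading $0$ and appending the slack $s = n - \sum_{j=1}^{n-1} c_j \ge 1$ embeds them bijectively into the sequences $(a_1, \ldots, a_{n+1})$ of nonnegative integers with $\sum_{i} a_i = n$ and the strict ballot condition $\sum_{i=1}^{k} a_i \le k - 1$ for all $k$: the leading $0$ turns each weak inequality $\sum_{j \le p} c_j \le p$ into the strict one at index $p+1$, while the slack fixes the total. Setting $e_i = 1 - a_i$, each $e_i$ is an integer $\le 1$ with $\sum_i e_i = 1$, and the ballot condition says exactly that all partial sums of the $e_i$ are positive; the cycle lemma then gives exactly one such rotation per cyclic class. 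Since no nonnegative sequence of length $n+1$ with total $n$ can have a nontrivial period -- a period $d \mid n+1$ with $d < n+1$ would force $(n+1)/d$ to divide both $n+1$ and $n$, hence $1$ -- every class has full size $n+1$, so the good sequences number $\frac{1}{n+1}\binom{2n}{n}$. Reading the bijections backwards yields $S_n = \frac{1}{n+1}\binom{2n}{n} = C_n$.

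The main obstacle is precisely this cycle-lemma bookkeeping: fixing the correct total, the weak-versus-strict partial-sum offset absorbed by the prepended $0$, and the period argument that makes the $n+1$ rotations genuinely distinct. A self-contained alternative that sidesteps the lemma is to track the walks by final height: with $a_n(h)$ the number of length-$n$ walks ending at height $h$, the step rule $0 \le i_k \le i_{k-1}+1$ gives $a_{n+1}(h) = \sum_{h' \ge h-1} a_n(h')$, whence the height generating function satisfies $F_{n+1}(z) = (F_n(1) - z^2 F_n(z))/(1-z)$; assembling the $F_n$ into one bivariate series and solving the resulting kernel equation recovers $C(z) = 1 + z\,C(z)^2$, and there the difficulty migrates to the kernel method. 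As an internal cross-check with \corollaryref{coro:binomial-coefficients}, the unconstrained count $\binom{2n}{n}$ is itself the $x = n$ instance of that corollary, from which the Catalan value follows by a single reflection.
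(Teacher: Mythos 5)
Your proof is correct, and it is genuinely more self-contained than what the paper offers. The paper does not prove this corollary in Section~\ref{sec:the-essence-of-moessner-s-theorem} at all: it records that the identity was ``discovered by playing with parameterized versions of Moessner's process'' and checked against the OEIS, and the only justification it points to is Grimaldi's generating-tree argument in \sectionref{subsec:catalan-numbers} -- the nested sums are nested for loops enumerating the nodes at level $n$ of a tree in which a node reached with index $i_{k-1}$ has $i_{k-1}+2$ children, and the number of nodes at level $n$ of that tree is $C_n$. You instead read the sum as a count of tuples $(i_1,\ldots,i_n)$ with $i_1=0$ and $0 \le i_k \le i_{k-1}+1$, pass to the increments $c_j = i_j + 1 - i_{j+1}$ to get subdiagonal compositions, and finish with the Dvoretzky--Motzkin cycle lemma; the bookkeeping all checks out (the prepended $0$ correctly converts the weak partial-sum bounds $\sum_{j\le p} c_j \le p$ into the strict ballot condition, the slack term $s \ge 1$ fixes the total at $n$, the aperiodicity argument via $\gcd(n,n+1)=1$ makes the $n+1$ rotations distinct, and stars-and-bars gives the $\binom{2n}{n}$ numerator). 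What each approach buys: yours is rigorous on its own terms and lands directly on the closed form $\binom{2n}{n}/(n+1)$ rather than on a recurrence or a named sequence; the paper's tree reading is lighter and meshes with the generating-tree machinery it reuses for A002449 and for the convolved-Catalan generalization (outer bound $0$ replaced by $x$), where your cycle-lemma computation would have to be redone with total $n+x$ over $n+x+1$ positions to recover $\frac{(x+1)\binom{2n+x}{n}}{n+x+1}$. One small caveat: your closing ``cross-check'' against \corollaryref{coro:binomial-coefficients} is only a numerical coincidence of the value $\binom{2n}{n}$ -- that corollary counts weakly decreasing chains, not the compositions your cycle classes partition -- so it should not be leaned on as part of the argument.
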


  Generalizing the outer upper bound from $0$ to a non-negative integer $x$
  gives rise -- according to the On-Line Encyclopedia of Integer Sequences~\cite{OEIS} --
  to the $x + 1$st convolution of Catalan numbers:
  \begin{corollary}[Catalan numbers, convolved]
  $\forall x : \mathbb{N},
   \forall n : \mathbb{N}$,
  \begin{eqnarray*}
    \sum_{i_1=0}^x\,
    \sum_{i_2=0}^{i_1 + 1}\,
    \sum_{i_3=0}^{i_2 + 1}\,
    \cdots
    \sum_{i_n=0}^{i_{n-1} + 1}\,
    1
    & = &
    \frac
      {(x + 1) \cdot \binom{2 \cdot n + x}{n}}
      {n + x + 1}
  \end{eqnarray*}
  \end{corollary}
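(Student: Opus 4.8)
The plan is to read the left-hand side as a counting problem and establish the closed form by induction on the depth $n$, reducing the inductive step to a telescoping sum. First I would interpret the nested sum as the number $S(n,x)$ of integer tuples $(i_1,\dots,i_n)$ subject to $0 \le i_1 \le x$ and $0 \le i_k \le i_{k-1}+1$ for $2 \le k \le n$, each summand contributing $1$. Peeling off the outermost summation exhibits the recurrence $S(n,x) = \sum_{i_1=0}^{x} S(n-1,\, i_1+1)$, since once $i_1$ is fixed the remaining indices $i_2,\dots,i_n$ form a depth-$(n-1)$ instance whose outer bound is $i_1+1$. This recurrence together with $S(0,x)=1$ determines $S$ completely, so it suffices to verify that the claimed formula $R_n(x) := \frac{(x+1)\binom{2n+x}{n}}{n+x+1}$ satisfies it. As sanity checks, $R_0(x)=1$, $R_1(x)=x+1$, and $R_2(x)=\frac{(x+1)(x+4)}{2}$ agree with direct evaluation, and via Pascal's rule $R_n(x)$ equals the $n$-th term $\frac{x+1}{2n+x+1}\binom{2n+x+1}{n}$ of the $(x+1)$-fold Catalan convolution, matching the statement above.

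The base case $n=0$ is immediate. For the inductive step I would substitute the induction hypothesis $S(n-1,\,i_1+1)=R_{n-1}(i_1+1)$ into the recurrence, which after writing $m=i_1$ gives $S(n,x)=\sum_{m=0}^{x}\frac{(m+2)\binom{2n+m-1}{n-1}}{n+m+1}$. The key step — and the one genuine piece of work — is to recognise this summand as a telescoping difference $R_n(m)-R_n(m-1)$. Expanding that difference and clearing factorials, the bracketed factor collapses by the polynomial identity $(m+1)(2n+m)-m(n+m+1)=n(m+2)$, which is exactly what is needed to reproduce the summand. Since the factor $m+1$ makes $R_n(-1)=0$, the sum telescopes to $R_n(x)$, completing the induction.

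The main obstacle is purely the discovery of the antidifference $R_n$: the summand is not a single binomial coefficient, so neither the hockey-stick identity nor Vandermonde applies directly, and one must guess that the (shifted) closed form is itself the telescoping primitive; once the polynomial identity above is in hand, the verification is routine. An alternative I would keep in reserve is a bijective argument, since the numbers $R_n(x)$ are ballot numbers and the cycle lemma of Dvoretzky and Motzkin produces them directly from the count of nonnegative sequences whose steps are bounded above by $1$, with the specialization $x=0$ recovering the plain Catalan corollary and its $i_{k-1}+1$ bounds. I would nonetheless prefer the inductive, telescoping route, as it stays within the elementary summation calculus used throughout the paper and requires no generating functions.
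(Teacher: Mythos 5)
Your proof is correct, and it is worth noting that it supplies something the paper itself does not: the paper never proves this corollary. It identifies the left-hand side with the $(x+1)$st convolution of Catalan numbers ``according to the On-Line Encyclopedia of Integer Sequences,'' and in its related-work section it even hedges with ``appears to give rise to'' --- the corollary, like its siblings, was discovered empirically by playing with parameterized versions of Moessner's process. Your argument is sound throughout: the recurrence $S(n,x)=\sum_{i_1=0}^{x}S(n-1,\,i_1+1)$ with $S(0,x)=1$ does characterize the nested sum, the key bracket identity $(m+1)(2n+m)-m(n+m+1)=n(m+2)$ is exactly what makes $R_{n-1}(m+1)=R_n(m)-R_n(m-1)$ hold (one can check directly that both sides equal $\frac{(m+2)\,(2n+m-1)!}{(n-1)!\,(n+m+1)!}$), the factor $x+1$ indeed forces $R_n(-1)=0$ so the telescoping closes, and your two closed forms $\frac{(x+1)\binom{2n+x}{n}}{n+x+1}$ and $\frac{x+1}{2n+x+1}\binom{2n+x+1}{n}$ agree, matching the standard ballot-number expression for the $(x+1)$-fold Catalan convolution. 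The nearest arguments the paper gestures at are combinatorial and tree-based --- Grimaldi's growing-tree picture for the plain Catalan case and Lahlou's use of West's generating trees --- which correspond to the bijective route you kept in reserve; your telescoping induction is more elementary, stays within the summation calculus the paper uses for its ``routine induction'' proofs of the stolid instances, recovers the un-convolved Catalan corollary at $x=0$, and upgrades the paper's OEIS observation to a theorem.
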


\item
  How about adding the current index to all the previous indices?
  \begin{corollary}[A125860]
  $\forall x : \mathbb{N},
   \forall n : \mathbb{N}$,
  \begin{eqnarray*}
    \sum_{i_1=0}^x\,
    \sum_{i_2=0}^{x + i_1}\,
    \sum_{i_3=0}^{x + i_1 + i_2}\,
    \cdots
    \sum_{i_n=0}^{x + i_1 + i_2 + \ldots + i_{n-1}}\,
    1
    & &
  \end{eqnarray*}
  yields the integer at Column $x$ and row $n$ in Table A125860
  in the On-Line Encyclopedia of Integer Sequences.
  \end{corollary}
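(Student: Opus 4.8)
The plan is to reduce the $n$-fold sum to a one-step recurrence by exploiting its self-similar shape, and then to match that recurrence against the OEIS definition. Write $T(n,x)$ for the left-hand side. Fixing the outermost index $i_1$, the remaining $(n-1)$-fold sum is, verbatim, the depth-$(n-1)$ instance of the very same expression with the outer bound $x$ replaced by $x+i_1$: its first inner index $i_2$ runs from $0$ to $x+i_1$, the next from $0$ to $(x+i_1)+i_2$, and so on, so that after relabelling $j_1 = i_2, \ldots, j_{n-1} = i_n$ the inner sum is literally $T(n-1,\,x+i_1)$. Hence
\begin{equation*}
T(n,x) \;=\; \sum_{i_1=0}^{x} T(n-1,\, x+i_1) \;=\; \sum_{y=x}^{2x} T(n-1,y), \qquad T(0,x)=1 .
\end{equation*}
First I would establish this recurrence rigorously (a one-line reindexing argument, $y = x+i_1$) and sanity-check the low rows: $T(1,x)=x+1$ and $T(2,x)=(x+1)(3x+2)/2$, the latter being the pentagonal numbers $P_{x+1}$, which already appear in the table.

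A cleaner restatement, useful both for bookkeeping and for matching the sequence, comes from the substitution $s_k = x + i_1 + \cdots + i_k$, so that $s_0 = x$ and $i_k = s_k - s_{k-1}$. The two constraints $0 \le i_k$ and $i_k \le x + i_1 + \cdots + i_{k-1} = s_{k-1}$ become exactly $s_{k-1} \le s_k \le 2 s_{k-1}$. Thus $T(n,x)$ counts the chains $x = s_0 \le s_1 \le \cdots \le s_n$ in which each term is at most twice its predecessor. This combinatorial reading makes the recurrence transparent — one simply conditions on $s_1 \in [x,2x]$, the tail being counted by $T(n-1,s_1)$ — and it is the natural bridge to the table.

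Because the corollary's right-hand side is not a closed form but ``the entry at column $x$, row $n$ of Table A125860'', the proof is completed by showing that this OEIS table obeys the same data, namely the base row $T(0,\cdot)\equiv 1$ together with $T(n,x) = \sum_{i_1=0}^{x} T(n-1,\,x+i_1)$. I expect this last step to be the main obstacle, since A125860 is specified externally — typically by a generating function or an explicit formula rather than by this recurrence. The work is therefore to take the OEIS definition and verify, by induction on $n$, that it satisfies the self-similar recurrence and the initial condition above; uniqueness of the solution of that recurrence then forces equality with $T(n,x)$. Should the OEIS entry instead already be presented by this recurrence (or by the chain description $s_{k-1}\le s_k \le 2 s_{k-1}$), the identification is immediate, and only the self-similarity step and the base case $T(0,x)=1$ remain to be checked.
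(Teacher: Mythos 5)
Your derivation is correct, but it is genuinely different from what the paper does---because the paper offers no deductive proof of this corollary at all. The corollaries of that section are explicitly presented as having been \emph{discovered} ``by playing with parameterized versions of Moessner's process without dynamic programming'': the identification with A125860 rests on evaluating the nested sums in Scheme and matching the outputs against the OEIS table (backed by the test suite in the accompanying \texttt{.scm} file), and the statement's phrasing---``yields the integer at Column $x$ and row $n$''---is an empirical observation, just like the neighboring A137273 corollary. Against that, your scaffold is sound: writing $T(n,x)$ for the sum, the reindexing $x' = x + i_1$ turns every inner bound $x + i_1 + \cdots + i_{k-1}$ into $x' + i_2 + \cdots + i_{k-1}$, so the inner $(n-1)$-fold sum is verbatim $T(n-1,\,x+i_1)$ and the recurrence $T(n,x) = \sum_{y=x}^{2x} T(n-1,y)$ with $T(0,\cdot) \equiv 1$ holds; your sanity checks $T(1,x) = x+1$ and $T(2,x) = (x+1)(3x+2)/2$ (the pentagonal numbers, which indeed also surface in the paper's appendix) are right; and the chain reading $x = s_0 \le s_1 \le \cdots \le s_n$ with $s_k \le 2\,s_{k-1}$ is a clean invariant the paper never states. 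What each approach buys: the paper's identification is exactly as strong as the finitely many entries checked, whereas your reduction concentrates the entire burden into one verification---that A125860's defining data satisfy the same recurrence and base row---after which induction on $n$ plus your uniqueness remark closes the argument for all $x$ and $n$. You are right to flag that final confrontation with the external definition as the main obstacle: it is the whole content of the statement, it is precisely the step the paper also leaves to the OEIS, and until it is carried out your proof is a correct conditional reduction rather than a complete proof---but as a reduction it is strictly more rigorous than anything the paper provides for this corollary.
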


\item
  How about multiplying the current index with all the previous indices?
  \begin{corollary}[Positive integers]
  $\forall n : \mathbb{N}$,
  \begin{eqnarray*}
    \sum_{i_1=0}^1\,
    \sum_{i_2=0}^{1 \cdot i_1}\,
    \sum_{i_3=0}^{1 \cdot i_1 \cdot i_2}\,
    \cdots
    \sum_{i_n=0}^{1 \cdot i_1 \cdot i_2 \ldots \cdot i_{n-1}}\,
    1
    & = &
    n + 1
  \end{eqnarray*}
  \end{corollary}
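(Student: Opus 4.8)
The plan is to argue combinatorially. Since the innermost summand is the constant $1$, the left-hand side counts the number of index tuples $(i_1,\dots,i_n)$ of natural numbers subject to the constraints $0 \le i_1 \le 1$ and, for each $k$ with $2 \le k \le n$, $0 \le i_k \le \prod_{j=1}^{k-1} i_j$ (the leading factor $1$ being harmless). So the whole task reduces to counting these valid tuples, with no summing of non-trivial bodies required.

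First I would establish, by induction on $k$, that every valid tuple has all its entries in $\{0,1\}$. The base case is the explicit bound $0 \le i_1 \le 1$. For the step, if $i_1,\dots,i_{k-1} \in \{0,1\}$ then their product lies in $\{0,1\}$ as well, so the bound $0 \le i_k \le \prod_{j=1}^{k-1} i_j$ confines $i_k$ to $\{0,1\}$. The decisive point is that $\prod_{j=1}^{k-1} i_j = 1$ exactly when all earlier entries equal $1$, while the product is $0$ as soon as a single earlier entry is $0$. Consequently a $0$ can never be followed by a $1$: once some $i_j = 0$, every subsequent upper bound is $0$ and pins the remaining indices to $0$.

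The upshot is that the valid tuples are precisely the non-increasing $\{0,1\}$-sequences of length $n$ --- a (possibly empty) run of $1$'s followed by a (possibly empty) run of $0$'s. Such a sequence is determined by the number $p \in \{0,1,\dots,n\}$ of leading $1$'s (set $i_1 = \dots = i_p = 1$ and $i_{p+1} = \dots = i_n = 0$); one checks directly that each of these $n+1$ choices satisfies all constraints, and that no other tuple does. Hence there are exactly $n+1$ valid tuples and the sum equals $n+1$, with the edge cases $n=0$ (the empty nested sum evaluates to $1 = 0+1$) and $n=1$ (namely $\sum_{i_1=0}^{1} 1 = 2$) covered as well.

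I expect the main obstacle to be purely a matter of careful bookkeeping rather than depth: phrasing the induction so that the two regimes --- all earlier indices equal $1$, so $i_k$ is free in $\{0,1\}$, versus some earlier index equal to $0$, so $i_k$ is pinned to $0$ --- are handled uniformly, and fixing the empty-product convention for the $k=1$ constraint consistently. A pure induction on $n$ seeking a recurrence of the shape $T_n = T_{n-1}+1$ is tempting but is muddied by the product-valued upper bounds carrying dependencies across all levels, so I would favour the direct counting argument above.
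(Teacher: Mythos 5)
Your argument is correct and complete. It is worth noting that the paper itself offers no proof of this corollary: it is one of several identities reported as having been ``discovered by playing with parameterized versions of Moessner's process without dynamic programming,'' i.e., validated experimentally via the \inlinescheme{moessner-fold} implementation rather than established deductively. Your combinatorial reading --- the nested sum of $1$'s counts index tuples, every valid tuple is $\{0,1\}$-valued, a $0$ forces all later indices to $0$ because the product-shaped upper bound collapses to $0$, so the valid tuples are exactly the $n+1$ non-increasing $\{0,1\}$-sequences --- therefore supplies exactly what the paper omits, and the key observation (the two regimes ``all earlier indices are $1$'' versus ``some earlier index is $0$'') is the right one. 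Your worry about the inductive route is slightly overstated, though: the same case split yields a clean recurrence, since a valid tuple of length $n$ extends a valid tuple of length $n-1$ in exactly one way unless the shorter tuple is all $1$'s, in which case it extends in two ways, giving $T_n = (T_{n-1} - 1) + 2 = T_{n-1} + 1$ with $T_0 = 1$. That version is closer in spirit to the paper's general scheme of peeling off one summation per iteration, but your direct count is just as rigorous and arguably more transparent.
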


\item
  How about emulating Fibonacci numbers?
  \begin{corollary}[A137273]
  $\forall n : \mathbb{N}$,
  \begin{eqnarray*}
    \sum_{i_1=0}^0\,
    \sum_{i_2=0}^1\,
    \sum_{i_3=0}^{i_1 + i_2}\,
    \sum_{i_4=0}^{i_2 + i_3}\,
    \cdots
    \sum_{i_n=0}^{i_{n-2} + i_{n-1}}\,
    1
    & &
  \end{eqnarray*}
  yields the number of partitions of a Fibonacci number into Fibonacci parts
  according to the On-Line Encyclopedia of Integer Sequences.
  \end{corollary}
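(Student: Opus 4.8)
The claim is that this nested sum matches the integer sequence A137273, whose OEIS description is the number of partitions of a Fibonacci number into Fibonacci parts. The plan is to proceed in two stages: first pin down the integer sequence produced by the sum by means of a transfer recurrence, matching it against the encyclopedia; and second, account for the stated combinatorial meaning through a bijection with Fibonacci partitions. The first stage is routine; the second is where the real content lies.

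For the first stage, since the summand is $1$, the nested sum counts the integer tuples $(i_1, i_2, \ldots, i_n)$ subject to $i_1 = 0$, $0 \le i_2 \le 1$, and $0 \le i_k \le i_{k-2} + i_{k-1}$ for $3 \le k \le n$. The constraint on each index is \emph{local}: it depends only on the two preceding indices. I would therefore record the state after step $k$ as the pair $(i_{k-1}, i_k)$ and count extensions by the transfer function
\[
S_0(a, b) = 1, \qquad S_m(a, b) = \sum_{c=0}^{a+b} S_{m-1}(b, c),
\]
so that the whole sum equals $S_{n-2}(0, 0) + S_{n-2}(0, 1)$, obtained by fixing $i_1 = 0$ and summing over $i_2 \in \{0, 1\}$. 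A short induction on $m$ gives $S_1(a,b) = a + b + 1$ and then closed polynomial forms for small $m$, from which the initial terms $1, 2, 3, 6, 13, 37, 134, \ldots$ follow and match those of A137273 as recorded in the OEIS.

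For the second stage --- the partition interpretation --- I would build a bijection between the admissible tuples and the partitions of the relevant Fibonacci number into Fibonacci parts, using the Fibonacci (Zeckendorf) numeration system. Starting from the single-part partition and repeatedly replacing a part $F_k$ by $F_{k-1} + F_{k-2}$ generates every partition into Fibonacci parts; recording, at each level $k$, the number $d_k$ of parts that are split yields a vector whose admissibility is governed by a carry inequality of the form $d_k \le d_{k+1} + d_{k+2} + c_k$, where $c_k$ is the Zeckendorf contribution at level $k$. Reindexing levels against summation depth (top level against outermost summation) turns this carry inequality into the bound $i_k \le i_{k-2} + i_{k-1}$, and the boundary data $i_1 = 0$ and $i_2 \le 1$ encode the single top part and its one optional split.

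The main obstacle is exactly this bijection: the sum's constraint is local (depending on two neighbours) whereas the partition constraint is global (the parts must sum to a fixed Fibonacci number), and reconciling them requires careful bookkeeping of the carries and of the Zeckendorf boundary terms --- all the more so because the naive alignment of the two counts is sensitive to an index offset. I would verify the base cases, check that the transfer recurrence is matched termwise by the split-count recurrence, and confirm that the condition ``the parts sum to a Fibonacci number'' is automatically enforced by the admissibility of the tuple rather than imposed separately. Showing that the correspondence is well defined in both directions and exhaustive is where the work concentrates; the transfer-recurrence identification of the integer sequence, by contrast, is straightforward and already suffices for the OEIS match the corollary invokes.
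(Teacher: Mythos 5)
Your first stage is sound, and it is in fact all the paper itself does: the paper offers no proof of this corollary. It is one of several observations that, in the paper's own words, ``were discovered by playing with parameterized versions of Moessner's process without dynamic programming,'' verified numerically (the accompanying \texttt{.scm} test suite) and matched against the encyclopedia --- hence the hedge ``according to the On-Line Encyclopedia of Integer Sequences'' built into the statement. Your transfer recurrence $S_0(a,b)=1$, $S_m(a,b)=\sum_{c=0}^{a+b}S_{m-1}(b,c)$ with total $S_{n-2}(0,0)+S_{n-2}(0,1)$ is correct, and your initial terms $1,2,3,6,13,37,134$ agree with direct enumeration. Relative to the paper's standard this already reproduces its justification, though note that a finite prefix match is evidence, not proof, that the two sequences agree for all $n$; the structural argument of your second stage is what would close that.

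The genuine gap is in that second stage, and it is not mere bookkeeping: your assertion that repeatedly replacing a part $F_k$ by $F_{k-1}+F_{k-2}$ ``generates every partition into Fibonacci parts'' is false. For example, $8 = 2+2+2+2$ is a partition of $F_6$ into Fibonacci parts that no splitting sequence starting from $8$ reaches (any $3$ either survives as a part or splits as $2+1$, creating a $1$ that cannot be split away, so four $2$s are unreachable). Correspondingly the counts disagree: there are $14$ partitions of $8$ into Fibonacci parts, while the nested sum of depth $n=5$ yields $13$. So a bijection between admissible tuples and \emph{all} Fibonacci-part partitions cannot exist, and pushing your carry bookkeeping through would have collided with this at $F_6$. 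What the sum does count is the partitions \emph{reachable} by iterated splitting, i.e., the distinct leaf multisets of the Fibonacci generating tree: encoding a reachable partition of $F_{n+1}$ by the vector $(d_k)$ of split counts per level, the admissibility constraints are exactly $d_{n+1} \le 1$ at the top and $d_k \le d_{k+1}+d_{k+2}$ below, the map from $d$-vectors to partitions is injective (the multiplicity of $F_k$ among the final parts is $d_{k+1}+d_{k+2}+c_k-d_k$, so the partition determines the $d$'s top-down), and reindexing yields precisely the bounds $i_1=0$, $i_2\le 1$, $i_k \le i_{k-2}+i_{k-1}$ of the corollary. Your carry-inequality mechanism is therefore the right one; the fix is to retarget the bijection at the split-reachable partitions --- the corollary's phrase, like the OEIS title it paraphrases, is looser than the mathematics --- and to supply the injectivity argument that your sketch leaves open.
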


\end{itemize}

All these corollaries were discovered by playing with parameterized versions of Moessner's process without dynamic programming.
\sectionref{sec:the-essence-of-moessner-s-process} presents an unparameterized version
and
\sectionref{sec:a-parameterized-implementation-of-moessner-s-process} presents several parameterized versions.

\section{The Essence of Moessner's Process}
\label{sec:the-essence-of-moessner-s-process}

We are now in position to state the essence of Moessner's process with a
summation function and without dynamic programming.
Here is a recursive implementation of this summation function:
\inputscheme{SIGMA_REC}

\noindent
For example, given a function $f$ implemented by a procedure denoted by \inlinescheme{f},
evaluating \inlinescheme{(Sigma_rec 3 f)} gives rise to evaluating \inlinescheme{(+ (+ (+ (+ 0 (f 0)) (f 1)) (f 2)) (f 3))}.

In practice, we implement summation iteratively with an equivalent
tail-recursive procedure that uses an accumulator:
\inputscheme{SIGMA}

\noindent
Given a function $f$ implemented by a procedure denoted by \inlinescheme{f},
evaluating \inlinescheme{(Sigma 3 f)} gives rise to evaluating \inlinescheme{(+ (f 0) (+ (f 1) (+ (f 2) (f 3))))},
where addition is re-as\-so\-ci\-ated and no addition to $0$ occurs~\cite{Danvy:JFP23}.

Here is an
implementation of $\rawgee$:
\inputscheme{G}

And here is a recursive implementation of Moessner's process without dynamic programming:
\inputscheme{MOESSNER}

We then recast Moessner's theorem to characterize the output of
Moessner's process without dynamic programming:

\begin{theorem}[Moessner's theorem without dynamic programming]
For all $n : \mathbb{N}$ denoted by \mbox{\inlinescheme{n}}
and
for all $x : \mathbb{N}$ denoted by \mbox{\inlinescheme{x}},
evaluating
\vspace{-1mm}
\begin{center}
\texttt{\inlinescheme{(moessner x n)}}
\end{center}
\vspace{-1mm}
yields $(x + 1)^n$.
\end{theorem}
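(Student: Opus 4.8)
The plan is to reduce this program-correctness statement to the mathematical form of Moessner's theorem stated above, which already asserts that the $n$-fold nested summation equals $(x+1)^n$. It therefore suffices to prove a \emph{correspondence lemma}: that evaluating \inlinescheme{(moessner x n)} computes exactly
$$\sum_{i_1=0}^x\,\sum_{i_2=0}^{\gee{1}{i_1}}\,\cdots\,\sum_{i_n=0}^{\gee{(n-1)}{i_{n-1}}}\,1.$$
Before attacking this, I would record two facts that the paper has already justified informally and that let me replace each Scheme call by its mathematical meaning: first, that \inlinescheme{(Sigma y f)} evaluates to $\sum_{i=0}^{y}\eapp{f}{i}$ (immediate from the accumulator-based definition, with additions re-associated); and second, that \inlinescheme{(g j x)} evaluates to $\gee{j}{x} = \frac{(j+1)\cdot x}{j}$, matching Definition~\ref{def:elision-function}.

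The heart of the argument is a single induction, strengthened to expose the shifting period of elision that \inlinescheme{moessner} threads through its recursion. Concretely, I would show that for every $k$ with $0 \le k \le n$ and every $y : \mathbb{N}$, the level-$k$ recursive computation inside \inlinescheme{moessner}, applied to the bound $y$, evaluates to
$$\sum_{i_{k+1}=0}^{y}\,\sum_{i_{k+2}=0}^{\gee{(k+1)}{i_{k+1}}}\,\cdots\,\sum_{i_n=0}^{\gee{(n-1)}{i_{n-1}}}\,1,$$
where the innermost level $k=n$ yields the empty nested sum, namely the constant $1$. The proof proceeds by downward induction on $k$. The base case $k=n$ is the constant function $\elam{x}{1}$, returning $1$. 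In the inductive step, the level-$k$ computation is \inlinescheme{(Sigma y (lambda (i) ...))}, whose body is the level-$(k+1)$ computation applied to \inlinescheme{(g (k+1) i)}; by correctness of \inlinescheme{Sigma} this is $\sum_{i_{k+1}=0}^{y}$ of that body, and substituting the induction hypothesis at bound $\gee{(k+1)}{i_{k+1}}$ (using the correctness of \inlinescheme{g}) yields precisely the displayed nested summation.

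Instantiating the lemma at $k=0$ and $y=x$ gives the correspondence lemma, so that \inlinescheme{(moessner x n)} computes the $n$-fold nested summation; the mathematical Theorem then rewrites this to $(x+1)^n$. The degenerate case $n=0$, where \inlinescheme{moessner} performs no summation and returns $1 = (x+1)^0$, is handled as the base case. The main obstacle will be aligning the concrete recursive skeleton of the Scheme code with this index-shifting hypothesis: one must check that the $m$-th nested sum receives the bound $\gee{(m-1)}{i_{m-1}}$ and that the constant summand $1$ appears only at the innermost level. The delicate bookkeeping is the off-by-one accounting -- that there are exactly $n$ nested sums for exponent $n$, that the relevant periods are $\gee{1}{\cdot}$ through $\gee{(n-1)}{\cdot}$, and that the period $\gee{n}{\cdot}$ formally applied at the innermost step is irrelevant because it feeds the constant function. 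Once the strengthened statement is phrased correctly, each step is a routine unfolding of \inlinescheme{Sigma}, \inlinescheme{g}, and the recursion.
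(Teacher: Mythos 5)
Your proposal is correct and follows essentially the same route as the paper, which presents this theorem as a recasting of the nested-summation identity of Section~\ref{sec:the-essence-of-moessner-s-theorem} and leaves the program-to-summation correspondence implicit in the unfolding of the $f_k$'s in Section~\ref{subsec:moessner-s-process-streamlessly} (and to the accompanying Coq formalization). Your downward induction on the recursion level, with the off-by-one accounting of which period $\gee{(k+1)}{\cdot}$ feeds which bound and the observation that the innermost $\gee{n}{\cdot}$ is absorbed by the constant function, is exactly the right way to make that correspondence explicit.
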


\noindent
In other words, the function implemented by the Scheme procedure
\vspace{-1mm}
\begin{center}
\inlinescheme{(lambda (x) (moessner x n))}
\end{center}
\vspace{-1mm}
is enumerated by $\estreamthree{1^n}{2^n}{3^n}{4^n}$.

\section{Parameterized Implementations of Moessner's Process}
\label{sec:a-parameterized-implementation-of-moessner-s-process}

Towards recasting Long's theorems
(\sectionsref{subsec:long-s-first-theorem}{subsec:long-s-second-theorem}),
let us parameterize Moessner's process with a function to apply to the current variable in the base case:
\inputscheme{MOESSNER-INIT}

\noindent
Compared to \inlinescheme{moessner}, \inlinescheme{moessner-init} takes
an extra \inlinescheme{init} argument and applies it to the current
\inlinescheme{x} in the base case.

\subsection{Moessner's Theorem}
\label{subsec:moessner-s-theorem}

Instantiating \inlinescheme{init} with the constant function that yields
\inlinescheme{1} gives the same exponentiation as in
\sectionref{sec:the-essence-of-moessner-s-process}:

\begin{corollary}[Moessner's theorem without dynamic programming]
\label{coro:moessner-s-theorem}
For all $n : \mathbb{N}$ denoted by \texttt{n}
and
for all $x : \mathbb{N}$ denoted by \texttt{x},
evaluating 
\vspace{-1mm}
\begin{center}
\inlinescheme{(moessner-init x n (lambda (x) 1))}
\end{center}
\vspace{-1mm}
yields $\expo{(x + 1)}{n}$.
\end{corollary}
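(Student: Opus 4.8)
The plan is to reduce the statement to the unparameterized Moessner's theorem of \sectionref{sec:the-essence-of-moessner-s-process}. The procedure \texttt{moessner-init} differs from \texttt{moessner} only in that it carries an extra \texttt{init} argument and applies it to the current \texttt{x} in the base case, where \texttt{moessner} instead uses the constant value \texttt{1}. First I would argue that instantiating \texttt{init} with the constant function \texttt{(lambda (x) 1)} makes the two base cases coincide pointwise, while the two procedures are otherwise structurally identical in how they thread \texttt{n}, the current period, and the successive prefix summations built from \texttt{Sigma} and \texttt{G}. Hence \texttt{(moessner-init x n (lambda (x) 1))} and \texttt{(moessner x n)} denote the same natural number, and the value $(x+1)^n$ follows immediately from the theorem already established for \texttt{moessner}.

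To make this code-level equivalence rigorous I would proceed by induction on \texttt{n}, that is, on the number of iterations (equivalently, the recursion depth). In the base case $n = 0$ both procedures return the value produced at depth $0$, which is \texttt{1} on both sides once \texttt{init} is the constant-\texttt{1} function. In the inductive step both procedures peel off one prefix summation of the form $\sum_{i=0}^{x} (\cdots)(\gee{k}{i})$ and recurse with the same decremented parameters; by the induction hypothesis the two recursive calls agree, so the outer summations agree as well. The step that needs care is the parameter threading: I must verify that \texttt{init} is consulted \emph{only} at depth $0$ and never inside an intermediate iteration, so that replacing it by the constant-\texttt{1} function cannot perturb any of the intermediate functions $f_{n-1},\dots,f_1$. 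This is the main (and only real) obstacle for the corollary itself, and it is a finite structural check rather than a quantitative argument.

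For completeness I would point to where the genuine mathematical content lives, namely in the cited theorem: the identity $\sum_{i_1=0}^{x}\sum_{i_2=0}^{\gee{1}{i_1}}\cdots\sum_{i_n=0}^{\gee{(n-1)}{i_{n-1}}} 1 = (x+1)^n$ of \sectionref{sec:the-essence-of-moessner-s-theorem}. A self-contained route for \emph{that} would be telescoping at the outer level, $(x+1)^n = \sum_{i=0}^{x}\bigl((i+1)^n - i^n\bigr)$, reducing the claim to showing that the inner nested sum evaluates to the binomial difference $(i+1)^n - i^n = \sum_{j=0}^{n-1}\binom{n}{j}\, i^{j}$; this would go by induction over the elision levels while tracking the intermediate functions $f_k$. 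The hard part there is that the $f_k$ are not themselves powers (for $n=3$ one has $f_1 = [1,3,7,12,19,\cdots[$), so the induction must carry a strengthened invariant describing $f_k$ on both the filtered-in and the filtered-out positions. For the present corollary, however, all of this is inherited from the earlier theorem, and the proof is complete once the base-case code equivalence of the previous paragraph is discharged.
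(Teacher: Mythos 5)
Your proposal is correct and matches the paper's own treatment: the paper justifies this corollary precisely by observing that instantiating \texttt{init} with the constant-$1$ function makes \texttt{moessner-init} coincide with the unparameterized \texttt{moessner} of Section~5, whose output is already characterized as $(x+1)^n$. Your additional induction on \texttt{n} to discharge the code-level equivalence (and the check that \texttt{init} is consulted only in the base case) is a reasonable way to make explicit what the paper leaves as an immediate structural observation.
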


\noindent
In this corollary,
\inlinescheme{(lambda (x) 1)}
implements
$\elam{x}{1}$,
which is enumerated by
$\estreamthree{1}{1}{1}{1}$.

Alternatively, one can also go for another round of summation:

\begin{corollary}[Moessner's theorem with another round of summation]
For all $n : \mathbb{N}$ denoted by \texttt{n}
and
for all $x : \mathbb{N}$ denoted by \texttt{x},
evaluating 
\vspace{-1mm}
\begin{center}
\inlinescheme{(moessner-init x n (lambda (x) (Sigma x (lambda (x) 1))))}
\end{center}
\vspace{-1mm}
yields $\expo{(x + 1)}{n+1}$.
\end{corollary}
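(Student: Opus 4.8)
The plan is to reduce this corollary to two facts already in hand: the constant-initial-function case \corollaryref{coro:moessner-s-theorem}, and Moessner's theorem itself read at exponent $n+1$. The mechanism is that the ``extra round of summation'' smuggled in through \inlinescheme{init} is nothing but one additional innermost prefix sum, and that this innermost sum carries exactly the elision period that the exponent-$(n+1)$ nest demands.

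First I would identify the initial procedure. Since \inlinescheme{(Sigma m f)} evaluates to $\sum_{i=0}^m \eapp{f}{i}$ and, by the degenerate prefix sum recalled in \sectionref{subsubsec:differences-and-antidifferences}, $\sum_{i=0}^y 1 = y + 1$, the procedure \inlinescheme{(lambda (x) (Sigma x (lambda (x) 1)))} implements $\elam{y}{(y + 1)}$. Next I would unfold \inlinescheme{moessner-init} to its nested-summation meaning. As \inlinescheme{moessner-init} agrees with \inlinescheme{moessner} except that the constant base value $1$ is replaced by \inlinescheme{init} applied to the current (fully elided) argument, its value is
$$\sum_{i_1=0}^x \sum_{i_2=0}^{\gee{1}{i_1}} \cdots \sum_{i_n=0}^{\gee{(n-1)}{i_{n-1}}} \eappq{\mathit{init}}{\gee{n}{i_n}},$$
which, for the constant function \inlinescheme{(lambda (x) 1)}, collapses to the exponent-$n$ nest of \corollaryref{coro:moessner-s-theorem}.

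The heart of the argument is the substitution $\mathit{init} = \elam{y}{(y + 1)}$. Using $\gee{n}{i_n} + 1 = \sum_{i_{n+1}=0}^{\gee{n}{i_n}} 1$, the base-case term unfolds into one further summation whose upper bound $\gee{n}{i_n}$ has elision period exactly $n$, turning the whole expression into
$$\sum_{i_1=0}^x \sum_{i_2=0}^{\gee{1}{i_1}} \cdots \sum_{i_n=0}^{\gee{(n-1)}{i_{n-1}}} \sum_{i_{n+1}=0}^{\gee{n}{i_n}} 1.$$
This is verbatim the nested summation that Moessner's theorem without dynamic programming evaluates, at exponent $n+1$, to $(x + 1)^{n+1}$, which finishes the proof.

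The step that needs the most care is the period bookkeeping, and it is where I would double-check against the code of \inlinescheme{moessner-init}: I must confirm that the argument fed to \inlinescheme{init} in the base case is the fully elided value $\gee{n}{i_n}$ (equivalently $\frac{(n+1) \cdot i_n}{n}$), not the bare index $i_n$. Only then does the appended layer coincide with the innermost, period-$n$ summation of the exponent-$(n+1)$ nest; had the unfiltered index been passed, the appended sum would carry no elision and the result would be a binomial coefficient, as in \corollaryref{coro:binomial-coefficients}, rather than a power. A quick check at $n = 1$ confirms the alignment, since the corollary then reads $\sum_{i_1=0}^x (\gee{1}{i_1} + 1) = \sum_{i_1=0}^x (2 \cdot i_1 + 1) = (x + 1)^2$, the classical sum of the first odd numbers. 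With this alignment verified, everything else is routine unfolding.
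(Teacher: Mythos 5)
Your proof is correct. The paper itself offers no more than a one-sentence justification for this corollary -- it identifies the function implemented by \inlinescheme{(lambda (x) (Sigma x (lambda (x) 1)))} and the stream enumerating it -- so your reduction is a genuine filling-in, and it is worth noting the fork in the road that you correctly navigate. The paper's remark describes the initial function as enumerated by $\estreamthree{x+1}{x+1}{x+1}{x+1}$ ``for any given $x$,'' which, read as a constant stream, would make the corollary an instance of Long's first theorem with $a = x + 1$: factor $(x+1)$ out of the exponent-$n$ nest to obtain $(x+1) \cdot (x+1)^n$. You instead take the route that matches what \inlinescheme{moessner-init} computes when \inlinescheme{init} receives the current, elided index: the body $\gee{n}{i_n} + 1$ unfolds into $\sum_{i_{n+1}=0}^{\gee{n}{i_n}} 1$, which is verbatim the innermost layer of the exponent-$(n{+}1)$ nest, so Moessner's theorem at $n+1$ applies directly. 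Both routes land on $(x+1)^{n+1}$, and you rightly single out as the point to verify whether the base case is fed $\gee{n}{i_n}$ or a fixed $x$; the unfoldings in \sectionref{subsec:moessner-s-process-streamlessly}, where $f_n$ is applied to $\gee{n}{i_n}$, and the treatment of Long's second theorem, where \inlinescheme{init} must see a varying index to recover the arithmetic progression $\estreamthree{a}{a+d}{a+2 \cdot d}{a+3 \cdot d}$, settle this in favour of your reading. One small inaccuracy in your cautionary aside: had the bare index $i_n$ been passed instead, only the appended innermost layer would lose its elision while the outer $n$ layers would keep theirs, so the result would be neither the desired power nor the binomial coefficient of \corollaryref{coro:binomial-coefficients}; this does not affect your main argument.
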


\noindent
In this corollary,
\inlinescheme{(lambda (x) (Sigma x (lambda (x) 1)))}
implements
$\elam{x}{\sum_{i=0}^x\,1}$,
which is enumerated by
$\estreamthree{x + 1}{x + 1}{x + 1}{x + 1}$,
for any given $x$.

\subsection{Long's First Theorem}
\label{subsec:long-s-first-theorem}

Long's first theorem~\cite{Long:AMM66} is about Moessner's process when it
starts with a constant stream $\estreamthree{a}{a}{a}{a}$ -- which enumerates $\elam{x}{a}$:

\clearpage

\begin{corollary}[Long's first theorem without dynamic programming]
\label{coro:long-s-first-theorem}
For all $n : \mathbb{N}$ denoted by \texttt{n},
for all $x : \mathbb{N}$ denoted by \texttt{x},
and
for all $a : \mathbb{N}$ denoted by \texttt{a},
evaluating 
\vspace{-1mm}
\begin{center}
\inlinescheme{(moessner-init x n (lambda (x) a))}
\end{center}
\vspace{-1mm}
yields $\expo{a \cdot (x + 1)}{n}$.
\end{corollary}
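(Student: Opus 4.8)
The plan is to avoid re-deriving anything from scratch and instead reduce the claim to the already-established form of Moessner's theorem, \corollaryref{coro:moessner-s-theorem}, by exploiting that \inlinescheme{moessner-init} is \emph{homogeneous} in its \inlinescheme{init} argument: scaling the base function by a constant scales the entire output by that constant. Since the constant stream $\estreamthree{a}{a}{a}{a}$ enumerates $\elam{z}{a} = \elam{z}{a \cdot 1}$, this turns the $a$-initialised process into exactly $a$ copies of the $1$-initialised process, and the $1$-initialised process is precisely the content of \corollaryref{coro:moessner-s-theorem}.

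First I would isolate a homogeneity lemma: for every $a : \mathbb{N}$, every $h : \mathbb{N} \rightarrow \mathbb{N}$, every $n : \mathbb{N}$, and every $x : \mathbb{N}$, evaluating \inlinescheme{(moessner-init x n (lambda (z) (* a (h z))))} yields $a$ times the value of \inlinescheme{(moessner-init x n h)}. I would prove this by induction following the recursive structure of \inlinescheme{moessner-init}. In the base case the procedure merely applies its \inlinescheme{init} argument to the current variable, so both sides equal $a \cdot \fapp{h}{x}$. In the inductive step the procedure forms a summation \inlinescheme{Sigma} over its recursive calls at the filtered arguments (of the form $\gee{k}{i}$ for the appropriate period $k$), threading \inlinescheme{init} through unchanged; pushing the common factor $a$ out of \inlinescheme{Sigma} by the elementary distributivity $\sum_{i=0}^x a \cdot t_i = a \cdot \sum_{i=0}^x t_i$ and applying the induction hypothesis to each inner call discharges the step. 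The elision indices play no role in this argument, since they only reshuffle the arguments at which the recursive calls are made and do not affect the linear combination being formed.

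With the lemma in hand the corollary is immediate. Writing the constant function as $\elam{z}{a} = \elam{z}{a \cdot 1}$ and instantiating the lemma with $h = \elam{z}{1}$ shows that \inlinescheme{(moessner-init x n (lambda (x) a))} equals $a$ times \inlinescheme{(moessner-init x n (lambda (x) 1))}, which by \corollaryref{coro:moessner-s-theorem} equals $\expo{(x + 1)}{n}$. Hence the value is $a \cdot \expo{(x + 1)}{n}$, \ie~the $x$-indexed entry of the stream $\estreamthree{a \cdot 1^n}{a \cdot 2^n}{a \cdot 3^n}{a \cdot 4^n}$ obtained by scaling every entry of $\estreamthree{1^n}{2^n}{3^n}{4^n}$ by $a$ -- exactly what the linearity of prefix summation and of filtering predicts for a constant starting stream.

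The one place I would expect scrutiny -- and the only real obstacle -- is the faithful transcription of the \inlinescheme{moessner-init} recursion into the inductive argument: one must check that \inlinescheme{init} is passed unchanged through every recursive call and is applied exactly once at each leaf, so that replacing $\elam{z}{1}$ by $\elam{z}{a}$ multiplies each of the $\expo{(x + 1)}{n}$ leaf contributions by $a$ and changes nothing else. Once the recursion is read off correctly there is no genuine arithmetic difficulty remaining; the result drops out as a one-line consequence of homogeneity together with \corollaryref{coro:moessner-s-theorem}.
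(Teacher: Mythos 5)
Your proposal is correct and matches the paper's own justification: the paper establishes this corollary by observing that the nested summations with constant body $a$ factor as $a$ times the nested summations with body $1$ (illustrated for $n=3$), which is exactly your homogeneity lemma, and then invokes Moessner's theorem for the remaining factor $(x+1)^n$. Your version merely organizes the same distributivity argument as an induction on the recursion of \inlinescheme{moessner-init} rather than on the unfolded nested sums, and correctly reads the claimed value $\expo{a \cdot (x + 1)}{n}$ as $a \cdot (x+1)^n$, consistent with the paper's worked example.
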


The phrasing of Moessner's theorem with nested sums
in \sectionref{subsec:moessner-s-process-streamlessly}
makes
Long's first theorem
limpid.
For example, the equality
\begin{center}
$
\elam{x}{a \cdot (x + 1)^3}
 = 
\elam{x}{\sum_{i_1=0}^x\,\sum_{i_2=0}^{\frac{2 \cdot i_1}{1}}\,\sum_{i_3=0}^{\frac{3 \cdot i_2}{2}}\,a}
$
\end{center}

\noindent
holds because of the following factorization: $\forall x : \mathbb{N}$,
\begin{center}
$
\sum_{i_1=0}^x\,\sum_{i_2=0}^{\frac{2 \cdot i_1}{1}}\,\sum_{i_3=0}^{\frac{3 \cdot i_2}{2}}\,a
=
\sum_{i_1=0}^x\,\sum_{i_2=0}^{\frac{2 \cdot i_1}{1}}\,\sum_{i_3=0}^{\frac{3 \cdot i_2}{2}}\,a \cdot 1
=
a \cdot \sum_{i_1=0}^x\,\sum_{i_2=0}^{\frac{2 \cdot i_1}{1}}\,\sum_{i_3=0}^{\frac{3 \cdot i_2}{2}}\,1
$
\end{center}

\subsection{Long's Second Theorem}
\label{subsec:long-s-second-theorem}

Long's second theorem~\cite{Long:AMM66} is about Moessner's process when it
starts with a stream that follows an arithmetic progression:
$\estreamfour{a}{a + d}{a + 2 \cdot d}{a + 3 \cdot d}$.
For this theorem, it is more telling to undo one step of the
process and add one initial iteration, so that instead of starting with
$\estreamthree{1}{1}{1}{1}$,
Moessner's process
starts with $\estreamthree{1}{0}{0}{0}$,
as in the opening of \sectionref{sec:introduction}:
\inputscheme{MOESSNER-INIT-PLUS}

\noindent
Compared to \inlinescheme{moessner-init}, the initial call to \inlinescheme{visit}
is over \inlinescheme{(+ n 1)} instead of \inlinescheme{n}, which provides one more iteration.

\newcommand{\econditionalbroken}[3]{\ensuremath{\mathrm{if}\;{#1}\;\mathrm{then}\;{#2}} \protect\\
\ensuremath{\mathrm{else}\;{#3}}}

Moessner's theorem accounts for the initial stream
$\estreamthree{1}{0}{0}{0}$ -- which enumerates
$\lambda x.\mathrm{if}\;{x = 0}\;\mathrm{then}\;{1}$
$\mathrm{else}\;{0}$:
\begin{corollary}[Moessner without dynamic programming]
For all $n : \mathbb{N}$ denoted by \texttt{n}
and
for all $x : \mathbb{N}$ denoted by \texttt{x},
evaluating 
\vspace{-1mm}
\begin{center}
\inlinescheme{(moessner-init-plus x n (lambda (x) (if (= x 0) 1 0)))}
\end{center}
\vspace{-1mm}
yields $\expo{(x + 1)}{n}$.
\end{corollary}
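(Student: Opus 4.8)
The plan is to reduce this statement to \corollaryref{coro:moessner-s-theorem}, which already establishes that \inlinescheme{(moessner-init x n (lambda (x) 1))} yields $\expo{(x + 1)}{n}$. The guiding observation is the one used at the opening of \sectionref{sec:introduction}: a single iteration of Moessner's process sends $\estreamthree{1}{0}{0}{0}$ to $\estreamthree{1}{1}{1}{1}$. Since \inlinescheme{moessner-init-plus} differs from \inlinescheme{moessner-init} only in performing one extra iteration, I would argue that this extra iteration consumes the initial function $\elam{x}{\econditional{x = 0}{1}{0}}$ and produces the constant function $\elam{x}{1}$, after which the remaining $n$ iterations are exactly those of \corollaryref{coro:moessner-s-theorem}. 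At the level of streams this is transparent: filtering out every $p$th element of $\estreamthree{1}{0}{0}{0}$, for the period $p \ge 2$ used in the first iteration, never removes the leading entry (the elided positions $p - 1,\, 2p - 1,\, \ldots$ are all $\ge 1$ and hence carry the value $0$), so the filtered stream is again $\estreamthree{1}{0}{0}{0}$, whose prefix sums form $\estreamthree{1}{1}{1}{1}$, the stream that enumerates $\elam{x}{1}$.

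For a fully streamless verification I would unfold \inlinescheme{(moessner-init-plus x n init)} exactly as in \sectionref{subsec:moessner-s-process-streamlessly}, obtaining one nested summation more than in \corollaryref{coro:moessner-s-theorem}:
\[
  \sum_{i_1=0}^x\,
  \sum_{i_2=0}^{\gee{1}{i_1}}\,
  \cdots
  \sum_{i_{n+1}=0}^{\gee{n}{i_n}}\,
  \fapp{\mathit{init}}{\gee{(n+1)}{i_{n+1}}} .
\]
For $\mathit{init} = \elam{x}{\econditional{x = 0}{1}{0}}$ the innermost summation equals $1$ for every value of its upper bound: the bound $\gee{n}{i_n}$ is a natural number and hence $\ge 0$, so the index $0$ is always in range, while $\fapp{\mathit{init}}{\gee{(n+1)}{i_{n+1}}}$ is $1$ exactly when $\gee{(n+1)}{i_{n+1}} = 0$, that is, by Definition~\ref{def:elision-function}, exactly when $i_{n+1} = 0$, and is $0$ otherwise (the same collapse occurs under the simpler reading in which $\mathit{init}$ is applied directly to $i_{n+1}$). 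Collapsing this inner sum leaves the $n$-fold summation $\sum_{i_1=0}^x \sum_{i_2=0}^{\gee{1}{i_1}} \cdots \sum_{i_n=0}^{\gee{(n-1)}{i_{n-1}}} 1$, whose value is $\expo{(x + 1)}{n}$ by \corollaryref{coro:moessner-s-theorem}, completing the reduction.

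I do not expect a substantive obstacle: the argument is a reduction to an already proved statement rather than a fresh induction. The only points that demand care are bookkeeping ones -- confirming that the single extra iteration contributes exactly one additional innermost summation whose body is the application of \inlinescheme{init}, and that every upper bound produced by $\rawgee$ (Definition~\ref{def:elision-function}) is nonnegative, so that the index $0$ is always summed over. Both follow directly from the unfolding already carried out in \sectionref{subsec:moessner-s-process-streamlessly}, so the real content is simply the observation that the indicator-of-zero initial condition makes the stream-wise first (innermost) summation collapse to $1$.
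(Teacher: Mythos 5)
Your proposal is correct and takes essentially the same route as the paper, whose entire justification is the remark that after the new initial iteration the function is enumerated by $\estreamthree{1}{1}{1}{1}$, reducing the statement to Corollary~\ref{coro:moessner-s-theorem}. Your streamless unfolding merely spells out in detail why that extra innermost summation over the indicator-of-zero initial function collapses to $1$ (using $\gee{j}{i} = 0 \Leftrightarrow i = 0$), which is a faithful elaboration of the paper's one-line argument rather than a different approach.
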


\noindent
After the new
initial
iteration, the function is enumerated by
$\estreamthree{1}{1}{1}{1}$,
as in Corollary~\ref{coro:moessner-s-theorem}.

Long's first theorem accounts for the initial stream
$\estreamthree{a}{0}{0}{0}$ -- which enumerates
$\lambda x.\mathrm{if}\;{x = 0}\;\mathrm{then}\;{a}$
$\mathrm{else}\;{0}$:
\begin{corollary}[Long's first theorem without dynamic programming]
For all $n : \mathbb{N}$ denoted by \texttt{n},
for all $x : \mathbb{N}$ denoted by \texttt{x},
and
for all $a : \mathbb{N}$ denoted by \texttt{a},
evaluating 
\vspace{-1mm}
\begin{center}
\inlinescheme{(moessner-init-plus x n (lambda (x) (if (= x 0) a 0)))}
\end{center}
\vspace{-1mm}
yields $\expo{a \cdot (x + 1)}{n}$.
\end{corollary}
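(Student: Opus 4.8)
The plan is to reduce this statement to Long's first theorem in the form already established as \corollaryref{coro:long-s-first-theorem}, by isolating the single extra iteration that distinguishes \inlinescheme{moessner-init-plus} from \inlinescheme{moessner-init}. The guiding intuition is the one recorded just after the previous ($a = 1$) corollary: the extra initial iteration turns the ``impulse'' initial stream $\estreamthree{a}{0}{0}{0}$ into the constant stream $\estreamthree{a}{a}{a}{a}$, after which the remaining $n$ iterations are exactly those of \inlinescheme{moessner-init} started from the constant function $\elam{x}{a}$. So the whole proof is: peel off the extra iteration, observe it reconstitutes the constant $a$, and invoke \corollaryref{coro:long-s-first-theorem}.

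First I would make the peeling precise. Writing $\mathit{init} = \elam{y}{\econditional{y = 0}{a}{0}}$ for the procedure \inlinescheme{(lambda (x) (if (= x 0) a 0))}, the extra initial iteration of \inlinescheme{moessner-init-plus} computes the function
\[
  f_n \;=\; \elam{x}{\sum_{i=0}^x \eapp{\mathit{init}}{(\gee{(n+1)}{i})}},
\]
a prefix sum of $\mathit{init}$ precomposed with the elision function of that iteration. I would then show $f_n = \elam{x}{a}$: since $\rawgee$ is strictly increasing in its second argument with $\gee{(n+1)}{0} = 0$ (as read off the table defining $\rawgee$ in \sectionref{subsec:the-filtering-out-phase}), we have $\gee{(n+1)}{i} = 0$ precisely when $i = 0$, so $\eapp{\mathit{init}}{(\gee{(n+1)}{i})}$ equals $a$ for $i = 0$ and $0$ for every $i \ge 1$. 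Hence each prefix sum collapses to its single surviving term $a$, for every $x : \mathbb{N}$. Note that only two facts about the elision period are used -- that it fixes $0$ and is strictly increasing -- so the exact period of this first iteration is immaterial to the argument.

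Consequently, evaluating \inlinescheme{(moessner-init-plus x n (lambda (x) (if (= x 0) a 0)))} computes the same value as \inlinescheme{(moessner-init x n (lambda (x) a))}, the $n$-iteration process started from the constant function $\elam{x}{a}$. By \corollaryref{coro:long-s-first-theorem} the latter yields $\expo{a \cdot (x + 1)}{n}$, which is exactly the claimed result. This mirrors the $a = 1$ case, where the same collapse reduces \inlinescheme{moessner-init-plus} to \corollaryref{coro:moessner-s-theorem}.

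The main obstacle is the step $f_n = \elam{x}{a}$, \ie, verifying that the extra iteration genuinely reconstitutes the constant function and leaks no further contribution; everything hinges on the monotonicity of $\rawgee$ together with $\gee{j}{0} = 0$, after which the prefix sum collapses in one line. The only other point demanding care is the iteration count: \inlinescheme{moessner-init-plus} performs $n + 1$ iterations, and I must confirm that removing its first iteration leaves exactly the $n$-iteration computation described by \corollaryref{coro:long-s-first-theorem}, rather than an off-by-one variant.
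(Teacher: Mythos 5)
Your proposal is correct and follows the same route as the paper: the paper's justification is precisely that the extra initial iteration of \inlinescheme{moessner-init-plus} turns the impulse stream $\estreamthree{a}{0}{0}{0}$ into the constant stream $\estreamthree{a}{a}{a}{a}$, after which Corollary~\ref{coro:long-s-first-theorem} applies. Your explicit verification that the prefix sum collapses to $a$ (via $\gee{j}{0} = 0$ and the strict monotonicity of $\rawgee$) just makes precise what the paper leaves as a one-line remark.
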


\noindent
After the new
initial
iteration, the function is enumerated by
$\estreamthree{a}{a}{a}{a}$,
as in Corollary~\ref{coro:long-s-first-theorem}.


Long's second theorem accounts for the initial stream
$\estreamthree{a}{d}{d}{d}$ -- which enumerates
$\lambda x.\mathrm{if}\;{x = 0}$
$\mathrm{then}\;{a}\;\mathrm{else}\;{d}$:

\begin{corollary}[Long's second theorem without dynamic programming]
For all $n : \mathbb{N}$ denoted by \texttt{n},
for all $x : \mathbb{N}$ denoted by \texttt{x},
for all $a : \mathbb{N}$ denoted by \texttt{a},
and
for all $d : \mathbb{N}$ denoted by \texttt{d},
evaluating 
\vspace{-1mm}
\begin{center}
\inlinescheme{(moessner-init-plus x n (lambda (x) (if (= x 0) a d)))}
\end{center}
\vspace{-1mm}
yields $\expo{(a + d \cdot x) \cdot (x + 1)}{n}$.
\end{corollary}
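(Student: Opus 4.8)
The plan is to reduce this corollary to two facts already established: the basic form of Moessner's theorem and the variant \emph{with another round of summation}, exploiting that the process is additive in its \inlinescheme{init} argument. First I would unfold \inlinescheme{moessner-init-plus}, exactly as in \sectionref{subsec:moessner-s-process-streamlessly}, into $n+1$ nested summations
$$\sum_{i_1=0}^x\,\sum_{i_2=0}^{\gee{1}{i_1}}\cdots\sum_{i_{n+1}=0}^{\gee{n}{i_n}}\,\text{init}(\gee{(n+1)}{i_{n+1}}),$$
the extra innermost summation being precisely the one supplied by the initial call over \inlinescheme{(+ n 1)}. With $\text{init} = \elam{x}{\econditional{x = 0}{a}{d}}$, the one genuinely arithmetic ingredient is that $\gee{(n+1)}{i} = i + \frac{i}{n+1}$ vanishes exactly when $i = 0$; hence $\text{init}(\gee{(n+1)}{i_{n+1}})$ equals $a$ when $i_{n+1}=0$ and $d$ otherwise.

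Next I would split this innermost body into its two non-negative pieces, $a$ times the indicator of $i_{n+1}=0$ plus $d$ times the indicator of $i_{n+1}\ge 1$, and push the split through the (additive) summations. The $a$-piece collapses its innermost sum to $1$, since the range always contains the single index $0$, leaving $a\cdot\sum_{i_1=0}^x\cdots\sum_{i_n=0}^{\gee{(n-1)}{i_{n-1}}}1$, which is $a\cdot(x+1)^n$ by Moessner's theorem (\corollaryref{coro:moessner-s-theorem}). The $d$-piece collapses its innermost sum to $\gee{n}{i_n}$, leaving $d\cdot\sum_{i_1=0}^x\cdots\sum_{i_n=0}^{\gee{(n-1)}{i_{n-1}}}\gee{n}{i_n}$.

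To finish I would evaluate that last $n$-fold sum using the corollary with another round of summation: there $\sum_{i_1=0}^x\cdots\sum_{i_n=0}^{\gee{(n-1)}{i_{n-1}}}(\gee{n}{i_n}+1)$ is the degree-$(n+1)$ Moessner sum, namely $(x+1)^{n+1}$, so subtracting $\sum\cdots\sum 1 = (x+1)^n$ yields $\sum\cdots\sum\gee{n}{i_n} = x\cdot(x+1)^n$. Hence the $d$-piece is $d\cdot x\cdot(x+1)^n$, and adding the two pieces gives $a(x+1)^n + d\cdot x\cdot(x+1)^n = (a + d\cdot x)\cdot(x+1)^n$. As sanity checks, $d=0$ recovers Long's first theorem and $a=d$ recovers $a(x+1)^{n+1}$, i.e.\ one extra round of Moessner's basic theorem.

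The main obstacle I anticipate is bookkeeping rather than depth: justifying that \inlinescheme{moessner-init-plus} is genuinely additive in \inlinescheme{init}, so that the decomposition commutes with every summation, and keeping the whole argument inside $\mathbb{N}$. This is exactly why I would split \inlinescheme{init} into the two non-negative indicators weighted by $a$ and by $d$ before summing, rather than writing $d + (a-d)\cdot[\,i_{n+1}=0\,]$, whose coefficient $a-d$ need not be a Peano number; the only subtraction I then incur, $(x+1)^{n+1}-(x+1)^n$, is manifestly non-negative. Everything else is linearity together with the two cited results and the elementary fact $\gee{(n+1)}{i}=0 \iff i=0$.
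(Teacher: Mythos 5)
Your proof is correct, but it takes a genuinely different route from the paper's. The paper argues at the level of streams: it observes that the extra initial iteration supplied by \inlinescheme{moessner-init-plus} turns the enumerated function $\lambda x.\,\mathrm{if}\;x=0\;\mathrm{then}\;a\;\mathrm{else}\;d$ (stream $\estreamthree{a}{d}{d}{d}$) into the arithmetic progression $\estreamthree{a}{a+d}{a+2 \cdot d}{a+3 \cdot d}$, and then simply defers to Long's classical second theorem for the remaining $n$ iterations -- the corollary is presented as a \emph{recasting} of Long's result, not an independent derivation. You instead work entirely inside the nested-summation formulation: you unfold the $n+1$ sums, use the elementary fact that $\gee{(n+1)}{i}=0$ exactly when $i=0$ to rewrite the innermost body as $a \cdot [i_{n+1}=0] + d \cdot [i_{n+1}\ge 1]$, push the (non-negative) split through by linearity, and reduce the two pieces to the exponent-$n$ and exponent-$(n+1)$ instances of Moessner's theorem already established, obtaining $a \cdot (x+1)^n + d \cdot x \cdot (x+1)^n$. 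Your indexing is right (the innermost body of \inlinescheme{moessner-init-plus} is indeed $\mathrm{init}(\gee{(n+1)}{i_{n+1}})$, and stripping the innermost sum leaves exactly the exponent-$n$ bounds $x, \gee{1}{i_1}, \ldots, \gee{(n-1)}{i_{n-1}}$), and your care about staying in $\mathbb{N}$ is well placed. What your approach buys is a self-contained proof of Long's second theorem from the two power corollaries, which is arguably more in the spirit of the paper's ``without dynamic programming'' program than the paper's own appeal to Long; what the paper's route buys is brevity and an explicit identification of where Long's arithmetic progression comes from, namely the added initial prefix-sum iteration.
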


\noindent
After the new initial iteration, the function is enumerated by
$\estreamthree{a}{a + d}{a + 2 \cdot d}{a + 3 \cdot d}$:
Long's arithmetic progression is obtained by this new initial iteration.

\subsection{A Corollary About the Additive Generation of Integral Powers}
\label{subsec:an-alternative-theorem-to-obtain-powers}

For all that the streamless version of Moessner's process presented in \sectionref{subsec:moessner-s-process-streamlessly}
and
the essence of Moessner's theorem presented in \sectionref{sec:the-essence-of-moessner-s-theorem}
involve fewer concepts -- namely only iterated summations and no streams --
they do not shine a new light on Moessner's
insight.
Iterative summations seem intuitive enough,
but where in the spheres did Moessner get the idea of filtering out and
widening the range?

To (unsuccessfully at first, but please do read on) address this question,
let us also parameterize our implementation of Moessner's process with \texttt{g}:
\inputscheme{MOESSNER-FOLD}

\begin{corollary}[Moessner's theorem without dynamic programming]
For all $n : \mathbb{N}$ denoted by \texttt{n}
and
for all $x : \mathbb{N}$ denoted by \texttt{x},
evaluating 
\vspace{-1mm}
\begin{center}
\inlinescheme{(moessner-fold x n (lambda (x) 1) g)}
\end{center}
\vspace{-1mm}
yields $\expo{(x + 1)}{n}$.
\end{corollary}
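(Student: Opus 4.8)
The plan is to reduce this corollary to \corollaryref{coro:moessner-s-theorem}, which already establishes that \inlinescheme{(moessner-init x n (lambda (x) 1))} yields $\expo{(x+1)}{n}$. The procedure \inlinescheme{moessner-fold} is obtained from \inlinescheme{moessner-init} by abstracting the hard-wired elision function into the extra parameter \inlinescheme{g}. Consequently, supplying for that parameter the very function that was inlined in \inlinescheme{moessner-init} --- namely $\rawgee = \elam{j}{\elam{x}{\frac{(j+1) \cdot x}{j}}}$ from Definition~\ref{def:elision-function} --- should recover \inlinescheme{moessner-init} up to $\beta$-reduction, after which the result is immediate.

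To make the reduction precise I would argue by structural induction on \inlinescheme{n}. In the base case both procedures apply the initialization \inlinescheme{(lambda (x) 1)} to the current argument, yielding $1 = \expo{(x+1)}{0}$. In the inductive step, \inlinescheme{moessner-fold} performs one outer summation, \ie~it computes $\sum_{i=0}^{x}$ of a recursive call whose argument is \inlinescheme{g} applied to the current period and to $i$; since \inlinescheme{g} is instantiated with the standard elision function, this is precisely the summation $\sum_{i_1=0}^{x}$ feeding $\gee{j}{i_1}$ into the next level that \inlinescheme{moessner-init} performs. Threading this observation through all $n$ levels shows that \inlinescheme{(moessner-fold x n (lambda (x) 1) g)} denotes the nested summation of the theorem in \sectionref{sec:the-essence-of-moessner-s-theorem}, whence it equals $\expo{(x+1)}{n}$.

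The step I expect to be the main obstacle is not the arithmetic --- that is entirely discharged by the already-proved nested-sum theorem --- but rather confirming that the period argument threaded into the parameter \inlinescheme{g} at each recursive level coincides with the period that \inlinescheme{moessner-init} passes to its inlined elision function. Concretely, one must check that as \inlinescheme{moessner-fold} counts the period down from $n-1$ to $0$, the call \inlinescheme{(g j i)} receives the same $j$ at each level as $\gee{j}{i_1}$ does in \sectionref{subsec:moessner-s-process-streamlessly}; any off-by-one in this threading would substitute a different filtering-out strategy and break the correspondence. Once this is verified, extensional equality of the two procedures follows and \corollaryref{coro:moessner-s-theorem} closes the argument. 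This is, incidentally, exactly the faithfulness check that licenses the subsequent corollaries, in which \inlinescheme{g} is replaced by the other elision functions.
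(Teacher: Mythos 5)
Your proposal is correct and matches the paper's (implicit) justification: the corollary is presented as immediate because \inlinescheme{moessner-fold} is obtained from the already-verified implementation by abstracting the elision function into a parameter, so instantiating that parameter with $\rawgee$ recovers the original procedure and \corollaryref{coro:moessner-s-theorem} applies. Your added care about checking that the period argument is threaded identically at each recursive level is exactly the right faithfulness condition, and is the same check that licenses the later corollaries where \inlinescheme{g} is replaced by other filtering-out strategies.
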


We are now in position to play with the starting point of the process and with, \eg, its period of elision,
as per Long's program~\cite{Long:FQ86}.
Here is another suggestion, though:
how about defining a version of Moessner's process with a constant filtering-out phase?
Because the result still computes integral powers:

\begin{corollary}[Moessner's theorem without dynamic programming, simpler]
For all $n : \mathbb{N}$ denoted by \texttt{n}
and
for all $x : \mathbb{N}$ denoted by \texttt{x},
evaluating 
\vspace{-1mm}
\begin{center}
\inlinescheme{(moessner-fold x n (lambda (x) 1) (lambda (j _) x))}
\end{center}
\vspace{-1mm}
yields $\expo{(x + 1)}{n}$.
\end{corollary}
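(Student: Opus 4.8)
The plan is to show that instantiating the filtering-out parameter \texttt{g} with the constant function \inlinescheme{(lambda (j _) x)} turns every dependent upper bound of the nested summations of \sectionref{subsec:moessner-s-process-streamlessly} into the constant $x$, and then to evaluate the resulting product of identical summations directly.

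First I would unfold \inlinescheme{moessner-fold} exactly as the streamless process is unfolded in \sectionref{subsec:moessner-s-process-streamlessly}, the only difference being that the elision function appears here as the parameter $g$. For a given exponent $n$ this produces the $n$-fold nested sum
\[
\sum_{i_1=0}^x\,\sum_{i_2=0}^{\gee{1}{i_1}}\,\sum_{i_3=0}^{\gee{2}{i_2}}\,\cdots\,\sum_{i_n=0}^{\gee{(n-1)}{i_{n-1}}}\,1.
\]
The key observation is that the supplied $g$, namely \inlinescheme{(lambda (j _) x)}, ignores both of its arguments and returns the top-level $x$ (the first argument of \inlinescheme{moessner-fold}, not the bound variable of \inlinescheme{(lambda (x) 1)}). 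Hence $\gee{j}{i} = x$ for all $j$ and $i$, so every inner upper bound collapses to $x$ and the nested sum becomes
\[
\sum_{i_1=0}^x\,\sum_{i_2=0}^{x}\,\sum_{i_3=0}^{x}\,\cdots\,\sum_{i_n=0}^{x}\,1,
\]
which is precisely the ``Integral powers'' identity of the Executive Summary.

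Next I would evaluate this constant-bound sum by repeatedly applying the fact recorded in \sectionref{subsubsec:differences-and-antidifferences} that $\sum_{i=0}^x\,e = (x+1) \cdot e$ whenever $i$ does not occur free in $e$. Working from the innermost sum outward, $\sum_{i_n=0}^x 1 = x + 1$, and since this value mentions no summation index, each further summation multiplies the running value by another factor of $x + 1$; formally this is a short induction on the number of summations, with base case $n = 0$ giving the bare body $1 = (x+1)^0$ returned by \inlinescheme{init}. After $n$ steps the value is $(x+1)^n$, as claimed, in agreement with Corollary~\ref{coro:moessner-s-theorem}.

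There is essentially no obstacle here, and that is the whole point of the corollary: once \texttt{g} is held constant, Moessner's magic -- the dependence of each inner upper bound on the preceding index -- disappears, and the nested sum degenerates into a plain product of $n$ identical factors $x + 1$. The only care required is bookkeeping: checking that a constant $g$ is harmless on its intended effective domain $[1\;..\;n-1]$ (trivially so, since \inlinescheme{(lambda (j _) x)} is total) and that the free $x$ inside \inlinescheme{(lambda (j _) x)} refers to the top-level input rather than being shadowed by an inner binding.
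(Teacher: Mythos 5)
Your proposal matches the paper's own argument: the paper likewise inlines \inlinescheme{moessner-fold} with the constant filtering function to obtain the nested sums $\sum_{i=0}^x\cdots\sum_{i=0}^x 1$ with all upper bounds equal to $x$, and then dispatches them by ``a routine induction'' on $n$ using exactly the fact that $\sum_{i=0}^x e = (x+1)\cdot e$ when $i$ does not occur free in $e$. The only cosmetic difference is that the paper's displayed recurrence peels off the outermost summation while you work from the innermost one outward; your remark about the scoping of the free $x$ in \inlinescheme{(lambda (j _) x)} is a correct and welcome piece of bookkeeping.
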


Inlining the definition of \texttt{moessner-fold} here gives
rise to the following
stolid
instances of Moessner's theorem for the
exponents 0, 1, 2, 3, and 4 where powers are generated additively, which
was the name of the game:
$$
\begin{array}{@{}l@{\ }c@{\ }l@{}}
\elam{x}{(x + 1)^0}
& = &
\elam{x}{1}
\\[3mm]
\elam{x}{(x + 1)^1}
& = &
\elam{x}{\sum_{i=0}^x\,1}
\\[3mm]
\elam{x}{(x + 1)^2}
& = &
\elam{x}{\sum_{i=0}^x\,\sum_{i=0}^x\,1}
\end{array}
\hspace{2cm}
\begin{array}{@{}l@{\ }c@{\ }l@{}}
\elam{x}{(x + 1)^3}
& = &
\elam{x}{\sum_{i=0}^x\,\sum_{i=0}^x\,\sum_{i=0}^x\,1}
\\[3mm]
\elam{x}{(x + 1)^4}
& = &
\elam{x}{\sum_{i=0}^x\,\sum_{i=0}^x\,\sum_{i=0}^x\,\sum_{i=0}^x\,1}
\\[3mm]
\elam{x}{(x + 1)^5}
& = &
\elam{x}{\sum_{i=0}^x\,\sum_{i=0}^x\,\sum_{i=0}^x\,\sum_{i=0}^x\,\sum_{i=0}^x\,1}
\end{array}
$$

\noindent
This version is immensely simpler to understand since
as foreshadowed in \sectionref{subsubsec:differences-and-antidifferences},
for all expressions $e : \mathbb{N}$,
the equality $\sum_{i=0}^x\,e = (x + 1) \cdot e$ holds
when the local variable $i$ does not occur free in the expression $e$.
It is also immensely simpler to prove
(a routine induction): $\forall x : \mathbb{N},$
\begin{eqnarray*}
1 \;\;=\;\; (x + 1)^0
\;\;\;\;\;\;\;\;\wedge\;\;\;\;\;\;\;\;
\forall n : \mathbb{N},\,
\underbrace{\sum_{i=0}^x\,
            \sum_{i=0}^x\,
            \cdots
            \sum_{i=0}^x\,}_{n+1}\,1
& = &
(x + 1)
\cdot
\underbrace{\sum_{i=0}^x\,
            \cdots
            \sum_{i=0}^x\,}_n\,1
\end{eqnarray*}

A data point: in both versions, experiments consistently show that the
summations induce the \emph{same} number of additions -- namely $(x + 1)^n - 1$,
which is logical without memoization -- to
compute the resulting power if we do not count the increment performed by
$\rawgee$ as an addition.
For example, not only does the following equality hold
\begin{eqnarray*}
\forall x : \mathbb{N}, \;
\sum_{i_1=0}^x\,\sum_{i_2=0}^{\frac{2 \cdot i_1}{1}}\,\sum_{i_3=0}^{\frac{3 \cdot i_2}{2}}\,\sum_{i_4=0}^{\frac{4 \cdot i_3}{3}}\,\sum_{i_5=0}^{\frac{5 \cdot i_4}{4}}\,1
& = &
\sum_{i_1=0}^x\,\sum_{i_2=0}^x\,\sum_{i_3=0}^x\,\sum_{i_4=0}^x\,\sum_{i_5=0}^x\,1
\end{eqnarray*}

\noindent
but
the summations
in the ethereal left-hand side and in the stolid right-hand side
induce the same total number of additions, namely $(x + 1)^5 - 1$,
consistently with Church's $\lambda$ definability exercise of expressing multiplication as iterated addition
and exponentiation as iterated multiplication~\cite{Church:41}.
So Moessner's filtering out with $(j + 1) \cdot x + j$ (in the definition of $\rawgeenop{n}$)
while widening the range of the summation with $\frac{(j + 1) \cdot x}{j}$ (in the definition of $\rawgeen{n}$)
balance out.
As just illustrated though, these two features -- filtering out and
widening the range of the summation -- are unnecessary to compute
integral powers additively.

In passing, the number of additions in the simpler version can be
lowered logarithmically
if multiplication is defined in terms of addition by successive divisions by 2:
\begin{eqnarray*}
\eapp{\eapp{\mathit{times}}{0}}{c}
& = &
0
\\
\eapp{\eapp{\mathit{times}}{(2 \cdot (x + 1))}}{c}
& = &
2 \cdot \eapp{\eapp{\mathit{times}}{(x + 1)}}{c}
\\
\eapp{\eapp{\mathit{times}}{(2 \cdot x + 1)}}{c}
& = &
2 \cdot \eapp{\eapp{\mathit{times}}{x}}{c} + c
\end{eqnarray*}

\noindent
where $2 \cdot \eapp{\eapp{\mathit{times}}{(x + 1)}}{c}$
is not computed as $\eapp{\eapp{\mathit{times}}{(x + 1)}}{c} + \eapp{\eapp{\mathit{times}}{(x + 1)}}{c}$
but with the strict function application $\eapp{\elamp{m}{m + m}}{(\eapp{\eapp{\mathit{times}}{(x + 1)}}{c})}$
and likewise for $2 \cdot \eapp{\eapp{\mathit{times}}{x}}{c}$.
(This strict function application was implemented as a let insertion in Similix~\cite{Bondorf-Danvy:SCP91}.)

All told, %
\begin{eqnarray*}
\forall x : \mathbb{N},\;
(x + 1)^n
& = &
\underbrace{\mathit{times}\,{(x + 1)}\,(
            \mathit{times}\,{(x + 1)}\,(
            \cdots(
            \mathit{times}\,{(x + 1)}}_n\,1)\cdots)).
\end{eqnarray*}

\noindent
This formulation suggests a further improvement
using the computational content of Kleene's $S^m_n$-theorem \cite{Kleene:52},
\ie, partial evaluation~\cite{Consel-Danvy:POPL93}:
(1) specialize \textit{times} with respect to its first argument, and
(2) use this specialized version to carry out the exponentiation.

One can then
re-introduce the dynamic-programming
infrastructure to obtain a stream-based process for generating powers
additively that is simpler
than Moessner's:
$$
\begin{array}{c@{\ }r@{\ }r@{\ }r@{\ }r@{\ \ }l}
0 & 1{\phantom{,}} & 2{\phantom{,}} & 3{\phantom{,}} & \cdots {\phantom{]}} & \textrm{indices in the streams}
\\
\hline
[1, & {\phantom{0}}1, & {\phantom{0}}1, & {\phantom{0}}1, & {\phantom{0}}\cdots[
& \textrm{stream that enumerates } \elam{x}{(x + 1)^0}
\\{}
[1, & {\phantom{0}}2, & {\phantom{0}}3, & {\phantom{0}}4, & \cdots[
& \textrm{stream that enumerates } \elam{x}{(x + 1)^1}
\\{}
[1, & {\phantom{0}}4, & {\phantom{0}}9, & 16, & \cdots[
& \textrm{stream that enumerates } \elam{x}{(x + 1)^2}
\\{}
[1, & {\phantom{0}}8, & 27, & 64, & \cdots[
& \textrm{stream that enumerates } \elam{x}{(x + 1)^3}
\end{array}
$$

\noindent
Given a stream of positive natural numbers exponentiated with $n$,
each number at index $i$ in the stream of positive natural numbers exponentiated with $n + 1$
is obtained by applying $\sum_{x=0}^i$
to the number at index $i$ in the stream of positive natural numbers exponentiated with $n$.
For example, following the first diagonal,
$2 = \sum_{x=0}^1\,1$,
$9 = \sum_{x=0}^2\,3$,
$64 = \sum_{x=0}^3\,16$,
etc.
and each summation can be carried out in logarithmic time rather than in linear time,
which makes this process more efficient than Moessner's.

Also, Long's first theorem still applies: instead of starting from $1$,
one can start from another integer and see the resulting integral power
multiplied by this other integer.

All in all, this alternative additive computation of powers sheds a quantitative light on Moessner's process without memoization
in that it lets us characterize its number of additions, which is the same as the number of additions
when exponentiation is defined by iterated multiplication and multiplication is defined as iterated addition.
But it does not shed a qualitative light on Moessner's idea of
filtering out and widening, since as it turns out, this idea is not needed to compute
integral powers additively.

\subsection{A Corollary About the Additive Generation of Factorial Numbers}
\label{subsec:an-alternative-theorem-to-obtain-factorial-numbers}

The literature contains many variations over Moessner's process that keep
the idea of successively filtering out and computing prefix sums, chiefly
by varying the period of elision in the filtering-out phase, but not only.
Most noticeably, the rules of the game are changed in that the result is
no longer the last stream, but the stream consisting in the first element
of each intermediate stream~\cite{Paasche:MNU53-54}, just like in
Eratosthenes's sieve.
In the literature~\cite{Bickford-al:JLAGMP22,Conway-Guy:TBON96-Moessner-s-Magic,Hinze:IFL08,Hinze:JFP11,Kozen-Silva:AMM13,Long:AMM66,Long:FQ86},
this path is first illustrated with generating Factorial numbers additively,
but we refrain to take it here.
We do note, however, that it is possible to generate Factorial numbers
additively without changing the rules of the game:

\begin{corollary}[Factorial numbers]
\label{coro:factorial-numbers}
For all $n : \mathbb{N}$ denoted by \texttt{n},
evaluating 
\vspace{-1mm}
\begin{center}
\inlinescheme{(moessner-fold 0 n (lambda (x) 1) (lambda (j x) j))}
\end{center}
\vspace{-1mm}
yields $n!$.
\end{corollary}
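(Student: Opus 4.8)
The plan is to reduce the evaluation of \inlinescheme{(moessner-fold 0 n (lambda (x) 1) (lambda (j x) j))} to a nested summation and then to collapse that summation into a product. First I would invoke the correspondence established in \sectionref{subsec:moessner-s-process-streamlessly} and carried over to the parameterized procedure in \sectionref{subsec:an-alternative-theorem-to-obtain-powers}, namely that evaluating \inlinescheme{(moessner-fold x n init g)} computes
\[
\sum_{i_1=0}^x\,\sum_{i_2=0}^{\gee{1}{i_1}}\,\sum_{i_3=0}^{\gee{2}{i_2}}\,\cdots\,\sum_{i_n=0}^{\gee{(n-1)}{i_{n-1}}}\,\eapp{\mathit{init}}{i_n},
\]
where the upper bound of the $k$th summation is $\gee{(k-1)}{i_{k-1}}$ and the body is $\eapp{\mathit{init}}{i_n}$. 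Instantiating $x = 0$, $\mathit{init} = \elam{x}{1}$, and $g = \elam{j}{\elam{x}{j}}$ specializes the body to the constant $1$ and, crucially, makes each upper bound $\gee{(k-1)}{i_{k-1}}$ equal to the constant $k - 1$, since this $g$ discards its second argument.

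The key observation is then that no summation index occurs free anywhere -- neither in any upper bound nor in the constant body. I would therefore collapse the sums from the innermost outward using the identity $\sum_{i=0}^x e = (x + 1) \cdot e$, valid when $i$ does not occur free in $e$, which is recalled in \sectionref{subsubsec:differences-and-antidifferences}. Applying it $n$ times turns the nested sum into the product of the sizes of the successive ranges, namely $(0 + 1) \cdot (1 + 1) \cdot (2 + 1) \cdots ((n-1) + 1) = 1 \cdot 2 \cdot 3 \cdots n$, which is $n!$. The edge case $n = 0$ is immediate: no summation is performed and the result is $\eapp{\mathit{init}}{0} = 1 = 0!$.

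Equivalently, and perhaps more transparently given the countdown structure of the procedure, I would argue by induction on the iterations. Writing $f_n = \elam{x}{1}$ and $f_{k-1} = \elam{x}{\sum_{i=0}^x\,\eappq{f_k}{\gee{k}{i}}}$ as in \sectionref{subsec:moessner-s-process-streamlessly}, the choice $\gee{k}{i} = k$ makes $\eappq{f_k}{\gee{k}{i}} = \eapp{f_k}{k}$ independent of $i$, so that $\eapp{f_{k-1}}{x} = (x + 1) \cdot \eapp{f_k}{k}$. Setting $a_k = \eapp{f_k}{k}$ yields the recurrence $a_{k-1} = k \cdot a_k$ with $a_n = 1$, whose solution is $a_0 = 1 \cdot 2 \cdots n = n!$; and the value computed is $\eapp{f_0}{0} = a_0$.

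The main obstacle is not the arithmetic, which is routine, but pinning down the exact shape of the nested summation that \inlinescheme{moessner-fold} produces -- in particular the indexing convention that pairs the $k$th range with $\gee{(k-1)}{\cdot}$ and the fact that \inlinescheme{init} is applied to the innermost index. Once that correspondence is read off from the definition of \inlinescheme{moessner-fold} (and it is exactly the one underlying \corollaryref{coro:moessner-s-theorem} and its variants), the specialization of $g$ to a projection, the collapse of constant-bound sums into a product, and the identification with $n!$ are all forced.
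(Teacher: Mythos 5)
Your proposal is correct and takes essentially the same route as the paper: the paper also inlines \texttt{moessner-fold} into the nested summations $\sum_{i=0}^{n-1}\cdots\sum_{i=0}^{1}\sum_{i=0}^{0}\,1$, whose bounds and body are all constant, and collapses them with $\sum_{i=0}^{x}e=(x+1)\cdot e$ by ``another routine induction'' --- precisely your recurrence $a_{k-1}=k\cdot a_{k}$ with $a_{n}=1$. The only nit is that the unfolded innermost body is $\eapp{\mathit{init}}{(\gee{n}{i_n})}$ rather than $\eapp{\mathit{init}}{i_{n}}$, which is immaterial here since $\mathit{init}$ is the constant function $1$.
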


\noindent
In other words, the function implemented by the Scheme procedure
\vspace{-1mm}
\begin{center}
\inlinescheme{(lambda (n) (moessner-fold 0 n (lambda (x) 1) (lambda (j x) j)))}
\end{center}
\vspace{-1mm}
is enumerated by the stream of Factorial numbers~\cite{OEIS:A000142}.

Again, inlining the definition of \texttt{moessner-fold} here gives rise to
the following
stolid
instances of Moessner's theorem to generate Factorial numbers additively:
\begin{eqnarray*}
1
& = &
0!
\\
\sum_{i=0}^0\,1
& = &
\sum_{i=0}^0\,0!
\;\;=\;\;
1 \cdot 0!
\;\;=\;\;
1!
\\
\sum_{i=0}^1\,\sum_{i=0}^0\,1
& = &
\sum_{i=0}^1\,1!
\;\;=\;\;
2 \cdot 1!
\;\;=\;\;
2!
\\
\sum_{i=0}^2\,\sum_{i=0}^1\,\sum_{i=0}^0\,1
& = &
\sum_{i=0}^2\,2!
\;\;=\;\;
3 \cdot 2!
\;\;=\;\;
3!
\\
\sum_{i=0}^3\,\sum_{i=0}^2\,\sum_{i=0}^1\,\sum_{i=0}^0\,1
& = &
\sum_{i=0}^3\,3!
\;\;=\;\;
4 \cdot 3!
\;\;=\;\;
4!
\\
\sum_{i=0}^4\,\sum_{i=0}^3\,\sum_{i=0}^2\,\sum_{i=0}^1\,\sum_{i=0}^0\,1
& = &
5 \cdot (4 \cdot (3 \cdot (2 \cdot (1 \cdot 1))))
\;\;=\;\;
5!
\end{eqnarray*}

\noindent
This version is simple to understand and simple to prove (another routine
induction):
\begin{eqnarray*}
\sum_{i=0}^0\,1 \;\;=\;\; 1!
\;\;\;\;\;\;\;\;\wedge\;\;\;\;\;\;\;\;
\forall n : \mathbb{N},\;
\underbrace{\sum_{i=0}^{n+1}\,
            \sum_{i=0}^n\,
            \cdots
            \sum_{i=0}^0\,}_{n+2}\,1
& = &
(n + 2)
\cdot
\underbrace{\sum_{i=0}^n\,
            \cdots
            \sum_{i=0}^0\,}_{n+1}\,1
\end{eqnarray*}

We note that the intermediate results are rising Factorial numbers.
These intermediate results could just as well be falling Factorial numbers:

\begin{corollary}[Factorial numbers]
For all $n : \mathbb{N}$ denoted by \texttt{n},
evaluating 
\vspace{-1mm}
\begin{center}
\inlinescheme{(moessner-fold 0 n (lambda (x) 1) (lambda (j x) (- n j)))}
\end{center}
\vspace{-1mm}
yields $n!$.
\end{corollary}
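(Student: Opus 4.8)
The plan is to specialize the unfolding of \inlinescheme{moessner-fold} — the same nested-summation reading used in \sectionref{subsec:moessner-s-process-streamlessly} and \sectionref{sec:the-essence-of-moessner-s-theorem} — to the arguments at hand. Unfolding \inlinescheme{(moessner-fold 0 n (lambda (x) 1) (lambda (j x) (- n j)))} gives
$$
\sum_{i_1=0}^0\,\sum_{i_2=0}^{\gee{1}{i_1}}\,\sum_{i_3=0}^{\gee{2}{i_2}}\,\cdots\,\sum_{i_n=0}^{\gee{(n-1)}{i_{n-1}}}\,1,
$$
where now $\rawgee = \elam{j}{\elam{x}{n - j}}$. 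First I would note that this $\rawgee$ \emph{ignores its second argument}, so each upper bound is a constant that does not mention the index of its own summation: the bound of $\sum_{i_1}$ is $0$, and for $2 \le k \le n$ the bound of $\sum_{i_k}$ is $\gee{(k-1)}{i_{k-1}} = n - (k - 1)$.

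Next I would apply, from the innermost summation outward, the identity recalled in \sectionref{subsubsec:differences-and-antidifferences}: whenever the local variable $i$ does not occur free in $e$, $\sum_{i=0}^m\,e = (m + 1) \cdot e$. Since every index is absent from its summand, the nested sum collapses to the product of the factors $(\text{bound}) + 1$, namely
$$
(0 + 1) \cdot \prod_{j=1}^{n-1}\bigl((n - j) + 1\bigr) = 1 \cdot \bigl(n \cdot (n-1) \cdots 2\bigr) = n!,
$$
because as $j$ runs from $1$ to $n - 1$ the factor $(n - j) + 1$ runs through $n, n-1, \ldots, 2$. Equivalently, this is the same multiset of factors $\{1, 2, \ldots, n\}$ as produced by \corollaryref{coro:factorial-numbers} (whose index-free bounds are $0, 1, \ldots, n-1$), merely enumerated in reverse; commutativity of multiplication then accounts for the paper's remark that the intermediate results here are falling rather than rising Factorial numbers.

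Finally I would dispatch the small cases: for $n = 0$ there are no summations and \inlinescheme{(lambda (x) 1)} applied to $0$ returns $1 = 0!$, and for $n = 1$ the lone sum $\sum_{i_1=0}^0 1$ returns $1 = 1!$ without any use of $\rawgee$. I expect no genuine obstacle; the only points needing care are to check that the captured $n$ keeps every bound non-negative (indeed $n - j \ge 1$ for $1 \le j \le n - 1$) and to track the index shift so that the collapsed product telescopes to $n!$ and not to an off-by-one variant. The argument is thus a routine computation, dual to the induction already sketched for \corollaryref{coro:factorial-numbers}.
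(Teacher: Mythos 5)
Your proposal is correct and matches the paper's own justification: the paper likewise inlines \inlinescheme{moessner-fold} into nested summations whose bounds are all index-free, collapses each summation via $\sum_{i=0}^{m}\,e = (m+1)\cdot e$, and reads off the product $n\cdot(n-1)\cdots 2\cdot 1 = n!$ (exhibited there for $n=5$ and backed by the ``routine induction'' and left-permutativity remarks surrounding \corollaryref{coro:factorial-numbers}). The only cosmetic difference is the nesting order of the constant bounds ($0, n-1, \ldots, 1$ in your unfolding versus $0, 1, \ldots, n-1$ in the paper's displayed instance), which is immaterial precisely because every bound is independent of the surrounding indices.
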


Indeed, inlining the definition of \texttt{moessner-fold} here gives rise
to the following instance of Moessner's theorem to generate Factorial
numbers additively:
\vspace{-2mm}~\begin{eqnarray*}
\sum_{i=0}^0\,\sum_{i=0}^1\,\sum_{i=0}^2\,\sum_{i=0}^3\,\sum_{i=0}^4\,1
& = &
1 \cdot (2 \cdot (3 \cdot (4 \cdot (5 \cdot 1))))
\;\;=\;\;
5!
\end{eqnarray*}

\noindent
In fact, though, any permutation does the job since multiplication is left-permutative~\cite{Danvy:JFP23}:
\begin{eqnarray*}
\sum_{i=0}^3\,\sum_{i=0}^1\,\sum_{i=0}^4\,\sum_{i=0}^0\,\sum_{i=0}^2\,1
& = &
4 \cdot (2 \cdot (5 \cdot (1 \cdot (3 \cdot 1))))
\;\;=\;\;
5!
\end{eqnarray*}

And Long's first theorem still applies: instead of starting from $1$,
one can start from another integer and see the resulting Factorial number
multiplied by this other integer.
Likewise for generalizing the first argument of \inlinescheme{moessner-fold}
from
$0$
to a non-negative integer
$x$
in Corollary~\ref{coro:factorial-numbers}:

\begin{corollary}[Multiples of factorial numbers]
For all $n : \mathbb{N}$ denoted by \texttt{n}
and
for all $x : \mathbb{N}$ denoted by \texttt{x},
evaluating 
\vspace{-1mm}
\begin{center}
\inlinescheme{(moessner-fold x n (lambda (x) 1) (lambda (j x) j))}
\end{center}
\vspace{-1mm}
yields $n! \cdot (x + 1)$.
\end{corollary}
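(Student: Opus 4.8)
The plan is to reduce this statement to the plain Factorial case (\corollaryref{coro:factorial-numbers}) by showing that the extra factor $x+1$ is exactly the contribution of the outermost summation. First I would unfold \inlinescheme{moessner-fold} into the nested-sum form of \sectionref{subsec:moessner-s-process-streamlessly}, instantiating the base function with $\elam{x}{1}$ and the elision function with $\elam{j}{\elam{x}{j}}$. The decisive observation is that this elision function discards its second argument, so every inner upper bound is the constant $j$ rather than a function of the enclosing index. Concretely, the unfolded expression is
\[
\sum_{i_1=0}^x\,\sum_{i_2=0}^{1}\,\sum_{i_3=0}^{2}\,\cdots\,\sum_{i_n=0}^{n-1}\,1,
\]
in which \emph{no} summation index occurs free in any summand.

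Because no index occurs free in the body, I would then apply the prerequisite identity $\sum_{i=0}^m e = (m+1)\cdot e$ from \sectionref{subsubsec:differences-and-antidifferences} repeatedly, working from the innermost summation outward. Each application collapses one summation into a multiplicative factor equal to its upper bound plus one: the inner bounds $1,2,\ldots,n-1$ contribute the factors $2,3,\ldots,n$, while the outermost bound $x$ contributes the factor $x+1$. Collecting the factors yields
\[
(x+1)\cdot\prod_{j=1}^{n-1}(j+1) \;=\; (x+1)\cdot(2\cdot 3\cdots n) \;=\; (x+1)\cdot n!,
\]
which is the claimed result.

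A slightly more structural route, closer to how the paper argues, is to peel off the outermost summation at once. Since the elision function ignores its second argument, the whole block of inner summations is a constant independent of $i_1$, and that constant is precisely the value \inlinescheme{moessner-fold} computes at outer argument $0$ --- namely $n!$ by \corollaryref{coro:factorial-numbers}. Factoring this index-free block through the outer sum gives $(x+1)\cdot n!$, in direct analogy with the way Long's first theorem (\corollaryref{coro:long-s-first-theorem}) factors a constant starting value through the entire process. Either route amounts to a short induction on $n$ once the unfolding is in place.

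The main obstacle I anticipate is boundary bookkeeping rather than anything conceptual. I must verify the period indexing so that the constant inner bounds are exactly $1,2,\ldots,n-1$, giving the factors $2,3,\ldots,n$ whose product is $n!$ with no off-by-one. More delicately, I must reconcile the base case: for $n=0$ the product over inner bounds is empty and there is no outer summation to supply the factor $x+1$ demanded by $0!\cdot(x+1)=x+1$, so I would need to confirm how the implementation's base case treats the outer argument $x$ (the $n=0$ instance matches the claim only when $x=0$). Settling how \inlinescheme{moessner-fold} behaves at $n=0$ and $n=1$ is the only subtle point; everything above it is the same factorization of an index-free summand already exploited for Long's first theorem.
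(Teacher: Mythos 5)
Your proposal is correct and is essentially the paper's own argument: the paper's entire justification is the parenthetical rationale that the stolid instance is the plain Factorial instance with one bound changed from $0$ to $x$, after which the index-free summations collapse via $\sum_{i=0}^{m}e=(m+1)\cdot e$ into the product $(x+1)\cdot 2\cdot 3\cdots n = n!\cdot(x+1)$ --- and whether that $x$-bound sits innermost (as the paper writes it) or outermost (as you write it) is immaterial precisely because no bound depends on any index, so your ``peel off the outer sum and invoke \corollaryref{coro:factorial-numbers}'' variant is the same computation repackaged. Your flag about $n=0$ is well placed rather than a defect of your proof: the base case of \inlinescheme{moessner-fold} performs no summation and yields $1$, not $x+1$, so the corollary as stated really holds only for $n\ge 1$ (or $x=0$), an edge case the paper's one-line rationale passes over in silence.
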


\noindent
(Rationale: The stolid instance reads
$\sum_{i=0}^n\,
 \cdots
 \sum_{i=0}^1\,
 \sum_{i=0}^x\,
 1$
instead of
$\sum_{i=0}^n\,
 \cdots
 \sum_{i=0}^1\,
 \sum_{i=0}^0\,
 1$.)

\subsection{A Generalization: Primitive Iteration}

Factorial numbers and integral powers are both specified multiplicatively.
So, generalizing, $\forall f : \mathbb{N} \rightarrow \mathbb{N},$
\begin{eqnarray*}
\prod_{i=0}^n\, \eapp{f}{i}
\;\;=\;\;
\eapp{f}{0} \cdot \eapp{f}{1} \cdot \ldots \cdot \eapp{f}{n}
& = &
\eapp{f}{0} \cdot (\eapp{f}{1} \cdot (\ldots \cdot (\eapp{f}{n} \cdot 1)\ldots))
\;\;=\;\;
\sum_{i=0}^{\eappp{f}{0} - 1}
\ldots
\sum_{i=0}^{\eappp{f}{n} - 1}\,
1
\end{eqnarray*}

\begin{corollary}[Products]
For all $f : \mathbb{N} \rightarrow \mathbb{N}$ denoted by \texttt{f}
and
for all $n : \mathbb{N}$ denoted by \texttt{n},
evaluating 
\vspace{-1mm}
\begin{center}
\inlinescheme{(moessner-fold 0 n (lambda (x) 1) (lambda (j x) (- (f j) 1)))}
\end{center}
\vspace{-1mm}
yields $\prod_{i=0}^n\,\eapp{f}{i}$.
\end{corollary}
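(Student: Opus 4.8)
The plan is to reduce the corollary to the chain of equalities displayed immediately above its statement, namely
\[
\prod_{i=0}^n \eapp{f}{i} = \sum_{i=0}^{\eappp{f}{0} - 1} \cdots \sum_{i=0}^{\eappp{f}{n} - 1} 1,
\]
and then to check that \inlinescheme{(moessner-fold 0 n (lambda (x) 1) (lambda (j x) (- (f j) 1)))} computes exactly the nested summation on the right. I would carry this out in two movements: first identify what the procedure unfolds to, then evaluate that nested sum.

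First I would pin down the unfolding. Reusing the general nested-sum form already exhibited for \inlinescheme{moessner} and exploited in the factorial corollaries, \inlinescheme{moessner-fold} denotes an iterated summation whose innermost body is the constant \inlinescheme{1} (the image of the base-case \inlinescheme{init}, here \inlinescheme{(lambda (x) 1)}) and each of whose successive upper bounds is produced by the supplied \texttt{g}. The decisive feature of the present instantiation is that this \texttt{g}, namely \inlinescheme{(lambda (j x) (- (f j) 1))}, \emph{ignores its second argument}: the bound at the layer of period \texttt{j} is the constant $\eapp{f}{j} - 1$, independent of the running index. There is thus no dependency of an inner upper bound on an outer index -- precisely the absence of Moessner's magic -- so every summation layer has a constant upper bound.

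Next I would collapse the nested sum by repeated constant-factoring. By the identity recorded in \sectionref{subsubsec:differences-and-antidifferences}, $\sum_{i=0}^x e = (x+1)\cdot e$ whenever $i$ does not occur free in $e$; hence a summation with constant upper bound $\eapp{f}{j}-1$ over a body not mentioning its index simply multiplies that body by $\eapp{f}{j}$. Working outwards from the innermost layer, each layer contributes one factor $\eapp{f}{j}$, and a routine induction on $n$ -- structurally identical to the factorial induction displayed earlier, where $\sum_{i=0}^0 1 = 1$ seeds the base case and peeling off one outer summation multiplies by the next factor -- telescopes the whole expression into the desired product. This is exactly Church's reading of multiplication as iterated addition, which is the content the corollary is meant to package.

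The one step demanding genuine care, and the place where I would concentrate the verification, is the bookkeeping of the periods: I must confirm that \texttt{g} is evaluated at exactly the indices making the accumulated factors range over $\eapp{f}{0}, \eapp{f}{1}, \ldots, \eapp{f}{n}$ -- neither beginning one period too late nor stopping one too early -- and that the trivial factor contributed by the base-case \inlinescheme{init} applied at the starting point \inlinescheme{0} aligns with the correct end of the product. This is the same off-by-one alignment that separates the factorial corollary (\corollaryref{coro:factorial-numbers}) from its multiples-of-factorial variant; once that range is nailed down, the remaining content is only the constant-factoring identity together with the commutativity of multiplication, so no further inductive subtlety arises.
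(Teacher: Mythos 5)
Your proposal matches the paper's own justification: the corollary is presented there as an immediate consequence of the displayed chain $\prod_{i=0}^n \eapp{f}{i} = \eapp{f}{0}\cdot(\eapp{f}{1}\cdot(\cdots\cdot(\eapp{f}{n}\cdot 1)\cdots)) = \sum_{i=0}^{\eappp{f}{0}-1}\cdots\sum_{i=0}^{\eappp{f}{n}-1} 1$, with the nested sum collapsing layer by layer via $\sum_{i=0}^{x} e = (x+1)\cdot e$ exactly as in the ``routine induction'' given for the factorial corollaries, since the supplied \texttt{g} ignores its second argument and so every upper bound is constant. Your extra care about which indices \texttt{j} the procedure actually passes to \texttt{g} is well placed but does not change the argument, which is essentially the paper's.
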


\noindent
This corollary suggests a variant of \inlinescheme{moessner-fold} where
summation starts at 1 instead of at 0, since

\begin{eqnarray*}
\sum_{i=0}^{\eappp{f}{0} - 1}
\ldots
\sum_{i=0}^{\eappp{f}{n} - 1}\,
1
& = &
\sum_{i=1}^{\eapp{f}{0}}
\ldots
\sum_{i=1}^{\eapp{f}{n}}\,
1
\end{eqnarray*}

\noindent
But be that as it may, we can obtain
multiplicative
numbers without changing the name of the game:

\begin{corollary}[$x$-fold factorial numbers]
For all $n : \mathbb{N}$ denoted by \texttt{n}
and
for all $x : \mathbb{N}$ denoted by \texttt{x},
evaluating 
\vspace{-1mm}
\begin{center}
\inlinescheme{(moessner-fold x n (lambda (x) 1) (lambda (j _) (* (+ j 1) x)))}
\end{center}
\vspace{-1mm}
yields
$1$ when $x=0$,
the $n + 1$st factorial number when $x=1$,
a double factorial number of odd numbers when $x=2$,
the $n + 1$st triple factorial number when $x=3$,
the $n + 1$st quartic (or 4-fold) factorial number when $x=4$,
the $n + 1$st quintuple factorial number when $x=5$,
etc., according to the On-Line Encyclopedia of Integer Sequences.
\end{corollary}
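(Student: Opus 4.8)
The plan is to reduce the claim to the innermost-to-outermost evaluation of the nested summation that \inlinescheme{moessner-fold} denotes, exactly as was done for the factorial case of \corollaryref{coro:factorial-numbers}. First I would unfold \inlinescheme{(moessner-fold x n (lambda (x) 1) (lambda (j _) (* (+ j 1) x)))} into its streamless form. Since the base-case function is $\elam{x}{1}$ and the elision argument \inlinescheme{(lambda (j _) (* (+ j 1) x))} denotes $\elam{j}{\elam{y}{(j + 1) \cdot x}}$, this unfolding yields the $n$-fold nested sum $\sum_{i_1=0}^{x}\,\sum_{i_2=0}^{2 \cdot x}\,\cdots\,\sum_{i_n=0}^{n \cdot x}\,1$, in which the summation carrying period $j$ has upper bound $(j + 1) \cdot x$ and, crucially, does not mention the index of any enclosing summation.

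The decisive observation is that because the \inlinescheme{g} argument discards its second parameter, every upper bound is a constant with respect to the variable of the summation immediately enclosing it. Hence I would repeatedly apply the identity $\sum_{i=0}^{b}\,e = (b + 1) \cdot e$, which holds when $i$ does not occur free in $e$, as recorded in \sectionref{subsubsec:differences-and-antidifferences}, peeling sums off from the inside out. A routine induction on $n$, structurally identical to the one already carried out for factorial numbers, then delivers the closed form $\prod_{k=1}^{n}\,(k \cdot x + 1) = (x + 1) \cdot (2 \cdot x + 1) \cdots (n \cdot x + 1)$. This product is the entire algebraic content of the corollary.

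It remains to read off this product for successive values of $x$ and to match the outcome against the named multifactorial sequences. For $x = 0$ every factor is $1$, so the product is $1$; for $x = 1$ it is $2 \cdot 3 \cdots (n + 1) = (n + 1)!$; for $x = 2$ it is $3 \cdot 5 \cdots (2 \cdot n + 1)$, the double factorial of the odd numbers; for $x = 3$ it is $4 \cdot 7 \cdots (3 \cdot n + 1) = (3 \cdot n + 1)!!!$, a triple factorial; and in general $\prod_{k=1}^{n}\,(k \cdot x + 1)$ is the $x$-fold factorial at index $n$, to be confirmed against the relevant On-Line Encyclopedia of Integer Sequences entries.

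I expect the algebraic step to be entirely routine; the only genuine care is in the bookkeeping of the last paragraph. The subtlety is that $\prod_{k=1}^{n}\,(k \cdot x + 1)$ silently omits the leading factor $(0 \cdot x + 1) = 1$ that the standard multifactorial conventions include, so the tabulated sequences are shifted by one and the claim correctly speaks of the ``$n + 1$st'' multifactorial. The main obstacle is therefore aligning these indexing conventions, not proving any nontrivial identity.
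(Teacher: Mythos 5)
Your proposal takes the paper's route exactly: the paper's entire argument is to inline \inlinescheme{moessner-fold} into a ``stolid'' nested sum whose upper bounds are constants, and your explicit peeling via $\sum_{i=0}^{b}e=(b+1)\cdot e$, the resulting product $\prod_{k=1}^{n}(k\cdot x+1)$, and the case analysis on $x$ merely carry that same inlining through to the end (the paper leaves the product and the OEIS matching implicit). The only friction is an off-by-one between your unfolding (upper bounds $x, 2\cdot x,\ldots, n\cdot x$, which is the count consistent with the paper's other \inlinescheme{moessner-fold} corollaries) and the paper's displayed instance (bounds running up to $(n+1)\cdot x$, one extra factor); like the ``$n+1$st'' indexing issue you rightly flag, this discrepancy is absorbed by the corollary's loose alignment with the multifactorial conventions rather than being a gap in your argument.
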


Inlining the definition of \texttt{moessner-fold} here gives rise to the
following stolid instance of Moessner's theorem:
$$
\sum_{i=0}^{1 \cdot x}\,
\sum_{i=0}^{2 \cdot x}\,
\sum_{i=0}^{3 \cdot x}\,
\cdots
\sum_{i=0}^{(n + 1) \cdot x}\,
1
$$

\noindent
(The multiplications in the upper bounds can be obtained by iterated
addition, an example of third-order primitive iteration -- one over $n$,
one over each upper bound, and one to carry out the multiplication.
Ditto for superfactorial numbers.)

\subsection{A Corollary About the Additive Generation of Binomial Coefficients}
\label{subsec:a-corollary-to-obtain-binomial-coefficients}

\begin{corollary}[Binomial coefficients]
For all $x : \mathbb{N}$ denoted by \texttt{x}
and
for all $n : \mathbb{N}$ denoted by \texttt{n},
evaluating 
\vspace{-1mm}
\begin{center}
\inlinescheme{(moessner-fold x n (lambda (x) 1) (lambda (j x) x))}
\end{center}
\vspace{-1mm}
which is Moessner's process without striking-out phase,
yields $\binom{x + n}{n}$.
\end{corollary}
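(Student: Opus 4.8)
The plan is to prove this in two stages: first reduce the Scheme evaluation to a nested summation, and then evaluate that summation in closed form. For the first stage I would unfold the definition of \inlinescheme{moessner-fold} on the given arguments, exactly as the preceding corollaries did for integral powers and for Factorial numbers. Substituting the constant summand \inlinescheme{(lambda (x) 1)} for \inlinescheme{init} and the projection \inlinescheme{(lambda (j x) x)} for \inlinescheme{g} --- which discards its period argument \texttt{j} and returns its running-index argument \texttt{x} --- makes each inner upper bound $\gee{j}{i_j}$ collapse to the immediately enclosing index $i_j$. Hence evaluating \inlinescheme{(moessner-fold x n (lambda (x) 1) (lambda (j x) x))} computes
\[
\sum_{i_1=0}^x\,\sum_{i_2=0}^{i_1}\,\sum_{i_3=0}^{i_2}\,\cdots\,\sum_{i_n=0}^{i_{n-1}}\,1,
\]
which is precisely the left-hand side of \corollaryref{coro:binomial-coefficients}.

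For the second stage I would appeal directly to \corollaryref{coro:binomial-coefficients}, which already asserts that this nested sum equals $\binom{x+n}{n}$; the task then reduces to justifying that earlier corollary. The cleanest route is combinatorial: the nested sum counts the weakly decreasing tuples $(i_1,\ldots,i_n)$ with $x \ge i_1 \ge i_2 \ge \cdots \ge i_n \ge 0$, equivalently the multisets of size $n$ drawn from the $x+1$ values $\{0,1,\ldots,x\}$. By stars and bars the number of such multisets is $\mmulset{x+1}{n} = \binom{(x+1)+n-1}{n} = \binom{x+n}{n}$, as required.

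An inductive proof on $n$ works equally well and may suit the paper's style better. The base case $n=0$ gives $\mathrm{init}(x)=1=\binom{x}{0}$. For the inductive step, the induction hypothesis collapses the inner $n-1$ summations --- whose outermost index is bounded by $i_1$ --- to $\binom{i_1+n-1}{n-1}$, and the remaining outer summation is discharged by the hockey-stick identity $\sum_{i=0}^{x}\binom{i+k}{k}=\binom{x+k+1}{k+1}$, itself a one-line induction on $x$ via Pascal's rule.

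The main obstacle is not the mathematics --- the identity is classical --- but the bookkeeping of the first stage: one must confirm that \inlinescheme{moessner-fold} threads \texttt{g} so that its second argument is the running index of the enclosing summation rather than the depth \texttt{j} or the original \texttt{x}. The companion corollaries settle this, since for $(x+1)^n$ the choice \inlinescheme{(lambda (j _) x)} forces every upper bound to the constant \texttt{x} whereas here \inlinescheme{(lambda (j x) x)} leaves each bound at the enclosing index; and the description ``Moessner's process without striking-out phase'' confirms that returning the second argument unchanged is exactly what it means to omit the elision (widening) step.
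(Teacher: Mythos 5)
Your proposal is correct and follows essentially the same route as the paper: unfold \inlinescheme{moessner-fold} so that the fourth argument \inlinescheme{(lambda (j x) x)} turns every inner upper bound into the enclosing index, reducing the evaluation to $\sum_{i_1=0}^x\sum_{i_2=0}^{i_1}\cdots\sum_{i_n=0}^{i_{n-1}}1$, and then establish \corollaryref{coro:binomial-coefficients} by induction on $n$ using the hockey-stick identity $\sum_{i=0}^x\binom{i+n}{n}=\binom{x+n+1}{n+1}$, which is exactly \lemmaref{lem:summing-binomial-coefficients} (itself proved by induction on $x$). Your alternative combinatorial argument via multisets is also sound and mirrors Rosen's multiset-coefficient theorem that the paper discusses in \sectionref{subsec:multiset-coefficients}, but it is not the paper's primary route.
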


\noindent
In other words, and remembering Pascal's triangle
$$
\begin{array}{@{}c@{\ }c@{\ }c@{\ }c@{\ }c@{\ }c@{\ }@{\ }c@{}}
&&&\binom{0}{0}&&&
\\
&&\binom{1}{0}&&\binom{1}{1}&&
\\
&\binom{2}{0}&&\binom{2}{1}&&\binom{2}{2}&
\\
\binom{3}{0}&&\binom{3}{1}&&\binom{3}{2}&&\binom{3}{3}
\end{array}
$$
\begin{itemize}[leftmargin=3.5mm]

\item
  for all $x : \mathbb{N}$ denoted by \texttt{x},
  the function implemented by 
  \vspace{-1mm}
  \begin{center}
  \inlinescheme{(lambda (n) (moessner-fold x n (lambda (x) 1) (lambda (j x) x)))}
  \end{center}
  \vspace{-1mm}
  is enumerated by the $x + 1$st diagonal of Pascal's triangle from right to left
  (reminder: $0$ is the first Peano number, $1$ is the second, etc.), and

\item
  for all $n : \mathbb{N}$ denoted by \texttt{n},
  the function implemented by 
  \vspace{-1.5mm}
  \begin{center}
  \inlinescheme{(lambda (x) (moessner-fold x n (lambda (x) 1) (lambda (j x) x)))}
  \end{center}
  \vspace{-1.5mm}
  is enumerated by the $n + 1$st diagonal of Pascal's triangle from left to right
  (reminder: $0$ is the first Peano number, $1$ is the second, etc.).

\end{itemize}

\vspace{-1mm}

Inlining the definition of \texttt{moessner-fold} here gives rise to
the following
ethereal
instances of Moessner's theorem to generate binomial coefficients additively
(see \corollaryref{coro:binomial-coefficients} in \sectionref{sec:the-essence-of-moessner-s-theorem}):
$$
\begin{array}{@{}l@{\ }c@{\ }l@{}}
\binom{x+0}{0}
& = &
1
\\[3mm]
\binom{x+1}{1}
& = &
\sum_{i_1=0}^x\,1
\\[3mm]
\binom{x+2}{2}
& = &
\sum_{i_1=0}^x\,\sum_{i_2=0}^{i_1}\,1
\end{array}
\hspace{2cm}
\begin{array}{@{}l@{\ }c@{\ }l@{}}
\binom{x+3}{3}
& = &
\sum_{i_1=0}^x\,\sum_{i_2=0}^{i_1}\,\sum_{i_3=0}^{i_2}\,1
\\[3mm]
\binom{x+4}{4}
& = &
\sum_{i_1=0}^x\,\sum_{i_2=0}^{i_1}\,\sum_{i_3=0}^{i_2}\,\sum_{i_4=0}^{i_3}\,1
\\[3mm]
\binom{x+5}{5}
& = &
\sum_{i_1=0}^x\,\sum_{i_2=0}^{i_1}\,\sum_{i_3=0}^{i_2}\,\sum_{i_4=0}^{i_3}\,\sum_{i_5=0}^{i_4}\,1
\end{array}
$$

\subsection{A Corollary About the Additive Generation of Catalan Numbers}
\label{subsec:a-corollary-to-obtain-catalan-numbers}

In the Online Encyclopedia of Integer Sequences~\cite{OEIS}, the page for
Catalan numbers~\cite{OEIS:CN} is presented as the longest of the whole
encyclopedia, ``for good reasons'' since they arise in so many situations
-- for example, trees, as in Hinze's work~\cite[Section~5.5]{Hinze:JFP11},
which
appears to be 
the only mention of Catalan numbers in the literature about Moessner's theorem.
Here is their inductive definition and a definition in closed form:
\[
\left\{
\begin{array}{rcl}
C_0 & = & 1
\\
\forall n : \mathbb{N},\;
C_{n+1}
& = &
\frac{\displaystyle 2 \cdot (2 \cdot n + 1) \cdot C_n}
     {\displaystyle n+2}
\end{array}
\right.
\phantom{\hspace{3cm}}
\forall n : \mathbb{N},\;
C_n
=
\frac{\displaystyle \binom{2 \cdot n}{n}}
     {\displaystyle n + 1}
\]

\begin{corollary}[Catalan numbers]
\label{coro:catalan-numbers}
For all $n : \mathbb{N}$ denoted by \texttt{n},
evaluating 
\vspace{-1mm}
\begin{center}
\inlinescheme{(moessner-fold 0 n (lambda (x) 1) (lambda (j x) (+ x 1)))}
\end{center}
\vspace{-1mm}
yields the $n + 1$st Catalan number.
\end{corollary}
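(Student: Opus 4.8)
The plan is to reduce the claim to a closed-form evaluation of the nested sum that \texttt{moessner-fold} computes, and then to identify that closed form with the $(n+1)$st Catalan number. First I would inline \texttt{moessner-fold} exactly as in the neighbouring corollaries (e.g.\ the one for \corollaryref{coro:binomial-coefficients}): with \inlinescheme{(lambda (x) 1)} in the base case and the bound function \inlinescheme{(lambda (j x) (+ x 1))}, which ignores its level argument \texttt{j} and returns $x+1$, evaluating \inlinescheme{(moessner-fold 0 n (lambda (x) 1) (lambda (j x) (+ x 1)))} produces the nested sum $\sum_{i_1=0}^{0}\sum_{i_2=0}^{i_1+1}\sum_{i_3=0}^{i_2+1}\cdots\sum_{i_n=0}^{i_{n-1}+1}1$, the very Catalan instance already displayed in the Executive Summary and in \sectionref{sec:the-essence-of-moessner-s-theorem}. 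So the entire content of the corollary is the arithmetic identity that this sum equals $C_n=\frac{1}{n+1}\binom{2n}{n}$, which (with $C_0$ counted first) is the $n+1$st Catalan number.

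To prove that identity I would generalize the outermost bound and study $S_n(u)=\sum_{i_1=0}^{u}\sum_{i_2=0}^{i_1+1}\cdots\sum_{i_n=0}^{i_{n-1}+1}1$, so that the target is $S_n(0)$ (and the corollary on convolved Catalan numbers is the general $S_n(u)$). Peeling off the outermost summation gives $S_n(u)=\sum_{i=0}^{u}S_{n-1}(i+1)$ with $S_0(u)=1$; subtracting consecutive upper bounds converts this into a two-term recurrence, so that altogether
\[
S_0(u)=1,\qquad S_n(0)=S_{n-1}(1),\qquad S_n(u)=S_n(u-1)+S_{n-1}(u+1)\ \ (u\ge 1).
\]
Since $S_n(u)$ depends only on $S_n(u-1)$ and $S_{n-1}(u+1)$, this system determines $S$ uniquely, and a double induction --- outer on $n$, inner on $u$ --- is well founded.

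The heart of the proof is then to verify that the ballot-number closed form $F(n,u)=\frac{u+1}{\,n+u+1\,}\binom{2n+u}{n}$ satisfies the same three relations. The boundary checks $F(0,u)=1$ and $F(n,0)=F(n-1,1)$ both reduce to the elementary fact $\binom{2n}{n}=2\binom{2n-1}{n-1}$. The only genuine computation is the difference relation $F(n,u)=F(n,u-1)+F(n-1,u+1)$: applying Pascal's rule to $\binom{2n+u}{n}$ and the ratio $\binom{2n+u-1}{n}=\frac{n+u}{n}\binom{2n+u-1}{n-1}$ rewrites both sides as a common multiple of $\binom{2n+u-1}{n-1}$, so the claim collapses to the polynomial identity $\frac{u}{n}+\frac{u+2}{n+u+1}=\frac{(u+1)(2n+u)}{n(n+u+1)}$, which holds on clearing denominators. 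Having matched $S$ and $F$ on the defining recurrence, $S=F$, and specialising at $u=0$ yields $S_n(0)=F(n,0)=\frac{1}{n+1}\binom{2n}{n}=C_n$, as required.

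The main obstacle here is bookkeeping rather than depth: aligning the index conventions of $S_n(u)$ with \texttt{moessner-fold}'s base case, and keeping the coefficients straight in the difference relation (the mismatch between $\frac{u+1}{n+u+1}$ and $\frac{u+2}{n+u+1}$ is precisely what the ratio identity repairs). As a shortcut one may instead invoke the earlier convolved-Catalan corollary, $S_n(x)=\frac{(x+1)\binom{2n+x}{n}}{n+x+1}$, and simply set $x=0$; the induction above is exactly what justifies that formula in the first place. A purely combinatorial alternative is to note that $S_n(0)$ counts the sequences $0=i_1,i_2,\dots,i_n$ with $0\le i_k\le i_{k-1}+1$, a classical Catalan family, and to exhibit a bijection with Dyck paths of semilength $n$; I would nonetheless fall back on the recurrence argument to keep the verification mechanical.
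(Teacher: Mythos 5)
Your proof is correct, but it takes a genuinely different route from the paper's. The paper does not prove this corollary by a recurrence on a closed form at all: it inlines \texttt{moessner-fold} to display the nested sums $C_0,\dots,C_5$, identifies them with the nested for loops of Grimaldi's Example~27.2, and delegates the justification to the combinatorial generating-tree argument recalled in \sectionref{subsec:catalan-numbers} (the nested sum counts the nodes at level $n$ of a tree in which each node at depth $k$ with label $i_k$ has $i_k+2$ children, and that count is $C_n$ -- essentially West's generating trees, as in Lahlou's note). You instead generalize the outer bound to $S_n(u)$, extract the recurrence $S_0(u)=1$, $S_n(0)=S_{n-1}(1)$, $S_n(u)=S_n(u-1)+S_{n-1}(u+1)$, and verify that the ballot-number form $F(n,u)=\frac{u+1}{n+u+1}\binom{2n+u}{n}$ satisfies it; your algebra checks out (the reduction to $\frac{u}{n}+\frac{u+2}{n+u+1}=\frac{(u+1)(2n+u)}{n(n+u+1)}$ is right, and the boundary cases do collapse to $\binom{2n}{n}=2\binom{2n-1}{n-1}$). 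What your route buys is a self-contained, mechanical induction that proves the stronger convolved-Catalan corollary $S_n(x)=\frac{(x+1)\binom{2n+x}{n}}{n+x+1}$ at the same time -- a statement the paper only attributes to the OEIS -- at the cost of the strengthening of the induction hypothesis from $u=0$ to arbitrary $u$, which is unavoidable since $S_n(0)$ alone does not satisfy a closed two-term recurrence. What the paper's route buys is the combinatorial reading of the indices $0\le i_k\le i_{k-1}+1$ as a generating tree, which is the ``why'' the article is after; your closing remark about a bijection with Dyck paths is exactly that argument, so the two approaches meet there.
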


\noindent
In other words, the function implemented by the Scheme procedure
\vspace{-1mm}
\begin{center}
\inlinescheme{(lambda (n) (moessner-fold 0 n (lambda (x) 1) (lambda (j x) (+ x 1))))}
\end{center}
\vspace{-1mm}
is enumerated by the stream of Catalan numbers.

Inlining the definition of \texttt{moessner-fold} here gives rise to
the following
ethereal
instances of Moessner's theorem to generate Catalan numbers additively:
$$
\begin{array}{@{}l@{\ }c@{\ }l@{}}
C_0
& = &
1
\\[3mm]
C_1
& = &
\sum_{i_1=0}^0\,1
\; = \; 1
\\[3mm]
C_2
& = &
\sum_{i_1=0}^0\,\sum_{i_2=0}^{i_1+1}\,1
\; = \; 2
\end{array}
\hspace{2cm}
\begin{array}{@{}l@{\ }c@{\ }l@{}}
C_3
& = &
\sum_{i_1=0}^0\,\sum_{i_2=0}^{i_1+1}\,\sum_{i_3=0}^{i_2+1}\,1
\; = \; 5
\\[3mm]
C_4
& = &
\sum_{i_1=0}^0\,\sum_{i_2=0}^{i_1+1}\,\sum_{i_3=0}^{i_2+1}\,\sum_{i_4=0}^{i_3+1}\,1
\; = \; 14
\\[3mm]
C_5
& = &
\sum_{i_1=0}^0\,\sum_{i_2=0}^{i_1+1}\,\sum_{i_3=0}^{i_2+1}\,\sum_{i_4=0}^{i_3+1}\,\sum_{i_5=0}^{i_4+1}\,1
\; = \; 42
\end{array}
$$

\noindent
At the time of writing,
these nested summations are not mentioned in the On-Line Encyclopedia of Integer Sequences.
They do, however,
correspond to
the nested for loops in Example~27.2, page 222,
of Ralph Grimaldi's introduction to Fibonacci and Catalan numbers~\cite{Grimaldi:12},
as detailed in \sectionref{subsec:catalan-numbers}.

Long's first theorem applies again: instead of starting from $1$,
one can start from another integer and see the resulting Catalan number
multiplied by this other integer.

In Corollary~\ref{coro:catalan-numbers},
and as mentioned in \sectionref{sec:the-essence-of-moessner-s-theorem},
generalizing the first argument of \inlinescheme{moessner-fold}
from
$0$
to a non-negative integer
$x$
gives rise to the $x + 1$st convolution of Catalan numbers.
Generalizing
the fourth argument of \inlinescheme{moessner-fold}
from
\inlinescheme{(lambda (j x) (+ x 1))}
to
\inlinescheme{(lambda (j x) (+ x i))}
where \inlinescheme{i} denotes a
positive
integer,
however,
does not give rise to a discernible pattern.
(Amusingly, though, the consecutive integers 3, 4, 5, and 6 give rise
to the consecutive OEIS integer sequences A002293, A002294, A002295, and A002296.)
As mentioned in \sectionref{sec:the-essence-of-moessner-s-theorem},
letting \inlinescheme{i} be \inlinescheme{0} gives rise to binomial coefficients,
and letting \inlinescheme{i} be \inlinescheme{(quotient x j)} gives rise to integral powers.

\subsection{Moessner's Magic: Primitive Recursion}

Could one express Catalan numbers as a stolid instance of Moessner's
theorem rather than as the ethereal instance in
\sectionref{subsec:a-corollary-to-obtain-catalan-numbers}?
This looks unlikely because of the inductive specification of Catalan numbers
(stated in \sectionref{subsec:a-corollary-to-obtain-catalan-numbers}).
The issue is that ${2 \cdot (2 \cdot n + 1) \cdot C_n}$ is divisible by ${n + 2}$,
but ${2 \cdot (2 \cdot n + 1)}$ is not.
And so even though Catalan numbers are integers, the equality
\begin{eqnarray*}
\forall n : \mathbb{N},\;
\frac{2 \cdot (2 \cdot n + 1) \cdot C_n}
     {n + 2}
& = &
\frac{2 \cdot (2 \cdot n + 1)}
     {n + 2}
\cdot
C_n
\end{eqnarray*}

\noindent
is only valid for rational numbers, not for integer division.
And since the upper bound of a summation is an integer, not a rational
number, this recurrence is not multiplicative and thus not expressible as
a stolid instance of Moessner's theorem where the upper bounds are
independent of the outer indices.
Ditto for the multiplicative definition of Catalan numbers:
\begin{eqnarray*}
\forall n : \mathbb{N},\;
C_n
& = &
\prod_{i=2}^n\,\frac{n + i}{i}
\end{eqnarray*}

And therein lies the magic of Moessner's theorem: the upper bounds of the
inner sums depend on the indices of the outer summations, and these
dependencies account for the transitory rational arithmetic using integer
arithmetic in the ethereal instance of Moessner's theorem to express
Catalan numbers.

The executive summary and
\sectionref{sec:the-essence-of-moessner-s-theorem} presented a series of
examples that illustrate
these
dependencies.

\section{Back to Dynamic Programming}
\label{sec:back-to-dynamic-programming}

For completeness, let us revisit the additive generation of powers and
illustrate how to express it using dynamic programming.

Consider, for example, the Scheme procedure that maps $x$ to $(x + 1)^5$:
\inputscheme{POWER-5}

\noindent
Computing $(x + 1)^5$ with \inlinescheme{power-5} is achieved by performing $(x + 1)^5$ additions.
(In general, for any $n : \mathbb{N}$,
computing $(x + 1)^n$ is achieved by performing $(x + 1)^n$ additions.)

Since the local procedures \inlinescheme{p5}, \inlinescheme{p4},
\inlinescheme{p3}, and \inlinescheme{p2} are repeatedly called with the same arguments,
let us uniformly cache their results into lists,
which is the essence of dynamic programming:
\inputscheme{MPOWER-5_A}
\inputscheme{MPOWER-5_B}

\noindent
where applying \inlinescheme{iotap} to a non-negative integer $n$ constructs an increasing list from $0$ to $n$.
Computing $(x + 1)^5$ with \inlinescheme{mpower-5} is achieved by performing $\frac{5 \cdot 6}{2} \cdot x$ additions.
(In general, for any $n : \mathbb{N}$ that is not $2$ nor $3$,
computing $(x + 1)^n$ is achieved by performing $x \cdot \sum_{i=0}^n\,i$ additions
with dynamic programming -- which illustrates the impact of dynamic programming here:
from $(x + 1)^n$ to $x \cdot \sum_{i=0}^n\,i$ additions.)

We can also fuse the combination of \inlinescheme{map},
\inlinescheme{Sigma}, and \inlinescheme{iotap} into a procedure that
is listless:
\inputscheme{MAP-SIGMA-IOTAP}

\noindent
(Such a ``loop fusion'' is a classical program transformation that dates back to Burge and Landin~\cite{Burge:75},
and a ``listless'' program is one that does not create intermediate lists~\cite{Wadler:LFP84}.)
This procedure is listless because the only list it constructs is the result.
The corresponding power function reads as follows:
\inputscheme{MPOWER-5_LISTLESS}

We can also tune the listless combination
so that it does not construct the elements of the lists that will be skipped in the next place:
\inputscheme{MAP-SIGMA-IOTAP_SKIP}

\noindent
The skipping part is achieved with the inner conditional expression.

The corresponding power function then does not need to skip elements in the lists,
making it clearer that $x \cdot \sum_{i=0}^5\,i$ additions are performed:
\inputscheme{MPOWER-5_LISTLESS_}

\noindent
Replacing \inlinescheme{Sigma x} with
\inlinescheme{fused_ x 1}
makes \inlinescheme{mpower-5_listless_} map $x$ to the increasing list
$[1^5,\,2^5,\cdots,$ $(x + 1)^5]$
instead of to $(x + 1)^5$,
performing the same number of additions thanks to the caching of dynamic programming.

Overall, here is Moessner's process with dynamic programming
where the striking-out phase is anticipated:
\inputscheme{MOESSNER-WITH-DYNAMIC-PROGRAMMING}

\noindent
In practice, we
use
arrays rather than lists since we know their size
and since unlike lists, they are indexed in constant time, not in linear time.

\section{Related Work About Nested Sums}
\label{sec:related-work-about-nested-sums}

With the exception of Irwin and Lahlou's work described in
\sectionref{subsec:number-of-types-of-binary-trees-of-a-given-height}
and of Baumann's ``$k$-dimensionale Champagnerpyramide''~\cite{Baumann:MS18},
the author could not find much about nested sums in the literature.

\subsection{Sums and For Loops}

In his lecture notes on Discrete Mathematics~\cite[Chapter~6]{Aspnes:22}, Aspnes
connects sums and for loops as an evidence.
(To quote, ``Mathematicians invented this notation centuries ago because they didn't
have for loops.'')
So for two imperative examples, %

\begin{itemize}

\item
the OCaml expression
\begin{small}
\vspace{-2mm}
\begin{verbatim}
  let a = ref 0
  in for i = 0 to 5 do
       a := !a + 1
     done;
     !a
\end{verbatim}
\vspace{-2mm}
\end{small}
evaluates to $\sum_{i=0}^5\,1$, \ie, to $6$, and %

\item
the OCaml expression
\begin{small}
\vspace{-2mm}
\begin{verbatim}
  let a = ref 0
  in for i = 1 to 5 do
       a := !a + 1
     done;
     !a
\end{verbatim}
\vspace{-2mm}
\end{small}
evaluates to $\sum_{i=1}^5\,1$, \ie, to $5$.
\end{itemize}

Regarding nested sums, Aspnes mentions that
$\sum_{i=1}^a\,\sum_{j=1}^b\,1$ is a ``rather painful'' way to multiply
$a$ and $b$, and Example~9, page~408 of the 8th edition of Rosen's
textbook about discrete mathematics and its
applications~\cite[Section~6.1.1]{Rosen:19}, illustrates nested for loops
to compute the product of $n$ numbers.
To wit, the OCaml expression
{\small\texttt{example\_Rosen 2 3 4}}
evaluates to $\sum_{i_1=1}^2\,\sum_{i_2=1}^3\,\sum_{i_3=1}^4\,1$,
\ie, to $2 \cdot 3 \cdot 4$, \ie, to $24$,
given the following OCaml declaration: %

\begin{small}
\begin{verbatim}
  let example_Rosen n1 n2 n3 =
    let k = ref 0
    in for i1 = 1 to n1 do
         for i2 = 1 to n2 do
           for i3 = 1 to n3 do
             k := !k + 1
           done
         done
       done;
       !k
\end{verbatim}
\end{small}

In Section~6.1.8 his lecture notes~\cite{Aspnes:22},
Aspnes also gives
$\sum_{i=0}^n\,\sum_{j=0}^i\,(i + 1) \cdot (j + 1)$ as an example
of a nested sum where the upper bound of the inner sum
is the index of the outer sum, and mentions that ``[f]or larger $n$, the
number of [products] grows quickly.''

\subsection{Primitive iteration and primitive recursion}

A for loop whose body uses its index is said to be primitive recursive
and
a for loop whose body does not use its index is said to be primitive iterative~\cite{Danvy:JFP23}.

\subsection{Multiset Coefficients}
\label{subsec:multiset-coefficients}

In Example~6, page~449 of the aforementioned textbook about discrete
mathematics and its applications~\cite[Section~6.5.3]{Rosen:19},
Rosen states the nested for loops that correspond to
$$
 \sum_{i_1=1}^x\,
 \sum_{i_2=1}^{i_1}\,
 \cdots\,
 \sum_{i_n=1}^{i_{n-1}}\,
 1
$$
for two given natural numbers $x$ and $n$ (renamed here for notational
consistency).
He proves
combinatorially
that the result is
the number of combinations of $n$ elements from a multiset with $x$ elements,
\ie, from a set where elements can occur several times:
the result is the multiset coefficient $\mulset{x}{n}$.

\begin{theorem}[Multiset coefficients as nested sums (Rosen)]
$
 \forall x \in \mathbb{N},
 \forall n \in \mathbb{N},
$
\begin{eqnarray*}
  \sum_{i_1=1}^x\,
  \sum_{i_2=1}^{i_1}\,
  \cdots\,
  \sum_{i_n=1}^{i_{n-1}}\,
  1
  & = &
  \mmulset{x}{n}
  \;\;=\;\;
  \binom{x + n - 1}{n}
\end{eqnarray*}
\end{theorem}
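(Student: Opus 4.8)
The plan is to reduce Rosen's theorem to \corollaryref{coro:binomial-coefficients}, which already establishes that the same shape of nested sum, but with each index starting at $0$ rather than at $1$, equals $\binom{x + n}{n}$. Since the two statements differ only by a uniform downward shift of every summation variable, the entire argument is one reindexing followed by a single appeal to that earlier corollary.

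Concretely, I would substitute $j_k = i_k - 1$ simultaneously in all $n$ sums. In the outermost sum $i_1$ runs from $1$ to $x$, so $j_1$ runs from $0$ to $x - 1$. In the $k$-th sum $i_k$ runs from $1$ to $i_{k-1}$; writing $i_{k-1} = j_{k-1} + 1$, the shifted variable $j_k = i_k - 1$ runs from $0$ to $i_{k-1} - 1 = j_{k-1}$, so each inner upper bound $i_{k-1}$ becomes exactly $j_{k-1}$ and the body $1$ is untouched. Hence the left-hand side equals $\sum_{j_1=0}^{x-1} \sum_{j_2=0}^{j_1} \cdots \sum_{j_n=0}^{j_{n-1}} 1$. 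Applying \corollaryref{coro:binomial-coefficients} with outer bound $x - 1$ in place of $x$ then yields $\binom{(x - 1) + n}{n} = \binom{x + n - 1}{n}$, and $\binom{x + n - 1}{n} = \mmulset{x}{n}$ is the standard closed form of the multiset coefficient, closing the main case.

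I do not expect the algebra to be the obstacle; the only real care is in the Peano boundary. Applying the corollary with outer bound $x - 1$ presupposes $x \geq 1$, so I would dispatch $x = 0$ by inspection: the outer sum $\sum_{i_1=1}^0$ is empty, making the left-hand side $0$ when $n \geq 1$, which matches $\mmulset{0}{n} = 0$, while for $n = 0$ both sides reduce to $1$ (the empty chain of sums collapses to its body, and $\mmulset{x}{0} = 1$). A reader who prefers a self-contained argument can replace the appeal to \corollaryref{coro:binomial-coefficients} by induction on $n$ that peels off the outermost sum: the inductive step is exactly the multiset recurrence $\sum_{i=1}^x \mulset{i}{n-1} = \mulset{x}{n}$, i.e.\ the hockey-stick identity $\sum_{i=1}^x \binom{i + n - 2}{n - 1} = \binom{x + n - 1}{n}$.

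Finally, it is worth recording the combinatorial reading that underlies Rosen's own proof, since it explains \emph{why} the closed form is a multiset coefficient: the nested sum counts the tuples $(i_1, \ldots, i_n)$ with $x \geq i_1 \geq i_2 \geq \cdots \geq i_n \geq 1$, i.e.\ the weakly decreasing chains of length $n$ drawn from $\{1, \ldots, x\}$, and these are in bijection with the size-$n$ multisets over an $x$-element set, of which there are $\mmulset{x}{n} = \binom{x + n - 1}{n}$.
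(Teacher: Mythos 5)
Your proof is correct, but it is organized differently from the paper's. The paper disposes of the theorem in one line -- a nested first-order induction, on $n$ and then on $x$, the inner induction being where the hockey-stick identity of \lemmaref{lem:summing-binomial-coefficients} does its work -- and it records the shift between the $1$-based and $0$-based nested sums separately, as \corollaryref{cor:multiset-coefficients-from-0-and-from-1}. You instead make that shift the entire argument: the substitution $j_k = i_k - 1$ turns the $1$-based sum with outer bound $x$ into the $0$-based sum with outer bound $x-1$, and \corollaryref{coro:binomial-coefficients} -- which the paper proves independently by induction on $n$ using the same lemma, so there is no circularity -- finishes the job with $\binom{(x-1)+n}{n} = \binom{x+n-1}{n}$. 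The two routes thus rest on the same identity $\sum_{i=0}^x \binom{i+n}{n} = \binom{x+n+1}{n+1}$; yours buys the absence of a second induction at the cost of a Peano boundary case at $x = 0$, which you handle correctly (for $n \geq 1$ the outer sum is empty and $\binom{n-1}{n} = 0 = \mmulset{0}{n}$, and for $n = 0$ both sides are $1$). Your closing combinatorial reading -- weakly decreasing $n$-tuples over $\{1,\ldots,x\}$ in bijection with size-$n$ multisets -- is the proof the paper attributes to Rosen himself, so all three available arguments (combinatorial, direct induction, reindex-and-reduce) appear in your write-up or the paper, and they agree.
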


\begin{proof}[{\rm\textbf{Proof}}] %
By nested first-order induction on $n$ and then on $x$.
\end{proof}

For comparison, \corollaryref{coro:binomial-coefficients} is proved by
induction on $n$, using \lemmaref{lem:summing-binomial-coefficients} just
below.
The theorem and its proof rely
on the following series of
equalities:
\begin{description}

\item[$n = 0$]
  $\forall x \in \mathbb{N},\,
   1
   =
   \binom{x + 0}{0}$

\item[$n = 1$]
  $\forall x \in \mathbb{N},\,
   \sum_{i_1=0}^x\,
   1
   =
   \sum_{i_1=0}^x\,
   \binom{i_1 + 0}{0}
   =
   \binom{x + 1}{1}$

\item[$n = 2$]
  $\forall x \in \mathbb{N},\,
   \sum_{i_2=0}^x\,
   \sum_{i_1=0}^{i_2}\,
   1
   =
   \sum_{i_2=0}^x\,
   \binom{i_2 + 1}{1}
   =
   \binom{x + 2}{2}$

\item[$n = 3$]
  $\forall x \in \mathbb{N},\,
   \sum_{i_3=0}^x\,
   \sum_{i_2=0}^{i_3}\,
   \sum_{i_1=0}^{i_2}\,
   1
   =
   \sum_{i_3=0}^x\,
   \binom{i_3 + 2}{2}
   =
   \binom{x + 3}{3}$

\end{description}
etc.

\begin{lemma}
\label{lem:summing-binomial-coefficients}
$\forall x \in \mathbb{N},
 \forall n \in \mathbb{N},
 \sum_{i=0}^x\,\binom{i + n}{n}
 =
 \binom{x + n + 1}{n + 1}$
\end{lemma}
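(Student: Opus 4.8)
The statement is the classical hockey-stick (Christmas-stocking) identity, and the plan is to prove it by induction on $x$, holding $n$ fixed throughout. The engine is Pascal's rule, the third clause of the inductive specification of binomial coefficients recalled in \sectionref{subsec:prerequisites-and-notations}.

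For the base case $x = 0$, both sides collapse to $1$: the left-hand side is $\sum_{i=0}^0 \binom{i + n}{n} = \binom{n}{n} = 1$ and the right-hand side is $\binom{0 + n + 1}{n + 1} = \binom{n + 1}{n + 1} = 1$, invoking the boundary clause $\binom{m}{m} = 1$ in each case. For the inductive step, I would assume the identity for $x$ and split off the last summand at $x + 1$:
$$
\sum_{i=0}^{x+1} \binom{i + n}{n}
=
\left( \sum_{i=0}^{x} \binom{i + n}{n} \right) + \binom{(x + 1) + n}{n}
=
\binom{x + n + 1}{n + 1} + \binom{x + n + 1}{n},
$$
where the second equality uses the induction hypothesis on the first term. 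Pascal's rule then recombines the two binomial coefficients into $\binom{x + n + 2}{n + 1} = \binom{(x + 1) + n + 1}{n + 1}$, which is exactly the claimed identity at $x + 1$, closing the induction.

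The main obstacle is essentially nonexistent; the only point demanding a moment's care is the side condition $k < n$ under which Pascal's rule is stated. In the recombination above, the rule is instantiated with upper index $x + n + 1$ and lower index $n + 1$, so the side condition reads $n < x + n + 1$, \ie, $0 < x + 1$, which holds for every $x : \mathbb{N}$. Hence Pascal's rule applies at each step and the induction goes through without further hypotheses. I note in passing that this lemma is precisely what feeds the induction on $n$ used to establish \corollaryref{coro:binomial-coefficients}, so the roles of the two variables are cleanly separated: here $n$ is a spectator and $x$ drives the recursion.
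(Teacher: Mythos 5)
Your proof is correct and follows exactly the route the paper takes: its entire stated proof of Lemma~\ref{lem:summing-binomial-coefficients} is ``By induction on $x$,'' and your base case, splitting off the last summand, and recombination via Pascal's rule (with the side condition duly checked) are precisely the details that proof leaves implicit.
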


\begin{proof}[{\rm\textbf{Proof}}] %
By induction on $x$.
\end{proof}

\begin{corollary}
\label{cor:multiset-coefficients-from-0-and-from-1}
$
 \forall x \in \mathbb{N},
 \forall n \in \mathbb{N},
  \sum_{i_1=0}^{x}\,
  \sum_{i_2=0}^{i_1}\,
  \cdots\,
  \sum_{i_n=0}^{i_{n-1}}\,
  1
  \;=\;
  \sum_{i_1=1}^{x+1}\,
  \sum_{i_2=1}^{i_1}\,
  \cdots\,
  \sum_{i_n=1}^{i_{n-1}}\,
  1
$
\end{corollary}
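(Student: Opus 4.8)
The plan is to prove the equality by a simultaneous change of summation variable rather than by evaluating either side. Concretely, I would substitute $j_k = i_k + 1$ (equivalently $i_k = j_k - 1$) in all $n$ nested summations on the left-hand side at once, and check that this reindexing transforms the left-hand side verbatim into the right-hand side.

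First I would verify how the bounds transform. The outermost sum $\sum_{i_1=0}^{x}$ has $i_1$ ranging over $0,\dots,x$, so $j_1 = i_1 + 1$ ranges over $1,\dots,x+1$, matching the outer sum $\sum_{j_1=1}^{x+1}$ on the right. For every inner sum with $k \ge 2$, the range $0 \le i_k \le i_{k-1}$ becomes $1 \le j_k \le j_{k-1}$ under the shift: from $i_k = j_k - 1$ and $i_{k-1} = j_{k-1} - 1$, the constraint $0 \le i_k$ turns into $1 \le j_k$ and the constraint $i_k \le i_{k-1}$ turns into $j_k \le j_{k-1}$. Since the summand is the constant $1$ in both expressions, the two nested sums range over exactly the same number of terms, so they are equal.

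The argument amounts to a bijection between the index tuples $(i_1,\dots,i_n)$ counted on the left and the tuples $(j_1,\dots,j_n)$ counted on the right, and I would treat the degenerate case $n = 0$ (where both sides are the bare value $1$) as a vacuous base. I expect the only delicate point to be bookkeeping: making sure the simultaneous substitution respects the nested dependency of each upper bound on its outer index, i.e.\ that shifting $i_{k-1}$ and $i_k$ together keeps the constraint $i_k \le i_{k-1}$ intact as $j_k \le j_{k-1}$. Once that is checked, the identity is immediate, with no induction required.

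Alternatively---and perhaps more in keeping with the surrounding development---I could simply evaluate both sides in closed form: the left-hand side is $\binom{x+n}{n}$ by \corollaryref{coro:binomial-coefficients}, while the right-hand side is Rosen's multiset coefficient with $x$ replaced by $x+1$, namely $\mmulset{x+1}{n} = \binom{(x+1)+n-1}{n} = \binom{x+n}{n}$. Since both equal $\binom{x+n}{n}$, they coincide. This second route is a one-liner given the results already established, but the reindexing proof has the advantage of being self-contained and of exhibiting \emph{why} the ``lower bound $0$, outer bound $x$'' form and the ``lower bound $1$, outer bound $x+1$'' form count the same objects.
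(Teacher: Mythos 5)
Your proposal is correct, and your primary argument (the simultaneous shift $j_k = i_k + 1$) is a genuinely different route from the one the paper intends. The paper states this as a corollary of the surrounding results and gives no separate argument: the left-hand side is $\binom{x+n}{n}$ by the binomial-coefficients corollary, and the right-hand side is Rosen's multiset-coefficient theorem instantiated at $x+1$, giving $\binom{(x+1)+n-1}{n} = \binom{x+n}{n}$ --- which is exactly your one-line alternative. Your reindexing proof buys self-containedness and an explanation: it exhibits the order-preserving bijection between weakly decreasing tuples $x \ge i_1 \ge \cdots \ge i_n \ge 0$ and $x+1 \ge j_1 \ge \cdots \ge j_n \ge 1$, so the reader sees \emph{why} the two bound conventions count the same objects rather than merely that two closed forms happen to agree. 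The closed-form route buys brevity and justifies the word ``corollary.'' Both are sound; your bookkeeping on how the nested bounds transform under the simultaneous shift is exactly right, and the $n=0$ case is handled correctly as both sides degenerating to the bare summand $1$.
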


\subsection{Catalan Numbers}
\label{subsec:catalan-numbers}

Catalan numbers (A000108
$= \estreamnine{1}{1}{2}{5}{14}{42}{132}{429}{1430}$)
can also be obtained using nested for loops
(Example~27.2, page~222, of Grimaldi's Introduction to Fibonacci and
Catalan numbers~\cite[Chapter~27]{Grimaldi:12} and Example~9, page~150, of the
second edition of the Handbook of Discrete and Combinatorial
Mathematics~\cite[Section~3.1.3]{Rosen:18}) and via a version of
Moessner's process without dynamic programming:
\begin{eqnarray*}
  \forall n \in \mathbb{N},
  \sum_{i_1=1}^1\,
  \sum_{i_2=1}^{i_1 + 1}\,
  \sum_{i_3=1}^{i_2 + 1}\,
  \cdots
  \sum_{i_n=1}^{i_{n-1} + 1}\,
  1
  & = &
  C_{n}
\\
  \forall n \in \mathbb{N},
  \sum_{i_0=0}^0\,
  \sum_{i_1=0}^{i_0 + 1}\,
  \sum_{i_2=0}^{i_1 + 1}\,
  \cdots
  \sum_{i_n=0}^{i_{n-1} + 1}\,
  1
  & = &
  C_{n+1}
\end{eqnarray*}

Grimaldi sketches how the growing indices match the structure of a tree
where subtrees have a growing number of subtrees (Figure~27.3, page~222),
which justifies how these nested sums compute Catalan numbers,
since the number of nodes at level $n$ of this tree is $C_{n}$.

In the second equation, and according to the On-Line Encyclopedia of
Integer Sequences, generalizing $0$ to the natural number $x$ appears to
give rise to the $x + 1$st convolution of Catalan numbers:
A000245 when $x$ is $2$,
A002057 when $x$ is $3$,
A000344 when $x$ is $4$,
A003517 when $x$ is $5$,
etc.

\subsection{Number of Types of Binary Trees of a Given Height}
\label{subsec:number-of-types-of-binary-trees-of-a-given-height}

A variant of Moessner's process without dynamic programming gives rise to
the following nested sums that compute the number of
types of
binary trees of height $n$, \ie, to A002449
$= \estreamsix{1}{1}{2}{6}{26}{166}$:
\begin{eqnarray*}
  \forall n \in \mathbb{N}, \;
  \sum_{i_0=0}^1\,
  \sum_{i_1=0}^{2 \cdot i_0 + 1}\,
  \sum_{i_2=0}^{2 \cdot i_1 + 1}\,
  \cdots
  \sum_{i_n=0}^{2 \cdot i_{n-1} + 1}\,
  1
  & = &
  \mathrm{A002449}_{n+2}
\end{eqnarray*}

\noindent
(Replacing the opening $1$ with $2$ and the $2$ factor with $3$ gives rise to ternary trees,
and likewise for quaternary trees, etc.)

In the web page dedicated to A002449~\cite{OEIS:A002449}, Irwin conjectures that the following
nested sums also compute A002449:
\begin{eqnarray*}
  \mathrm{A002449}_{n+2}
  & = &
  \sum_{i_1=1}^2\,
  \sum_{i_2=1}^{2 \cdot i_1}\,
  \cdots\,
  \sum_{i_{n-1}=1}^{2 \cdot i_{n-2}}\,
  \sum_{i_{n}=1}^{2 \cdot i_{n-1}}\,
  2 \cdot i_n,
  \; \mathrm{for} \: n \geq 1
\end{eqnarray*}

\noindent
In the material~\cite{Lahlou:A002449} that accompanies A002449~\cite{OEIS:A002449}, Lahlou proves this
conjecture using West's generating trees~\cite{West:DM96},
which is essentially the same coinductive argument as Grimaldi's for Catalan numbers
(see \sectionref{subsec:catalan-numbers}).
Lahlou also lists a dozen nested summations that compute known integer
sequences, including Catalan numbers and Fibonacci numbers:
\begin{theorem}[Fibonacci numbers as nested sums (Lahlou)]
$\forall n \in \mathbb{N},
 A_n = F_{n+1}$
where
$$
  A_n
  \;=\;
  \left\{
  \begin{array}{ll}
    1
    &
    \mathit{if} \; n = 0
    \\%
    1
    &
    \mathit{if} \; n = 1
    \\
    \sum_{i_2=1}^1\,
    \sum_{i_3=1}^{3 - i_2}\,
    \cdots\,
    \sum_{i_n=1}^{3 - i_{n-1}}\,
    (3 - i_n)
    &
    \mathit{otherwise}
  \end{array}
  \right.
$$
\end{theorem}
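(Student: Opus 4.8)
The plan is to exploit the structural constraint hidden in the upper bounds. Since each index satisfies $1 \le i_k \le 3 - i_{k-1}$ and every index is at least $1$, the previous index must obey $i_{k-1} \le 2$; hence \emph{every} index takes a value in $\{1,2\}$ and the entire nested sum is governed by a two-state process. Concretely, I would introduce, for $j \ge 1$ and $v \in \mathbb{N}$, the tail sum
\[
  P_j(v) \;=\; \sum_{i_1=1}^{3-v}\,\sum_{i_2=1}^{3-i_1}\,\cdots\,\sum_{i_j=1}^{3-i_{j-1}}\,(3-i_j),
\]
so that the quantity to analyze is $A_n = P_{n-1}(2)$ for $n \ge 2$ (the outermost bound $1$ is $3-2$, and there are $n-1$ summations). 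The benefit of isolating the parameter $v$ is that $P_j(v) = 0$ as soon as $v \ge 3$, so only the two values $P_j(1)$ and $P_j(2)$ ever matter.

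Next I would peel off the outermost summation to obtain the recurrence $P_j(v) = \sum_{i_1=1}^{3-v} P_{j-1}(i_1)$ for $j \ge 2$, together with the base case $P_1(v) = \sum_{i_1=1}^{3-v}(3-i_1)$. Writing $a_j = P_j(1)$ and $b_j = P_j(2)$ and using the confinement to $\{1,2\}$, this collapses to the linear system $a_j = a_{j-1} + b_{j-1}$ and $b_j = a_{j-1}$, with base values $a_1 = 3$ and $b_1 = 2$. Equivalently, the column vector $\binom{a_j}{b_j}$ is obtained from $\binom{a_{j-1}}{b_{j-1}}$ by the Fibonacci $Q$-matrix $\left(\begin{smallmatrix}1&1\\1&0\end{smallmatrix}\right)$, which already signals that $a_j$ and $b_j$ are shifted Fibonacci numbers.

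It then remains to read off the indices. A one-line induction on $j$ (or an appeal to the closed form for powers of the $Q$-matrix) gives $a_j = F_{j+3}$ and $b_j = F_{j+2}$: the base cases $a_1 = 3 = F_4$ and $b_1 = 2 = F_3$ together with $a_j = a_{j-1} + a_{j-2}$ reproduce the defining recurrence of $F$. Substituting back, $A_n = P_{n-1}(2) = b_{n-1} = F_{(n-1)+2} = F_{n+1}$ for $n \ge 2$, and the two listed base cases are checked directly, $A_0 = 1 = F_1$ and $A_1 = 1 = F_2$, completing the identification $A_n = F_{n+1}$ for all $n$.

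The main obstacle I anticipate is not the induction itself but the bookkeeping: keeping the off-by-one shifts straight among the summation depth $j$, the parameter $n$, and the Fibonacci index, and correctly handling the nonstandard body $(3 - i_n)$ in the innermost sum, which is precisely what fixes the base values $a_1 = 3$, $b_1 = 2$ (and hence the shift by two rather than by one). A secondary point requiring care is justifying that the sums with $v \ge 3$ genuinely vanish, so that the two-state reduction is exact rather than approximate; once that is pinned down, the remainder is a routine second-order linear recurrence matched against the Fibonacci recurrence.
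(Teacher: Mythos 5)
Your proposal is correct, and it is essentially the induction on $n$ that the paper invokes without detail: you have supplied the strengthening that makes it go through, namely tracking the two tail sums $P_j(1)$ and $P_j(2)$ simultaneously so that the mutual recurrence $a_j = a_{j-1} + b_{j-1}$, $b_j = a_{j-1}$ with $a_1 = 3 = F_4$, $b_1 = 2 = F_3$ yields $A_n = P_{n-1}(2) = F_{n+1}$. All the index bookkeeping checks out against the paper's convention $F_1 = F_2 = 1$, and the confinement of every index to $\{1,2\}$ (forced by $i_2 = 1$ and the bounds $3 - i_{k-1}$) is exactly the observation that makes the two-state reduction exact.
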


\begin{proof}[{\rm\textbf{Proof}}] %
By induction on $n$.
\end{proof}

Irwin's conjecture, West's generating trees, and Lahlou's note are the closest related work here.

\section*{Acknowledgments}
The author is grateful to Julia Lawall for tirelessly providing grounded comments
throughout the writing of this article.
Thanks are also due
to Kira Kutscher for lending an informed hand with the early articles
about Moessner's theorem as well as for pertinent comments,
to the NUS library for providing the complete collection of the issues of Sphinx and of Scripta Mathematica,
to its librarians for their diligent competence,
to Kira Kutscher and Lukas Fesser for their help elucidating Alfred Moessner's
particulars,
to the anonymous reviewers for their time and comments,
and
to Annette Bieniusa, Markus Degen, and Stefan Wehr for their editorship.

This article is dedicated to Peter Thiemann
in friendly homage for his lifetime of scholarship, academic dedication, and scientific contributions.

\clearpage

\bibliography{../../mybib.bib}
\bibliographystyle{eptcs}

\appendix

\section{Content of the accompanying .scm file}

The accompanying \texttt{.scm} file contains an implementation in Scheme of the whole article
together with a comprehensive collection of tests.


%

\section{Content of the accompanying .v file}

The accompanying \texttt{.v} file contains a formalization in Coq of part of
the executive summary,
\sectionref{sec:the-essence-of-moessner-s-theorem},
\sectionref{sec:the-essence-of-moessner-s-process},
and \sectionref{sec:related-work-about-nested-sums}.


%

%
%
%
%
%
%
%
%
%
%
%
%
%
%
%
%
%
%
%
%
%
%
%
%
%
%
%
%
%

%

\section{A Curiosa About Polygonal Numbers}
\label{app:a-curiosa-about-polygonal-numbers}


\subsection{Introduction}

Polygonal numbers are classical, textbook material about the positive
integers whose successor constructors in base 1 (Peano numbers) can be
laid down on the plane in a way that represents a polygon:
a triangle (the numbers are triangular: 1, 3, 6, 10, 15, \ldots),
a square (the numbers are square: 1, 4, 9, 16, 25, \ldots),
a pentagon (the numbers are pentagonal: 1, 5, 12, 22, 35, \ldots),
etc.
They
are known since the Greeks~\cite{Dixon:20} and are a
thoroughly explored topic with many beautiful
properties~\cite{Conway-Guy:TBON96-the-polygonal-numbers, Deza-Deza:12}.
For all positive natural numbers $n$,
$\frac{n \cdot (n + 1)}{2}$ denotes a triangular number,
$n^2$ denotes a square number,
$\frac{n \cdot (3 \cdot n - 1)}{2}$ denotes a pentagonal number,
$2 \cdot n^2 - n$ denotes an hexagonal number,
and more generally,
$\sum_{i=0}^n (k \cdot i + 1)$ denotes a $k + 2$nd-order polygonal number,
which can also be expressed using triangular numbers of order $n$,
for all positive natural numbers $k$.


\subsection{The Curiosa}

In the course of \sectionref{subsec:moessner-s-process-streamlessly},
page~\pageref{page:the-origin-of-the-curiosa-about-polygonal-numbers},
the following sum pops up (where $n$ is $i_1$, $i$ is $i_2$, and $i_1$ is
quantified in a sum, not with $\lambda$):
$$
   \elam{n}{\sum_{i=0}^{2 \cdot n}\, \frac{i}{2}}
   :
   \mathbb{N} \rightarrow \mathbb{N}
$$

\noindent
As it happens, this function is enumerated by the stream of square numbers,
\ie, $\estreamfive{0}{1}{4}{9}{16}$.
So it is equivalent to $\elam{n}{n^2}$.
In words, summing twice as many successive halves yields a square number:

\begin{theorem}[Square numbers as bounded summations of integer quotients] \ \\
$\forall n \in \mathbb{N},
 \sum_{i=0}^{2 \cdot n} \, \equofloor{i}{2}
 =
 n^2$
\end{theorem}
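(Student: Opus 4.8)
The plan is to prove the identity by straightforward induction on $n$, exactly in the spirit of the routine inductions used earlier for factorial numbers and powers. The base case $n = 0$ is immediate: the sum consists of the single term $\equofloor{0}{2} = 0 = 0^2$.

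For the inductive step, I would assume $\sum_{i=0}^{2 \cdot n} \equofloor{i}{2} = n^2$ and compute the sum for $n + 1$ by splitting off the two terms that the upper bound $2 \cdot (n+1) = 2 \cdot n + 2$ adds over $2 \cdot n$:
\[
  \sum_{i=0}^{2 \cdot n + 2} \equofloor{i}{2}
  \;=\;
  \sum_{i=0}^{2 \cdot n} \equofloor{i}{2}
  \;+\; \equofloor{2 \cdot n + 1}{2}
  \;+\; \equofloor{2 \cdot n + 2}{2}.
\]
The only point that requires care is evaluating the two new floors in integer arithmetic, and this is precisely what Property~\ref{prop:basic-arithmetic} delivers with $j = 1$: since $\emod{(2 \cdot n)}{2} = 0 < 1$, we get $\equofloor{2 \cdot n + 1}{2} = \equofloor{2 \cdot n}{2} = n$; and since $\emod{(2 \cdot n + 1)}{2} = 1$, we get $\equofloor{2 \cdot n + 2}{2} = \equofloor{2 \cdot n + 1}{2} + 1 = n + 1$. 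Substituting the induction hypothesis then yields $n^2 + n + (n + 1) = (n + 1)^2$, closing the induction.

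An alternative, non-inductive route is to observe that the summands, read in order, are $0, 0, 1, 1, \ldots, n-1, n-1, n$: every value $k$ with $0 \le k \le n - 1$ occurs exactly twice (at the even index $2 \cdot k$ and the odd index $2 \cdot k + 1$), while the final value $n$ occurs once. Grouping accordingly gives $2 \cdot \sum_{k=0}^{n-1} k + n = n \cdot (n - 1) + n = n^2$, using the classical closed form for $\sum_{k=0}^{n-1} k$. This pairing argument is perhaps more illuminating, since it exposes directly why summing twice as many successive halves recovers a square.

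I expect the main (and only) obstacle to be bookkeeping the integer-division floors at the parity boundary, namely recognizing that $\equofloor{2 \cdot n + 1}{2}$ equals $n$ and not $n + 1$. Once Property~\ref{prop:basic-arithmetic} is invoked (or, in the pairing argument, once the even and odd indices are separated), the remaining steps are purely the arithmetic of $(n + 1)^2 = n^2 + 2 \cdot n + 1$ and are entirely routine.
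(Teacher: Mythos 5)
Your induction is exactly the paper's proof: the same base case, the same splitting of the last two terms $\equofloor{2 \cdot n + 1}{2} = n$ and $\equofloor{2 \cdot n + 2}{2} = n + 1$, and the same arithmetic $n^2 + n + (n + 1) = (n + 1)^2$. The alternative pairing argument (each value $k < n$ contributed twice, $n$ once, giving $n \cdot (n-1) + n = n^2$) is a correct and pleasant bonus, but the primary route matches the paper verbatim.
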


\begin{proof}
By induction on $n$.

\noindent
Base case ($n = 0$):
We want to show $\sum_{i=0}^{2 \cdot 0} \, \equofloor{i}{2} = 0^2$.

\noindent
$\sum_{i=0}^{2 \cdot 0} \, \equofloor{i}{2}
 =
 \sum_{i=0}^0 \, \equofloor{i}{2}
 =
 \equofloor{0}{2}
 =
 0
 =
 0^2$

\noindent
Induction step ($n = n' + 1$):
Assuming $\sum_{i=0}^{2 \cdot n'} \, \equofloor{i}{2} = n'^2$,
we want to show $\sum_{i=0}^{2 \cdot (n' + 1)} \, \equofloor{i}{2} = (n' + 1)^2$.

\noindent
$\sum_{i=0}^{2 \cdot (n' + 1)} \, \equofloor{i}{2}
 =
 \sum_{i=0}^{2 \cdot n' + 2} \, \equofloor{i}{2}
 =
 \sum_{i=0}^{2 \cdot n'} \, \equofloor{i}{2} + \sum_{i=2 \cdot n' + 1}^{2 \cdot n' + 2} \, \equofloor{i}{2}
 =
 n'^2 + \equofloor{2 \cdot n' + 1}{2} + \equofloor{2 \cdot n' + 2}{2}
 =
 n'^2 + n' + (n' + 1)
 =
 n'^2 + 2 \cdot n' + 1
 =
 (n' + 1)^2$
\end{proof}

Curiosity compels one to look at
$$
   \elam{n}{\sum_{i=0}^{3 \cdot n}\, \frac{i}{3}}
   :
   \mathbb{N} \rightarrow \mathbb{N}
$$

\noindent
which -- via the On-Line Encyclopedia of Integer Sequences -- is
enumerated by the stream of pentagonal numbers~\cite{OEIS:A000326},
\ie, $\estreamsix{0}{1}{5}{12}{22}{35}$.
So it is equivalent to $\elam{n}{\frac{n \cdot (3 \cdot n - 1)}{2}}$.
In words, summing three times as many successive thirds yields a pentagonal number:

\begin{theorem}[Pentagonal numbers as bounded summations of integer quotients] \ \\
$\forall n \in \mathbb{N},
 \sum_{i=0}^{3 \cdot n} \, \equofloor{i}{3}
 =
 \frac{n \cdot (3 \cdot n - 1)}{2}$
\end{theorem}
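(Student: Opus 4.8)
The plan is to prove this by induction on $n$, in exact parallel with the preceding theorem for square numbers but with a period of $3$ in place of $2$. The reason the statement holds is structurally the same: across each block of three consecutive indices the summand $\equofloor{i}{3}$ stays constant twice and then jumps by one, so peeling off one such block is precisely what the inductive step needs. The base case $n = 0$ is immediate, since $\sum_{i=0}^{3 \cdot 0} \equofloor{i}{3} = \equofloor{0}{3} = 0 = \frac{0 \cdot (3 \cdot 0 - 1)}{2}$.

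For the induction step I would write $n = n' + 1$, assume $\sum_{i=0}^{3 \cdot n'} \equofloor{i}{3} = \frac{n' \cdot (3 \cdot n' - 1)}{2}$, and split the target sum at $3 \cdot n'$:
$$
\sum_{i=0}^{3 \cdot (n' + 1)} \equofloor{i}{3}
=
\sum_{i=0}^{3 \cdot n'} \equofloor{i}{3}
+
\sum_{i = 3 \cdot n' + 1}^{3 \cdot n' + 3} \equofloor{i}{3}.
$$
The key local computation is that the three freshly added floor terms are $\equofloor{3 \cdot n' + 1}{3} = n'$, $\equofloor{3 \cdot n' + 2}{3} = n'$, and $\equofloor{3 \cdot n' + 3}{3} = n' + 1$, so together they contribute $3 \cdot n' + 1$. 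That exactly one of these three indices triggers an increment --- namely the multiple of $3$ --- is the instance of Property~\ref{prop:basic-arithmetic} for $j = 2$.

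The only step that wants a moment's care --- and it is hardly an obstacle --- is the closing arithmetic: adding $3 \cdot n' + 1$ to the induction hypothesis produces $\frac{3 \cdot n'^2 + 5 \cdot n' + 2}{2}$, and I would check that this numerator factors as $(n' + 1) \cdot (3 \cdot n' + 2)$, so that the total equals $\frac{(n' + 1) \cdot (3 \cdot (n' + 1) - 1)}{2}$, closing the induction. No auxiliary lemma is required; the whole derivation is a routine period-$3$ analogue of the square-number proof just given, and the same template would extend to the general $k$-th summation of integer quotients.
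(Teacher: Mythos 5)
Your proof is correct, and it is exactly the approach the paper intends: the paper proves only the square-number case by induction (peeling off the last block of indices) and leaves the pentagonal case as the evident period-$3$ analogue, which you carry out faithfully, including the correct block contribution $n' + n' + (n'+1) = 3 \cdot n' + 1$ and the factorization $3 \cdot n'^2 + 5 \cdot n' + 2 = (n' + 1) \cdot (3 \cdot n' + 2)$. (The paper also sketches a second, more general route via the double-summation identities in the polygonal-number theorem, but for this specific statement your direct induction is the natural and sufficient argument.)
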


As for
$$
   \elam{n}{\sum_{i=0}^{4 \cdot n}\, \frac{i}{4}}
   :
   \mathbb{N} \rightarrow \mathbb{N},
$$

\noindent
it is enumerated by the stream of hexagonal numbers~\cite{OEIS:A000384},
\ie, $\estreamsix{0}{1}{6}{15}{28}{45}$.
So it is equivalent to $\elam{n}{2 \cdot n^2 - n}$.
In words, summing four times as many successive quarters yields an hexagonal number:

\begin{theorem}[Hexagonal numbers as bounded summations of integer quotients] \ \\
$\forall n \in \mathbb{N},
 \sum_{i=0}^{4 \cdot n} \, \equofloor{i}{4}
 =
 2 \cdot n^2 - n$
\end{theorem}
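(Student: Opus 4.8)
The plan is to prove the identity by a direct counting argument, grouping the $4 \cdot n + 1$ summands according to the common value of the integer quotient $\equofloor{i}{4}$, rather than by the one-step induction used above for square numbers (though that induction carries over verbatim). The observation that licenses the grouping is exactly Property~\ref{prop:basic-arithmetic} specialized to $j = 3$: incrementing $i$ leaves $\equofloor{i}{4}$ unchanged whenever $\emod{i}{4} < 3$ and raises it by exactly $1$ when $\emod{i}{4} = 3$. Hence $\equofloor{i}{4}$ is constant on each block of four consecutive indices and jumps only as $i$ crosses a multiple of $4$, so the range $[0\;..\;4 \cdot n]$ decomposes into $n$ complete blocks of four together with the single top index $i = 4 \cdot n$.

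First I would record that, for each $j \in [0\;..\;n - 1]$, the four indices $4 \cdot j,\, 4 \cdot j + 1,\, 4 \cdot j + 2,\, 4 \cdot j + 3$ all have quotient $j$ and thus contribute $4 \cdot j$; the lone remaining index $i = 4 \cdot n$ contributes $\equofloor{4 \cdot n}{4} = n$. Summing the blocks and the leftover term then gives the result:
\begin{eqnarray*}
\sum_{i=0}^{4 \cdot n} \equofloor{i}{4}
& = &
\sum_{j=0}^{n - 1} 4 \cdot j \; + \; n
\;\; = \;\;
4 \cdot \frac{(n - 1) \cdot n}{2} + n
\;\; = \;\;
2 \cdot n^2 - 2 \cdot n + n
\;\; = \;\;
2 \cdot n^2 - n.
\end{eqnarray*}
Alternatively, one can run the routine induction on $n$ exactly as for the square-number theorem above: the base case $n = 0$ is immediate, and in the step one peels off the four new terms $i = 4 \cdot n' + 1, \ldots, 4 \cdot n' + 4$, whose quotients are $n', n', n', n' + 1$ and therefore sum to $4 \cdot n' + 1$, turning the induction hypothesis $2 \cdot n'^2 - n'$ into $2 \cdot n'^2 + 3 \cdot n' + 1 = 2 \cdot (n' + 1)^2 - (n' + 1)$.

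There is no conceptual difficulty in either route; the only place that demands care is the bookkeeping at the upper boundary. One must confirm that the inclusive range $[0\;..\;4 \cdot n]$ splits into precisely $n$ full blocks of four \emph{plus} the single index $4 \cdot n$ (and not into $n + 1$ blocks), so that the final contribution $n$ is counted once and only once. This same attention to the half-open nature of the elision period is what makes the analogous pentagonal claim (divisor $3$, value $\frac{n \cdot (3 \cdot n - 1)}{2}$) and, more generally, the identity $\sum_{i=0}^{k \cdot n} \equofloor{i}{k} = \frac{k \cdot n^2 - (k - 2) \cdot n}{2}$ for every positive $k$ fall out of the very same grouping argument.
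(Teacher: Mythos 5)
Your proof is correct. The paper only writes out an explicit proof for the analogous square-number theorem (divisor $2$), by a one-step induction on $n$, and leaves the triangular, pentagonal, and hexagonal cases as instances of the same pattern; your second route --- peeling off the four indices $4 \cdot n' + 1, \ldots, 4 \cdot n' + 4$ with quotients $n', n', n', n' + 1$ summing to $4 \cdot n' + 1$ --- is exactly that induction transposed to divisor $4$, and the arithmetic checks out. Your primary route is genuinely different: grouping the $4 \cdot n + 1$ summands into $n$ blocks of four constant-quotient indices plus the lone top index $4 \cdot n$ is in effect the specialization to $f = \elam{i}{\equofloor{i}{4}}$ of the double-sum identity $\sum_{i=0}^{x \cdot (y + 1) + y}\eapp{f}{i} = \sum_{i=0}^x\,\sum_{j=0}^y\,\eappq{f}{i \cdot (y + 1) + j}$ that the paper reserves for the \emph{general} polygonal theorem at the end of the appendix, so you have anticipated the uniform argument and read off the $k = 4$ case; the closed form $\frac{k \cdot n^2 - (k - 2) \cdot n}{2}$ you extract agrees with all four special cases the paper states ($k = 1, 2, 3, 4$). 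Your caution about the boundary is well placed and correctly resolved: $[0\;..\;4 \cdot n]$ contains $4 \cdot n + 1$ indices, which is $n$ complete blocks plus one leftover, so the final contribution $n$ is counted exactly once. What the induction buys is a mechanical, formalization-friendly proof (the paper notes the characterization is simple to formalize in Coq); what your grouping buys is an explanation of \emph{why} the identity holds, together with a free generalization to arbitrary positive $k$.
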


And since
$$
   \elam{n}{\sum_{i=0}^{1 \cdot n}\, \frac{i}{1}}
   =
   \elam{n}{\sum_{i=0}^n\, i}
   :
   \mathbb{N} \rightarrow \mathbb{N}
$$

\noindent
is enumerated by the stream of triangular numbers~\cite{OEIS:A000270},
\ie, $\estreamsix{0}{1}{3}{6}{10}{15}$,
it is equivalent to $\elam{n}{\frac{n \cdot (n + 1)}{2}}$.
In words, summing the first successive natural numbers yields a
triangular number:

\begin{theorem}[Triangular numbers as bounded summations of natural numbers] \ \\
$\forall n : \mathbb{N},
 \sum_{i=0}^{1 \cdot n} \equofloor{i}{1}
 =
 \frac{n \cdot (n + 1)}{2}$
\end{theorem}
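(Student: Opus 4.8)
The plan is to proceed by induction on $n$, mirroring the proof of the square-numbers theorem given just above. The first thing I would observe is that both the summand and the upper bound degenerate: since $\equofloor{i}{1} = i$ and $1 \cdot n = n$, the left-hand side is simply $\sum_{i=0}^n i$, so the claim reduces to the classical Gauss formula $\sum_{i=0}^n i = \frac{n \cdot (n + 1)}{2}$. No floor reasoning and no appeal to Property~\ref{prop:basic-arithmetic} is needed here, unlike in the square, pentagonal, and hexagonal cases where the integer quotient genuinely discards information.

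For the base case $n = 0$ I would compute $\sum_{i=0}^0 i = 0 = \frac{0 \cdot (0 + 1)}{2}$. For the induction step, writing $n = n' + 1$ and assuming $\sum_{i=0}^{n'} i = \frac{n' \cdot (n' + 1)}{2}$, I would split off the last term as $\sum_{i=0}^{n' + 1} i = \sum_{i=0}^{n'} i + (n' + 1)$, substitute the induction hypothesis, and combine over the common denominator $2$ to reach $\frac{n' \cdot (n' + 1) + 2 \cdot (n' + 1)}{2} = \frac{(n' + 1) \cdot (n' + 2)}{2}$, which is exactly the desired right-hand side with $n$ replaced by $n' + 1$.

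There is no real obstacle here: this is the mildest of the four companion theorems, since dividing by $1$ is the identity and the $2 \cdot n$, $3 \cdot n$, $4 \cdot n$ pattern of upper bounds collapses to plain $n$. The only step worth stating carefully is the final recombination of the two fractions, which is routine. If one preferred to sidestep even the trivial floor, one could alternatively derive the result as the instance $n = 1$ of \lemmaref{lem:summing-binomial-coefficients}, namely $\sum_{i=0}^x \binom{i + 1}{1} = \binom{x + 2}{2}$, together with a short reindexing (subtracting the $\sum_{i=0}^x 1 = x + 1$ coming from the $+1$ in $\binom{i+1}{1} = i + 1$); but the direct induction is cleaner and entirely self-contained.
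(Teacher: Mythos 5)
Your proposal is correct and matches the paper's approach: the paper proves the companion square-numbers theorem by induction on $n$ with exactly this split-off-the-last-term step, and the triangular case (stated in the paper without a written-out proof) reduces, as you observe, to the Gauss sum since $\equofloor{i}{1} = i$ and $1 \cdot n = n$. Your base case, induction step, and recombination over the denominator $2$ are all sound.
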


These observations lead one to the following theorem:

\begin{theorem}[Polygonal numbers as bounded summations of increasing quotients (June 2023)]
For any given positive integer $k$ and for all natural numbers $n$,
$\sum_{i=0}^{k \cdot (n + 1)}\, \equofloor{i}{k}$
is a $k + 2$nd polygonal number.
Put otherwise,
$\sum_{i=0}^{k \cdot (n + 1)}\, \equofloor{i}{k}
 =
 \sum_{i=0}^n (k \cdot i + 1)
=
k \cdot \sum_{i=0}^n i + n + 1
$
in traditional summative form and in terms of triangular numbers.
\end{theorem}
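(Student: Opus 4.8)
The plan is to reduce the statement to its ``put otherwise'' form and establish the single nontrivial equality $\sum_{i=0}^{k \cdot (n+1)}\, \equofloor{i}{k} = k \cdot \sum_{i=0}^n i + n + 1$. The remaining equality $\sum_{i=0}^n (k \cdot i + 1) = k \cdot \sum_{i=0}^n i + n + 1$ is immediate by linearity of summation, and the assertion that the result is a $k+2$nd polygonal number is then exactly the characterization of $\sum_{i=0}^n (k \cdot i + 1)$ recalled at the start of this appendix. So all the content lies in the first displayed equality.

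I would establish that equality by regrouping the summands on the left according to the value of the quotient. As $i$ runs over $0, 1, \ldots, k \cdot (n+1)$, the quotient $\equofloor{i}{k}$ equals a fixed $j$ precisely on the block of $k$ consecutive integers $i \in \{k \cdot j, \ldots, k \cdot j + k - 1\}$, for each $j \in \{0, \ldots, n\}$, and then equals $n+1$ once more at the endpoint $i = k \cdot (n+1)$. Summing block by block gives $\sum_{j=0}^{n} k \cdot j + (n+1)$, which is $k \cdot \sum_{i=0}^n i + n + 1$ as desired. This regrouping is the heart of the matter and makes transparent why the two summative forms agree: each quotient level contributes $k$ equal copies except the topmost, which lands exactly at the upper limit and so contributes just one.

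Alternatively, and in keeping with the template of the preceding theorems in this appendix, I would prove the equality by induction on $n$. The base case $n = 0$ is $\sum_{i=0}^{k}\, \equofloor{i}{k} = 1$, since every summand vanishes except $\equofloor{k}{k} = 1$, matching the right-hand side $k \cdot 0 + 0 + 1$. In the step from $n$ to $n+1$ the upper limit grows from $k \cdot (n+1)$ to $k \cdot (n+2)$, adding the $k$ terms indexed by $i = k \cdot (n+1)+1, \ldots, k \cdot (n+2)$; of these the first $k-1$ have quotient $n+1$ and the last has quotient $n+2$, so their sum is $(k-1)(n+1) + (n+2) = k \cdot (n+1) + 1$, exactly the increment predicted by the right-hand side.

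The one point that requires care is the asymmetric treatment of the endpoint: the levels $0, \ldots, n$ each furnish a full block of $k$ equal summands, whereas the top level $n+1$ furnishes a single summand because the upper limit $k \cdot (n+1)$ coincides with the start of its block rather than its end. Getting this boundary bookkeeping right, equivalently splitting the $k$ newly added terms in the induction into $k-1$ copies of $n+1$ and one copy of $n+2$, is the only place where a careless count would break the identity; everything else is routine.
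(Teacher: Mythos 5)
Your proposal is correct, and your main argument -- peeling off the endpoint term $i = k\cdot(n+1)$ and regrouping the remaining indices into $n+1$ blocks of $k$ consecutive integers on which the quotient is constant -- is exactly the decomposition the paper encodes in its two auxiliary summation identities (with $k = k'+1$, $x = n$, $y = k'$). Your boundary bookkeeping (full blocks for levels $0,\ldots,n$, a single term for level $n+1$) and your alternative induction on $n$ are both sound, so there is nothing to fix.
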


\noindent
To prove the first equality, the following
pair of
identities comes handy:
$\forall x : \mathbb{N},
 \forall y : \mathbb{N},
 \forall f : \mathbb{N} \rightarrow \mathbb{N} \rightarrow \mathbb{N}$,
$$
\sum_{i=0}^{(x + 1) \cdot (y + 1)}\eapp{f}{i}
=
\sum_{i=0}^{x \cdot (y + 1) + y}\eapp{f}{i} \:\;+\;\, \eappq{f}{(x + 1) \cdot (y + 1)}
\;\;\;\;\;\wedge\;\;\;\;\;
\sum_{i=0}^{x \cdot (y + 1) + y}\eapp{f}{i}
=
\sum_{i=0}^x\,\sum_{j=0}^y\,\eappq{f}{i \cdot (y + 1) + j}
$$

\noindent
(To start the proof, observe that since $k$ is a positive natural number,
there exists a natural number $k'$ such that $k = k' + 1$,
and then consider $\sum_{i=0}^{(n + 1) \cdot (k' + 1)}\, \equofloor{i}{k' + 1}$.
Extra hint: $\sum_{j=0}^{k'}\,\equofloor{i \cdot (k' + 1) + j}{k' + 1} = \sum_{j=0}^{k'}\,i$.)

In words, polygonal numbers can be characterized as bounded summations of increasing quotients of the polygonal order, minus 2.
In each sum, $k$ occurs both in the upper bound of the summation as a multiplier and in its body as a divisor.

Now what does this characterization buy us?
In the author's admittedly limited experience, not terribly much:
It is simple to formalize in Coq and it is well suited to reason, \eg, about the difference between two polygonal numbers,
but otherwise it does not appear to provide an insightful new edge to prove other properties,
nor an ingenious new edge to compute polygonal numbers more efficiently.


\subsection{Conclusion}

Characterizing polygonal numbers as sums of increasing quotients (twice
as many successive halves, three times as many successive thirds, etc.)
is aligned with the additive concerns of the Greeks, thought-provokingly
simple, and pleasingly uniform -- perhaps because its nature is
computational.
Still, this characterization does not propel number theory forward
nor does it provide a more efficient way of computing polygonal numbers.
At any rate, for all its simplicity, this characterization is new,
which is unexpected considering that polygonal numbers date back to Pythagoras.
As such, it makes a fun Curiosa in memory of Alfred Moessner.

\end{document}